\newcommand{\maxim}[1]{\textsc{#1}}
\numberwithin{equation}{section} 
\newtheorem{thm}{Theorem}[section]
\newtheorem{rem}[thm]{Remark}
\newtheorem{theorem}[thm]{\textbf{Theorem}}
 \newtheorem{prop}[thm]{Proposition}
 \newtheorem{definition}[thm]{Definition}
 \newtheorem{example}[thm]{Example}
\renewcommand{\tilde}{\widetilde}
\renewcommand{\hat}{\widehat}
\newcommand{\bref}[1]{\textbf{\ref{#1}}}
\newcommand{\p}[1]{|#1|}
\newcommand{\gh}[1]{\mathrm{gh}(#1)}
\newcommand{\dx}{\mathrm{d}_X}
\newcommand{\dy}{\mathrm{d}_Y}
\renewcommand{\d}{\partial}
\renewcommand{\dh}{\mathrm{d_h}}
\newcommand{\cF}{\mathcal{F}}
\newcommand{\tensor}{\otimes}
\renewcommand{\geq}{\,{\geqslant}\,}
\newcommand{\inner}[2]{\langle #1{,}\,#2\rangle}
\newcommand{\binner}[2]{%
  {\langle}\kern-4.15pt{\langle}#1{,}\,#2{\rangle}\kern-4.15pt{\rangle}}
\newcommand{\commut}[2]{[#1{,}\,#2]}
\newcommand{\qcommut}[2]{[#1{,}\,#2]_*}
\newcommand{\half}{\mathchoice{%
    \ffrac{1}{2}}{\frac{1}{2}}{\frac{1}{2}}{\frac{1}{2}}}
\newcommand{\ffrac}[2]{\raisebox{.5pt}%
  {\footnotesize$\displaystyle\frac{#1}{#2}$}\kern1pt}
\newcommand{\dl}[1]{\mathchoice{\ffrac{\d}{\d #1}}{\frac{\d}{\d #1}}{\ffrac{\d}{\d #1}}{\ffrac{\d}{\d #1}}}
\newcommand{\st}[2]{{\overset{#1}{#2}}}
\newcommand{\ddl}[2]{\ffrac{\d #1}{\d #2}}
\newcommand{\Liealg}{\mathfrak} 
\newcommand{\algg}{\Liealg{g}}
\newcommand{\algp}{\Liealg{p}}
\newcommand{\algh}{\Liealg{h}}
\newcommand{\algA}{\mathcal{A}}
\newcommand{\cC}{\mathcal{C}}
\newcommand{\fC}{\mathbb{C}}
\newcommand{\fR}{\mathbb{R}}
\newcommand{\fZ}{\mathbb{Z}}
 \def\cE{\mathcal{E}}
 \def\cI{\mathcal{I}}
\def\cK{\mathcal{K}}
 \def\cL{\mathcal{L}}
 \def\cX{\mathcal{X}}
\newcommand\blfootnote[1]{%
  \begingroup
  \renewcommand\thefootnote{}\footnote{#1}%
  \addtocounter{footnote}{-1}%
  \endgroup
}
\newcommand{\dg}{\mathrm{d}_\algg}
\newcommand{\comm}[1]{}
\newcommand{\lorcon}{w}
\author{Ivan Dneprov~}
\author{~Maxim Grigoriev${}^{\dagger}$}
\affil{\textsl{Service de Physique de l'Univers, Champs et Gravitation, \protect\\ Universit\'e de Mons, 20 place du Parc, 7000 Mons, 
Belgium \vspace{5pt}}}
\date{}
\begin{document}

\title{Background fields in the presymplectic BV-AKSZ approach}
\maketitle

\begin{abstract}
The Batalin-Vilkovisky formulation of a general local gauge theory can be encoded in the structure of a so-called presymplectic gauge PDE -- an almost-$Q$ bundle over the spacetime exterior algebra, equipped with a compatible presymplectic structure. In the case of a trivial bundle and an invertible presymplectic structure, this reduces to the well-known AKSZ sigma model construction. We develop an extension of the presympletic BV-AKSZ approach to describe local gauge theories with background fields. It turns out that such theories correspond to presymplectic gauge PDEs whose base spaces are again gauge PDEs describing background fields. As such, the geometric structure is that of a bundle over a bundle over a given spacetime. Gauge PDEs over backgrounds arise naturally when studying linearisation, coupling (gauge) fields to background geometry, gauging global symmetries, etc. Less obvious examples involve parametrised systems, Fedosov equations, and the so-called homogeneous (presymplectic) gauge PDEs. The latter are the gauge-invariant generalisations of the familiar homogeneous PDEs and they provide a very concise description of gauge fields on homogeneous spaces such as higher spin gauge fields on Minkowski, (A)dS, and conformal spaces. Finally, we briefly discuss how the higher-form symmetries and their gauging fit into the framework using the simplest example of the Maxwell field.

\end{abstract}

\vfill

\blfootnote{Supported by the ULYSSE Incentive
Grant for Mobility in Scientific Research [MISU] F.6003.24, F.R.S.-FNRS, Belgium.}

\blfootnote{${}^{\dagger}$ Also at Lebedev Physical Institute and Institute for Theoretical and Mathematical Physics, Lomonosov MSU, Moscow, Russia}

\newpage
\tableofcontents

\section{Introduction}

In applications one often encounters gauge field theories where some of the fields are not dynamical but still it is useful to keep them at the equal footing with the others and not to set them to a fixed configuration from the outset. The typical example of a background field is the background metric which serves as a background for a genuine dynamical (gauge) field or a collection of fields. It can be also useful to artificially decompose the fields into the background and the perturbation as is done in the background field method~\cite{Abbott1982}.  Models with background fields typically arise when one studies coupling of (gauge) fields to background geometry, gauging global symmetries, or expanding a given theory around generic configurations. 

A systematic and first-principle method to study general gauge theories is the Batalin-Vilkovisky (BV) formalism~\cite{Batalin:1981jr,Batalin:1983wj}. In the case of local gauge theories considered at the level of equations of motion,  a modern and rather flexible extension of BV that allows to maintain manifest locality is the BV-AKSZ approach based on representing a given gauge theory in terms of a so-called gauge PDE~\cite{Grigoriev:2019ojp,Barnich:2010sw,Barnich:2004cr}. Gauge PDE (gPDE) is a $\fZ$-graded $Q$-bundle~\cite{Kotov:2007nr} over the shifted tangent bundle over the spacetime. When the underlying bundle is globally trivial and finite-dimensional, the corresponding gauge PDE reduces to the non-Lagrangian version~\cite{Barnich:2006hbb} of the conventional AKSZ construction~\cite{Alexandrov:1995kv} for topological theories. Another extreme case is when fibres do not have coordinates of nonvanisging degree and the notion reduces to the usual PDE defined intrinsically in terms of the infinitely-prolonged equation manifold equipped with the Cartan distribution~\cite{Vinogradov:1977,Krasil?shchik-Lychagin-Vinogradov}. 
If the system is diffeomorphism-invariant and negative degree variables are not present, the gPDE equations of motion have the form a free differential algebra~\cite{Sullivan:1977fk} so that gPDE approach can also be seen as a generalisation of the so-called unfolded formulation~\cite{Vasiliev:1980as,Lopatin:1987hz,Vasiliev:2005zu}. A gauge PDE encodes the full-scale BV formulation of the underlying gauge theory. In particular the usual jet-bundle BV formulation, see e.g.~\cite{Barnich:1995ap,Barnich:2000zw}, of the underlying theory is contained in the super-jet bundle of the corresponding gPDE.  

As far as Lagrangian theories are concerned, the additional structure appears to be a compatible presymplectic structure, giving a notion of the presymplectic gauge PDE. An interesting feature is that in the presymplectic case one can relax $Q^2=0$ condition by requiring $Q^2$ to lie in the kernel distribution of the presymplectic structure~\cite{Grigoriev:2022zlq,Dneprov:2024cvt}, see also~\cite{Alkalaev:2013hta,Grigoriev:2016wmk,Grigoriev:2020xec}. This allows one to encode nontopological gauge theories in terms of finite-dimensional presymplectic gPDEs, leading to a rather concise formulations of various models.\footnote{An alternative modification of the AKSZ construction necessary to cover non-topological theories is based on replacing the base space exterior algebra  with a more general differential graded commutative algebra~\cite{Bonechi:2009kx,costello2011renormalization,Bonechi:2022aji}. Another possibility discussed in the literature has to do with introducing extra structures in the target space~\cite{Arvanitakis:2025nyy}.}

In this work we systematically extend the (presymplectic) gauge PDE approach to the case of theories with background fields. It turns out that such theories  are naturally described in this approach by (presymplectic) gauge PDEs whose base spaces are themselves gauge PDEs  for background fields, giving the notion of a (presymplectic) gPDE over background. While at the level of equations of motion this extension is rather clear and to some extent is known in the literature~\cite{Shaynkman:2004vu,Ponomarev:2010ji,Barnich:2010sw}, the Lagrangian version is less straightforward.

In the Lagrangian case, the crucial observation is that compatibility between $Q$-structure and the presymplectic structure should hold only modulo the differential ideal generated by forms pulled back from the background gPDE while the presymplectic master equation remains intact. These conditions ensure that the BV-AKSZ master-action satisfies master equation if background fields are subject to their own equations of motion encoded in the background gPDE. Moreover, we demonstrate that gauge symmetries associated to background field are determined by the total $Q$-structure and can be explicitly expressed in terms of $Q$. Note that our approach shares some similarity to the equivariant version of the BV formalism put forward in~\cite{Bonechi:2019dqk,Mnev:2012qd,Bonechi:2022aji,Cattaneo:2024ows}.

There are a few ways in which gauge PDEs over backgrounds emerge naturally. The most standard way is by gauging global symmetries. In the gPDE approach, this amounts to simply adding ghost variables associated with symmetries and extending the $Q$-structure by the corresponding BRST differential. Each new ghost variable gives rise to a 1-form (or higher-form, in case of lower-degree symmetries) gauge field that is interpreted as a background field.  In the Lagrangian setting, when considering symmetries that do not affect the presymplectic structure, this procedure applies in a straightforward way. Nevertheless, the resulting presymplectic gPDE over the background determined by the ghosts may acquire a classical anomaly.

In the more intricate case of space-time symmetries, interesting gPDEs over background that can be regarded as the result of gauging such symmetries are the so-called homogeneous gPDEs over background. If the background solution is fixed, these are generalisations of the conventional homogeneous PDEs which are known to be fully characterised by specifying a homogeneous spacetime $G/H$ and a suitable $\algg=Lie(G)$-space (a typical fiber of the resulting PDE), see e.g.~\cite{Eastwood,cap2009parabolic}. Remarkably, a straightforward generalisation, where the typical fiber is in addition equipped with a $\algg$-invariant $Q$-structure, is known to provide a concise description of a very general class of gauge fields on homogeneous spaces. For instance, generic higher-spin gauge fields in flat, AdS, and conformal spaces admit a concise description in terms of homogeneous gPDEs~\cite{Barnich:2004cr,Barnich:2006pc,Alkalaev:2008gi,Bekaert:2009fg,Alkalaev:2009vm,Bekaert:2013zya,Chekmenev:2015kzf}. The respective gPDE over background is obtained by extending the base space $T[1]X,X=G/H$ into a $Q$-bundle whose typical fiber
is $\algg[1]$ and whose solutions are flat Cartan connections. It turns out that the presymplectic version is even more flexible and requires $F$ to be a $\algg$-space modulo the kernel distribution of the presymplectic structure only. Homogeneous gPDEs can be used as a starting point for studying the coupling of (gauge) fields to a curved (in the sense of Cartan) geometry. 

The paper is organized as follows. Section~\bref{sec:prelim} contains definitions, basic properties, and simplest examples of (presymplectic) gPDEs. There we also recall the notion of weak gPDEs~\cite{Grigoriev:2024ncm} which are nonlagrangian counterparts of presymplectic gPDEs where $Q^2$ is allowed to belong to an involutive distribution, giving a possibility to describe non-topological systems in terms of finite-dimensional bundles. In Section~\bref{sec:bgpde} we introduce and study non-Lagrangian systems with background fields. In particular, we introduce a notion of gPDE over background and show how such objects naturally arise from gauging global symmetries and linearizing about generic configurations. Homogeneous gPDEs over background are also introduced there and the example of Fronsdal theory of massless higher-spin fields is given. Section~\bref{sec:presymp-bgpde} is devoted to Lagrangian theories with background fields in the presymplectic BV-AKSZ approach. After defining presymplectc gPDEs over background we demonstrate that their background symmetries are controlled by the total $Q$-structure which determine gauge symmetries of the total AKSZ-like action involving background fields. Just like in the non-Lagrangian setup, background system naturally arise from linearization and gauging global symmetries. However,
the gauging is only straightforward in the case where the initial global symmetries do not affect the presymplectic structure. In the case where they do, we limit our discussion to homogeneous presymplectic gPDEs which are to be understood as a result of gauging spacetime symmetries.



\section{Preliminaries}\label{sec:prelim}

\subsection{(Weak) $Q$-manifolds}
In this work we mostly use the language of graded geometry. More precisely, the basic objects are graded (super)manifolds. The $\fZ$-degree determines the 
cohomological degree of the underlying BRST complexes and is referred to as the \textit{ghost degree}. In general, if the system under consideration involves physical anticommuting variables there is an additional fermionic degree denoted by $\epsilon(\cdot)$
 so that the total Grassmann parity of a homogeneous function $f$ is given by $\p{f}=\gh{f}+\epsilon(f)$
 and the supercommutativity relation reads as $fg=(-1)^{\p{f}\p{g}}gf$.
To simplify the exposition, we assume that the fermionic degree is trivial. It can always be reinstated if fermions are present. Unless otherwise specified, by vector fields, forms, bundles and so on we mean the corresponding objects in the graded geometry setup.

Before passing to geometrical structure underlying field theories with background fields let us recall several important building blocks.
\begin{definition}
An almost $Q$-manifold is a pair $(M,Q)$, where $M$ is a graded manifold and $Q$ is a ghost degree $1$ vector field on $M$. $(M,Q)$ is called a $Q$-manifold if, in addition, $Q$ is homological, i.e. $Q^2=\half\commut{Q}{Q}=0$.
\end{definition}

$Q$-manifolds can be regarded as gauge systems in the spacetime dimension $0$. Indeed, the zero locus of $Q$ can be interpreted as solutions
while $Q$-exact vector fields define gauge transformations and their higher analogs. This is precisely how $Q$-manifolds emerge in the context of gauge systems.

A typical example of $Q$-manifolds, which we often encounter in this work, originates from a Lie algebra $\algg$ action $\rho$ on a smooth manifold $F$. Let $e_\alpha$ denote a basis in $\algg$ and $V_\alpha=\rho(e_\alpha)$ the fundamental vector field on $F$. Then there is a natural $Q$-structure on $\algg[1]\times F$, where $\algg[1]$ denotes a shifted $\algg$, i.e. an algebra of functions on $\algg[1]$ is simply the exterior algebra of $\algg^*$ or, in other words, functions in degree $1$ coordinates $c^\alpha$ corresponding to basis elements $e_\alpha$. Then the $Q$ structure is given by:
\begin{equation}
\label{CEdiff}
\dg=-\half c^\alpha c^\beta U_{\alpha\beta}^\gamma \dl{c^\gamma}+c^\alpha V_\alpha\,, \qquad \commut{e_\alpha}{e_\beta}=U_{\alpha\beta}^\gamma  e_\gamma\,.
\end{equation}
Functions on $\algg[1]\times F$ can be identified with Chevalley-Eilenberg (CE) cochains with coefficients on $\cC^\infty(F)$ while $\dg$ is the CE differential. 

Another standard example of $Q$ manifold which we use extensively is the shifted tangent bundle  $T[1]X$  of the real smooth manifold $X$. Functions on $T[1]X$ can be identified with differential forms on $X$ and under this identification the de Rham differential on $X$ is sent to a homological vector field $\dx$ defined on $T[1]X$.

In the case of field theory with local degrees of freedom, the underlying $Q$-manifolds typically become infinite-dimensional. However, there is an alternative approach in which a field theory can be described by
the finite-dimensional almost $Q$-manifold where condition $Q^2=0$ is relaxed in a controllable way. This leads to the concept of a weak $Q$ manifold~\cite{Grigoriev:2024ncm}:
\begin{definition}
A weak $Q$-manifold is an almost $Q$-manifold equipped with an involutive distribution $\cK$ such that
\begin{equation}
    L_Q\cK \subseteq \cK\,, \qquad Q^2\in \cK\,.
\end{equation}
\end{definition}
Here and below by distribution on $M$ we mean a finitely generated submodule of vector fields on $M$, seen as a module over $\cC^\infty(M)$. If the submodule happens to be of a constant rank, i.e. it is freely generated locally, we call it regular. If $\algA\subset \cC^\infty(M)$ denotes the subalgebra of functions annihilated by $\cK$ one finds that $Q^2f=0$ for all $f\in\algA$. In particular, if $\cK$ is a vertical distribution of a fiber bundle $M\to N$ then $N$ is naturally a $Q$-manifold. This can be regarded as an implicit way to define $Q$-manifolds. 

In the context of Lagrangian systems one in the first place is interested in symplectic $Q$-manifolds as they underlie Batalin-Vilkovisky formalism and its variations such as AKSZ construction or the Hamitonian Batalin-Fradkin-Vilkovisky approach. The correspoding counterpart of the concept of a weak $Q$-manifold is given by~\cite{Grigoriev:2022zlq} (see also~\cite{Dneprov:2024cvt})
\begin{definition}
A presymplectic $Q$-manifold $(M,Q,\omega)$ is an almost $Q$-manifold equipped with a presymplectic structure $\omega$, such that
\begin{equation}
    L_Q\omega=0\,, \qquad i_Qi_Q\omega=0\,.
\end{equation}
If $\omega$ is symplectic then $Q^2=0$ and hence the notion reduces to that of a symplectic $Q$-manifold. 
\end{definition}
Recall, that presymplectic structure is a closed 2-form which we also assume to have a definite ghost degree.
It is easy to see that taking $\cK$ to be the kernel distribution for $\omega$, one finds that $(M,Q,\cK)$ is a weak $Q$-manifold. Moreover, if $\cK$ originates from the fibration $M\to N$ then $N$ is naturally a symplectic $Q$-manifold, see \cite{Grigoriev:2020xec, Dneprov:2024cvt,Grigoriev:2024ncm} for more details.

\subsection{(Weak) gauge PDEs}

Let us now turn to field theory. If we are interested in local field theory, i.e. equations of motion are PDEs and gauge symmetries are determined by differential operators, then a natural generalization of the description based on $Q$-manifolds can be achieved in terms of \textit{$Q$-bundles}~\cite{Kotov:2007nr} over $T[1]X$, where $X$ is a space-time manifold. This leads to the concept of so-called \textit{gauge PDEs} introduced in~\cite{Grigoriev:2019ojp} (see also \cite{Barnich:2010sw,Barnich:2004cr} for the earlier but less geometrical version of this notion):
\begin{definition}
    A graded fibre bundle $E \xrightarrow{\pi} T[1]X$ equipped with a homological vector field $Q: deg(Q) =1, Q^2=0$ s.t. $Q \circ \pi^* = \pi^* \circ \dx$ is called a \textit{gauge PDE} and is denoted $(E,Q,T[1]X)$.  Sections of $E$ are interpreted as fields and solutions of a gauge PDE $(E,Q,T[1]X)$ are sections satisfying:
    \begin{equation} \label{gPDE-sol}
        \sigma^* \circ Q = \dx \circ \sigma^*.
    \end{equation}
    Gauge transformations are generated by the vector fields of the form $\commut{Q}{Y}$, where the gauge parameter $Y$, $\gh{Y}=-1$ is a projectable  vector field on $E$. The corresponding transformation of $\sigma^*$ can be written as 
    \begin{align}\label{gaugetransf}
        \delta_Y\sigma^{*}={\sigma}^{*}\circ \commut{Q}{Y}-\commut{\dx}{y}\circ \sigma^*\,,
    \end{align}
    where $y\equiv \pi_* Y$ is the projection of $Y$ to $T[1]X$. Gauge for gauge symmetries are defined in an analogous way.
\end{definition}
In addition, it is usually assumed that  $(E,Q,T[1]X)$ is locally equivalent, in the sense of a weak equivalence as introduced in~\cite{Grigoriev:2019ojp}, to a nonnegatively graded $Q$-bundle. Moreover, if $E$ is infinite-dimensional one should also require that it is locally equivalent to a gauge PDE arising from a local BV system. This assumption is needed to exclude nonlocal systems and is a straightforward extension of the analogous condition known in the geometric theory of PDEs. Note that the action of a gauge symmetry on a given section can also be realised in terms of a vertical gauge parameter, at least locally. Nevertheless, it is often convenient to allow for gauge parameters that are not necessarily vertical.

\begin{rem}\label{sol-subman} 
    A solution $\sigma:T[1]X \to E$ can be seen as a submanifold in $E$ such that its projection to $T[1]X$ is a diffeomorphism and $\sigma\subset E$ is a $Q$-submanifold of $E$, i.e. $Q$ is tangent to $\sigma\subset E$. Indeed, if $f\in \cC^\infty(E)$ vanishes on $\sigma$ then  $\sigma^*(f)=0$ and vice versa. Becuase $\sigma$ is a solution $\dx \sigma^*(f)=\sigma^* (Qf)=0$ so that $Qf$ vanishes on $\sigma$ and hence $Q$ is tangent to $\sigma$.
\end{rem}
It is also instructive to see this in terms of local coordinates. Let $x^a,\theta^a,b^i$ be local coordinates on $E$ such that $x^a,\theta^a$ are the adapted coordinates on $T[1]X$ pulled back to $E$. Section $\sigma$ seen as a submanifold in $E$, is the zero locus of the following constraints:
\begin{equation}
    \pi^* \circ \sigma^* (b^i)
 - b^i.
\end{equation}
Applying $Q$ and using $\dx \circ \sigma^*=\sigma^*\circ Q$ one finds that the result is proportional to the constraints. The above remark also applies to generic $Q$-bundles. Namely, $Q$-sections of a  $Q$-bundle  are sections that are $Q$-submanifolds.

\begin{example}
Consider a gPDE of the form $E=(F,q)\times (T[1]X,\dx)$ seen as a bundle over $(T[1]X,\dx)$. This  gives the nonlagrangian version of the AKSZ sigma model. Indeed, if $(F,q)$ is a symplectic $Q$-manifold then this data define an AKSZ sigma model with source $(T[1]X,\dx)$ and target $(M,q,\omega)$. Without symplectic structure this should be naturally regarded as an AKSZ model at the level of equations of motion~\cite{Barnich:2006hbb}.
\end{example}
\begin{example}
Let $E_0\to X$ be a PDE defined in the intrinsic terms. Namely, $E_0$ is the fibre bundle equipped with the involutive horizontal distribution $C\subset TE$ known as Cartan distribution. The algebra of horizontal forms on $E_0$ can be identified with the algebra of functions on the bundle $E\to T[1]X$ which is $E_0$ pulled back to $T[1]X$ by the canonical projection $T[1]X\to X$. Under this identification the horizontal differential $\dh$ defines a $Q$-structure on $E$. It is easy to check that in this way one arrives at gPDE whose solutions coincides with the solutions of the starting point PDE. In other words, usual PDEs can be naturally considered as gPDEs. See~\cite{Grigoriev:2019ojp} for further details.
\end{example}
Note that any gPDE defines a PDE which singles out $Q$-sections among all its sections, while the information about gauge invariance is simply forgotten. However, this correspondence is not functorial in the sense that two equivalent gPDEs generally define inequivalent PDEs. For instance, the system obtained by adding an algebraically pure-gauge variables (known as Stueckelberg fields) have more solutions but the space of gauge-inequivalent solutions remains unchanged.

In fact a rather general local gauge theory can be equivalently represented as a gPDE, at least locally. The corresponding gPDE can be constructed starting from the jet-bundle BV formulation of the theory, as was originally shown in~\cite{Barnich:2010sw}, see also \cite{Grigoriev:2019ojp} for a more geometrical discussion in present terms. Note that now we are discussing gauge systems at the level of equations of motion. In particular, it is important to stress that gauge PDEs can be employed to describe off-shell field theories, i.e. systems that can be equivalently represented in such a way that equations of motion are absent. In such a representation the corresponding jet-bundle BV description is not going to contain negative ghost-degree coordinates so that the BV-BRST differential does not involve Koszul-Tate piece and hence does not impose any equations. Of course, any off-shell system can be equivalently represented as the one with nontrivial equations of motion by introducing auxiliary fields.

Typically, a fiber bundle $E$ underlying the gPDE formulation of a nontopological system is inifinite-dimensional and in the case of nonlinear systems is usually defined only implicitly. It turns out, that 
by  employing a fiber-bundle version of weak $Q$-manifolds one can describe nontopological theories in terms of finite-dimensional geometrical objects. More precisely, these are so-called weak gPDEs \cite{Grigoriev:2024ncm}:
\begin{definition} \label{def:weak-gPDE}
    1. A weak gPDE $(E,Q,\cK,T[1]X)$ is a $\mathbb{Z}$-graded bundle $\pi: E \rightarrow T[1]X$ equipped with a vector field $Q$ such that $\gh{Q} = 1$, $Q \circ \pi^* = \pi^* \circ \dx$ and a vertical involutive distribution $\cK$ satisfying:
    i) $L_Q \cK \subseteq \cK$
    ii) $Q^2 \in \cK$ 
    iii) $\cK$ is quasi-regular 
    
    2. A solution of $(E,Q,\cK,T[1]X)$ is a section $\sigma:T[1]X \rightarrow E$ such that the degree 1 vector field along $\sigma$ defined as $R_\sigma := \dx \circ \sigma^* - \sigma^* \circ Q$, satisfies
    $R_\sigma \in \sigma^*\cK$, where $\sigma^*\cK$ denotes the pullback of the distribution $\cK$.
    
    3. Infinitesimal gauge transformations are generated by the vector fields of the form $\commut{Q}{Y}$ where the gauge parameter $Y$, $\gh{Y}=-1$ is a projectable  vector field on $E$ satisfying $L_Y \cK \in \cK$. The corresponding transformation of $\sigma^*$ can be written as 
    \begin{align}\label{gaugetransf2}
        \delta_Y\sigma^{*}={\sigma}^{*}\circ \commut{Q}{Y}-\commut{\dx}{y}\circ \sigma^*\,,
    \end{align}
    where $y\equiv \pi_* Y$ is the projection of $Y$ to $T[1]X$.  In a similar way one defines gauge for gauge symmetries.

    4. Two solutions differing by an algebraic gauge equivalence generated by $\delta_{alg} \sigma^* = \sigma^* \circ k$ where $k\in \cK$, $\gh{k} = 0$, are considered equivalent.
\end{definition}
The condition of quasi-regularity in iii) means that the prolongation of $\cK$ to superjets is regular, details can be found in \cite{Grigoriev:2024ncm}. Note that the algebraic equivalence described in item 4. can be nontrivial only if $\cK$ has a nontrivial subdistribution of degree $0$. In applications it is often possible to reformulate the system in such a way that degree $0$ component of $\cK$ vanishes and to avoid taking the quotient when defining solutions. In particular, the weak gPDEs considered in this work are of this type.

Let us recall how projectable vector fields can be defined in terms of the algebras of functions on the corresponding manifolds. A vector field $V$ on the total space of a bundle $E \xrightarrow[]{p} M$ is called projectable if there exists a vector field $v$ on $M$, such that $\forall f \in C^\infty(M)$ one has $V (p^* f) = p^* (vf)$.

\begin{definition}
    A projectable vector field $V$, $\gh{V} = 0$ is called a symmetry of a weak gPDE $(E,Q,\cK)$ if
    \begin{equation}
        [V,Q] \in \cK, \qquad [V,\cK] \subseteq \cK\,.
    \end{equation}
    The corresponding infinitesimal transformation of section $\sigma:T[1]X\to E$ is given by
    $\delta \sigma^*=\sigma^* \circ V - v\circ \sigma^*$, where $v=\pi_*V$ denotes the projection of $V$ to $T[1]X$. As usual, two symmetries are considered equivalent if they differ by a gauge one. 
\end{definition}
It is an immediate consequence of the definition that symmetries maps solution to solutions. Let us stress that in this work we only discuss infinitesimal (gauge) symmetries.  In the special case when $\cK$ is empty this reduces to the usual definition of gPDE symmetries~\cite{Grigoriev:2019ojp,Grigoriev:2023kkk}. \comm{\footnote{\maxim{Note that in the case of gPDEs any symmetry can be made vertical by adding a suitable gauge symmetry. This is generally not the case when we consider weak gPDEs, because vector fields generating gauge transformations have to preserve $\cK$. Generally, the vector field $\ddl{}{\theta^a}$ does not preserve $\cK$.}}} Note that in contrast to gPDEs, this definition does not generally cover higher derivative symmetries of weak gPDEs. In order to see them, one should allow for a suitable analog of generalized vector fields.

\subsection{Presymplectic gPDEs}\label{sec:presymplectic}
We are mostly interested in Lagrangian theories. The first-principle systematic framework to handle general gauge field theories is that of the BV formalism. If one insists on the manifest locality of the approach a suitable version of BV is based on the following:

\begin{definition}\label{def:local-BV}
    A local BV  system  with the underlying fiber bundle $\cE\to X$,  $\dim(X)=n$,
    is determined by the following data:\\
(i) a degree-1, evolutionary vector field $s$ defined on $J^\infty(\cE)$ and satisfying $s^2=0$\\
(ii) an $(n,2)$-form $\st{n}{\omega}\in \bigwedge^{(n,2)}(J^\infty(\cE))$ of ghost degree $-1$, which is a pullback to $J^\infty(\cE)$ of a closed $n+2$ form $\omega^\cE$ on $\cE$, 
such that
\footnote{Here and in what follows $L_W \equiv i_W d +(-1)^{\p{W}} di_W$ denotes the Lie derivative along the vector field $W$.}
\begin{equation}
\label{eq:s-inv}
L_s\st{n}{\omega}+\dh (\ldots)=0\,, 
\end{equation}
where $\dh$ is the horizontal part of the de Rham differential on $J^\infty(\cE)$. In addition, $\omega^\cE$ is required not to have zero-vectors.  
\end{definition}
Further details on the jet-bundle BV formalism can be found in e.g.~\cite{Barnich:2000zw}, see also~\cite{Grigoriev:2022zlq} for the exposition in the present language.

Local BV systems can be encoded into more concise geometrical objects that usually can be assumed finite-dimensional. These objects can be considered as a field theoretical extension of the concept of presymplectic $Q$-manifolds or as a Lagrangian analogs of weak gPDEs. More precisely:
\begin{definition} \label{Def:w_presymp_gPDE}
A presymplectic gauge PDE $(E,Q,\omega)$ is a $\mathbb{Z}$ graded fiber bundle $E \xrightarrow[]{\pi} T[1]X$ equipped with a 2-form $\omega$ of 
degree $n-1$, a 0-form $\cL$ of degree $n$ 
and a vector field $Q$ of degree 1 satisfying $Q\circ \pi^* = \pi^* \circ \dx$ and 
\begin{equation}
\label{wgPDE}
    d\omega = 0, \qquad i_Q\omega + d\mathcal{L} \in \mathcal{I}, \qquad \frac{1}{2} i_Q i_Q\omega + Q\mathcal{L} = 0\,,
\end{equation}
where $\cI \subset \bigwedge^\bullet(E)$ is the ideal generated by the forms $\pi^*\alpha, \alpha \in \bigwedge^{k > 0}(T[1]X)$.
\end{definition}
It is known that these data define the full-scale BV formulation, provided the presymplectic structure obeys certain regularity assumptions. In more details, the BV field-antifield space arises as follows: one considers jets of supersectons $J^\infty{E}$ which is equipped with the prolongation of $Q$. Moreover,
$\omega$ induces a density-valued presymplectic structure of ghost degree $-1$ on $J^\infty{E}$ restricted to $X\subset T[1]X$. If this happen to be a regular, one can take the symplectic quotient. It follows that the prolongation of $Q$ projects to this quotient and together with the symplectic structure defines a structure of a local BV system. Moreover, the corresponding BV action has the AKSZ-like form and reads as:
\begin{equation}
\label{BV-AKSZ}
S_{BV}(\hat\sigma)=\int_{T[1]X}\hat\sigma^*(\chi)(\dx)+\hat\sigma^*(\cL)
\end{equation}
where $\chi$ is the presymplectic potential $\omega=d\chi$ and $\hat\sigma$ denotes a supersection. This actions does not depend on fields associated to the kernel directions of the induced presymplectic structure on $J^\infty(E)$ and hence is well-defined on the quotient. The detailed explicit construction of the local BV system encoded in a presymplectic gPDE can be found in~\cite{Dneprov:2024cvt}, see also~\cite{Grigoriev:2020xec,Dneprov:2022jyn,Grigoriev:2022zlq}.  We also refer to the formalism based on representing gauge systems in terms of presymplectic gPDEs as to presymplectic BV-AKSZ approach.

The restriction of $S_{BV}$ to sections, i.e. to  degree-preserving maps, gives the classical action as a functional of section $\sigma:T[1]X\to E$. It turns out that the gauge symmetries of this action can be explicitly represented in terms of $Q$. Namely, let $Y$ be a projectable vector field on $E$ and $y=\pi_*Y$ its projection to $T[1]X$.  Consider again the transformation \eqref{gaugetransf2}, i.e. 
\begin{align}\label{gaugetransf3}
        \delta_Y\sigma^{*}={\sigma}^{*}\circ \commut{Q}{Y}-\commut{\dx}{y}\circ \sigma^*\,.
\end{align}
One can check that it defines a symmetry provided $L_Y\omega\in \cI$, i.e. the gauge parameter preserves the symplectic structure. This was shown in~\cite{Alkalaev:2013hta} in a less general setup and we prove a more general statement in Section~\bref{sec:capturing}.

\subsubsection{Brackets of Hamiltonians on presymplectic bundles}

The defining relations of a presymplectic gPDE could appear somewhat confusing as $Q$ is Hamiltonian modulo $\cI$ only and, moreover, instead of the conventional condition $Q^2=0$ one has a somewhat unusual equation $i_Qi_Q\omega+2Q\cL=0$. 

It turns out that one can reformulate the latter condition as a conventional master-equation by introducing a suitable bracket on the space of generalised Hamiltonians.  To see this let us first make some general observations about brackets on bundles equipped with presymplectic structures. 
\begin{definition} \label{hamiltonian-def}
    Let $E \xrightarrow{\pi} \mathcal{X}$ be a (graded) bundle with a presymplectic structure $\omega$, $d\omega=0$. A pair $(y, J_y)$, where $y \in \mathfrak{X}(\mathcal{X})$, $J_y \in C^\infty(E)$ is called a projectable Hamiltonian pair if there exists a projectable vector field $Y\in \mathfrak{X}({E})$ such that 
    \begin{equation}\label{hamilt-projection}
        \pi_*Y = y\,,
    \end{equation}
    \begin{equation} \label{hamiltonian-vf}
        i_Y\omega +  dJ_Y \in \cI_{\mathcal{X}}\,,
    \end{equation}
    where $\cI_{\mathcal{X}}$ is the differential ideal generated by 1-forms of the form $\pi^*\alpha$, $\alpha \in \bigwedge^1(\cX)$.
\end{definition}
In the case where $\cX$ is just a point, the above definition reproduces the usual notion of Hamiltonian functions on a presymplectic manifold. It is clear that   if $(y, J_y)$ is a projectable Hamiltonian pair then so is $(y, J_y + \pi^*f)$, $f\in \cC^\infty(\cX)$.

\begin{prop}\label{bracket-for-pairs-def}
    The space of projectable Hamiltonian pairs admits a natural Lie bracket defined by 
    \begin{multline} \label{bracket-for-pairs}
        \{(x,J_X), (y, J_Y)\} = 
        ([x,y], ((-1)^{|X|}i_Xi_Y\omega +\\ (-1)^{|X|}L_X J_Y - (-1)^{(|X|+1)|Y| }L_Y J_X))\,,
    \end{multline}
    where $X, Y$ are any vector fields (of fixed ghost degree) on $E$ obeying \eqref{hamilt-projection}, \eqref{hamiltonian-vf}. In particular,
    the bracket does not depend on the choice of $X,Y$. 
\end{prop}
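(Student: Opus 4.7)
The plan is to verify three claims: that the right-hand side of \eqref{bracket-for-pairs} produces a valid projectable Hamiltonian pair, that the formula is insensitive to the choice of lifts $X, Y$, and that the resulting operation is a graded Lie bracket. For the first claim, I observe that projectability of $[X,Y]$ with projection $[x,y]$ is automatic. To identify the Hamiltonian I apply the graded Cartan identity $i_{[X,Y]}\omega = L_X i_Y\omega - (-1)^{|X|(|Y|-1)} i_Y L_X\omega$. Closedness of $\omega$ reduces $L_X\omega$ to $(-1)^{|X|}d\beta_X$ with $\beta_X := i_X\omega + dJ_X \in \cI_\cX$, while substituting $i_Y\omega = -dJ_Y + \beta_Y$ gives $L_X i_Y\omega = -(-1)^{|X|}d(L_X J_Y) + L_X\beta_Y$. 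Rewriting $i_Y d\beta_X$ via Cartan once more as $L_Y\beta_X - (-1)^{|Y|}d(i_Y\beta_X)$ and using that $L_X, L_Y$ preserve $\cI_\cX$ (because $X, Y$ are projectable), I gather the exact pieces into $-dJ_{[X,Y]}$ with $J_{[X,Y]}$ matching the displayed formula, after simplifying $i_Y\beta_X = i_Y i_X\omega + L_Y J_X$ and using the graded antisymmetry of contractions.

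For independence of lifts, the ambiguity in a chosen lift $Y$ for a fixed pair $(y, J_Y)$ is of the form $Y\mapsto Y+V$ where $V$ is vertical of ghost degree $|Y|$ and $i_V\omega \in \cI_\cX$. The resulting change in the function component of the bracket is a single $0$-form $(-1)^{|X|} i_X i_V\omega - (-1)^{(|X|+1)|Y|} L_V J_X$. The crucial observation is that $i_V\beta_X = 0$ identically: since $\beta_X$ is generated over $\bigwedge(E)$ by 1-forms pulled back from $\cX$ and $V$ annihilates all such forms by verticality, contracting $V$ into $\beta_X$ yields zero. Applying $i_V$ to the defining relation $i_X\omega + dJ_X = \beta_X$ then gives $i_V i_X\omega + L_V J_X = 0$, equivalently $i_X i_V\omega = \pm L_V J_X$ after reordering contractions, and the two pieces of the change cancel exactly. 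The analogous argument handles shifts of $X$.

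For the Lie bracket axioms, graded antisymmetry is verified term by term using antisymmetry of $[X,Y]$, of $i_X i_Y\omega$, and of the combination $(-1)^{|X|}L_X J_Y - (-1)^{(|X|+1)|Y|}L_Y J_X$ under $X \leftrightarrow Y$ (the last piece reducing to an identity already used in the previous step). Graded Jacobi follows by expanding the cyclic sum using $d\omega=0$ and the Cartan calculus of the first paragraph; the vector-field slot gives graded Jacobi in $\mathfrak{X}(E)$, while the function slot reduces to a cyclic sum that vanishes modulo $\cI_\cX$ and is hence identically zero since $\cI_\cX$ contains no $0$-forms. The main technical obstacle throughout is consistent bookkeeping of graded signs; conceptually everything rests on three inputs --- $d\omega=0$, preservation of $\cI_\cX$ by Lie derivatives along projectable vector fields, and the identity $i_V\beta_X = 0$ for vertical $V$ --- which together promote statements that hold \emph{a priori} modulo $\cI_\cX$ on $0$-forms into exact equalities.
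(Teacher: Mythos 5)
The paper offers no argument here beyond ``the proof is done by a straightforward computation,'' so there is no official proof to diverge from; your proposal supplies that computation, and its architecture (closure on projectable Hamiltonian pairs, independence of the lifts, Lie-bracket axioms) is the right one. The first two steps are sound: deriving $i_{[X,Y]}\omega + dJ_{[X,Y]}\in\cI_{\cX}$ from $d\omega=0$, $L_X\omega=(-1)^{|X|}d\beta_X$ with $\beta_X:=i_X\omega+dJ_X$, and the fact that Lie derivatives along projectable fields preserve $\cI_{\cX}$ is exactly what is needed; and the well-definedness argument via $i_V\beta_X=0$ for vertical $V$ (equivalently, contracting the shift of the lift directly into the defining relation) is correct and is the key observation.

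The one place where your reasoning does not hold up as stated is the Jacobi identity. You assert that the function slot of the cyclic sum ``vanishes modulo $\cI_{\cX}$ and is hence identically zero since $\cI_{\cX}$ contains no $0$-forms.'' But the residual terms produced by the computation are not elements of $\cI_{\cX}$: they are functions such as $i_Z L_X\beta_Y$ and $i_{[X,Y]}\beta_Z$, i.e.\ contractions of generally \emph{non-vertical} vector fields into degree-one elements of the ideal, and $i_Z(f\,\pi^*\alpha)=\pm f\,\pi^*(i_{\pi_*Z}\alpha)$ is a nonzero function in general. So ``lying in $\cI_{\cX}$'' is not the mechanism by which the obstruction dies; what actually happens is that these correction terms cancel among themselves after repeated use of $[L_X,i_Y]=i_{[X,Y]}$ and $[L_X,L_Y]=L_{[X,Y]}$ together with $\beta_{[X,Y]}=L_X\beta_Y-(-1)^{|X||Y|}L_Y\beta_X$ (one convenient route is to first check that the differential of the cyclic sum vanishes exactly and then dispose of the residual locally constant piece). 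The identity is true and your setup does yield it, but the cancellation has to be exhibited; it cannot be attributed to the absence of $0$-forms in the ideal.
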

The  proof is done by a straightforward computation. Related brackets were considered in various contexts~\cite{Deligne:1999qp,Fiorenza_2014,delgado2018lagrangianfieldtheoriesindproapproach,schiavina2025homotopieslagrangianfieldtheory}. However, they generally require extra structures.

Let us now turn to a very special but important case of $\cX\equiv T[1]X$ so that $\cX$ is endowed with a canonical vector field $\dx$. Then the structure of presymplectic gPDE is simply given by a presymplectic structure and a projective Hamiltonian pair $(\dx,\cL)$ satisfying the conventional-looking master equation:
\begin{equation}
        \{(d_X,\cL), (d_X,\cL) \} = 0\,.
\end{equation}
Indeed, it easy to check that this just a rewriting of $\half i_Qi_Q\omega+Q\cL=0$.
This observation is quite useful because it shows that the space of projective Hamiltonian pairs on a presymplectic gPDE is naturally a homological complex despite that $Q^2\neq 0$ in general. In particular, this explains why the deformation theory of presymplectic gPDEs developed in~\cite{Frias:2024vgw} works pretty much the same way as the usual deformation theory in terms of differential graded Lie algebras.

\subsubsection{Presymplectic BV-AKSZ for gravity}\label{example:presymp-grav}
Here we recall the typical example of Palatini-Cartan-Weyl gravity in the presymplectic gPDE approach~\cite{Alkalaev:2013hta,Grigoriev:2020xec}. The underlying $Q$-bundle $E\to T[1]X$ is $\algg[1]\times T[1]X$ with $Q=\dx+\dg$, where $\algg$ is the Poincar\'e algebra and $\dg$ its CE differential seen as a homological vector field on $\algg[1]$. If $\xi^a, \rho_{ab}$ are coordinates on $\algg[1]$ associated to translocations and rotation, respectively, one has 
\begin{equation}
        \dg\xi^a = -\rho^{a}{ }_b \xi^b \,, \qquad \dg \rho^{ab} = -\rho^{a}{ }_b \rho^{bc}\,.
\end{equation}
The presymplectic structure on $E$ is taken to be 
\begin{equation}\label{grav-presymp}
    \omega = -\epsilon_{abcd}\xi^a d\xi^b d\rho^{cd} =d{\xi}^{(2)}_{ab} d \rho^{ab}\,,
\end{equation}
where for simplicity we restricted ourselves to the case of $\dim X=4$. Generalisation to any $\dim X \geq 3$
and to the case of a nonvanishing cosmological constant is straightforward and can be found in~\cite{Alkalaev:2013hta,Grigoriev:2020xec}. It is easy to check that all the axioms are satisfied, giving
\begin{equation}
\cL = \half \epsilon_{abcd}\xi^a\xi^b \rho^{ce}\rho_e{}^d = {\xi}^{(2)}_{cd}\rho^{ce}\rho_e{}^d \,.
\end{equation}
The resulting action is precisely the Palatini-Cartan-Weyl action written in terms of the frame field $e^a=\sigma^*(\xi^a)$ and Lorentz connection $\lorcon^{ab}=\sigma^*(\rho^{ab})$:
\begin{equation}
  S[e,\lorcon]=\int_{T[1]X}e^{(2)}_{ab}(\dx \lorcon^{ab} + \lorcon^{a}{}_c\lorcon^{cb}) \,,
\end{equation}
while its BV extension is given by \eqref{BV-AKSZ}. Strictly speaking, to obtain the usual BV formulation one needs to eliminate fields parameterizing the kernel of the presymplectic structure by e.g. setting them to some fixed values.

In the above formulas and below we use the following useful convention: 
\begin{equation}
    {\xi}^{(n-k)}_{a_{n-k+1},...a_n} = \frac{1}{(n-k)!} \epsilon_{a_1,\cdots a_{n-k} a_{n-k+1}\cdots a_n} \xi^{a_1}\cdots \xi^{a_{n-k}}\,,
\end{equation}
where $n = dim X$.
Analogous notations are used for other anticommuting vectors, for instance ${\theta}^{(2)}_{ab} = \half \epsilon_{pkab}\theta^p \theta^k$ if $n=4$.

\subsection{Alternative interpretation of presymplectic gPDEs} \label{sec:altern_interpr}

It is instructive to discuss an alternative interpretation of presymplectic gPDEs in the case where the underlying almost $Q$-bundle $(E,Q,T[1]X)$ is actually a $Q$-bundle. In this situation one can consider two apparently different gauge systems: (i) the Lagrangian gauge theory encoded in the presymplectic gPDE (ii) the non-Lagrangian gauge system determined by $(E,Q,T[1]X)$
seen as a gauge PDE. It may happen that these two are equivalent if one considers (i) as a system at the level of equations of motion. In this case we say that the presymplectic structure on gPDE is complete. In general, this is not the case. The instructive example is given by the presymplectic BV-AKSZ formulation of gravity recalled in Section~\bref{example:presymp-grav}. Seen as a gPDE this formulation simply describes flat $\algg$-valued connections modulo the natural gauge-equivalence and is of course a topological system. However, seen as a presymplectic gPDE with presymplectic structure~\eqref{grav-presymp}, it describes Einstein gravity if $\dim{X}\geq 3$. Note that for $\dim{X}>3$ it is not topological. 

Another perspective on the same relation is as follows: given a gauge PDE equipped with a compatible presymplectic stucture, i.e. $L_Q\omega\in \cI$ one finds that there exits $\cL$ such that $i_Q\omega+d\cL\in\cI$. In this case $Q^2=0$ implies that $\frac{1}{2} i_Q i_Q\omega + Q\mathcal{L} = 0$ so that we are dealing with a special case of a presymplectic gPDE. Indeed, 
\begin{multline}
    i_{Q^2}\omega = 0 = L_Q i_Q\omega - i_Q L_Q \omega = L_Q i_Q\omega + i_Q d i_Q \omega  =\\ 2L_Qi_Q \omega + di_Qi_Q\omega  = d(i_Qi_Q\omega + 2Q\cL) + \cI 
    \end{multline}
which implies $i_Qi_Q\omega + 2Q\cL = \pi^*f$ for some $f\in C^\infty(T[1]X)$. But since the expression is of degree $n+1$ it can only be zero. This suggests that a gPDE equipped with a compatible presymplectic structure can be regarded as a partially Lagrangian gauge system. Indeed, as we discussed above, seen as a presymplectic gPDE it generally has more inequivalent solutions then the underlying gPDE or in other words not all of the equations encoded in this gPDE are variational. A recent discussion of partially Lagrangian systems within a different framework along with some examples can be found in~\cite{Lyakhovich:2024yai}.

Analogous considerations apply to the case of weak gPDEs, see the discussion in~\cite{Grigoriev:2024ncm}.

\section{Gauge PDEs over background} \label{sec:bgpde}

Although in this work we are mostly interested in Lagrangian systems, it is instructive first to consider systems at the level of equations of motion. 

\subsection{gPDEs over background and symmetries} \label{subsec-bgrd-gPDE}

As defined above, (weak) gPDEs are bundles over $T[1]X$. This is a natural minimal construction because it does not require any additional geometrical structures to be defined on $X$. Indeed, differential forms and de Rham differential are defined for any smooth manifold. In the gPDE setup all extra structures on $X$ (if any) are encoded in $Q$.

If, on the contrary, we are interested in describing a gauge theory on the spacetime manifold equipped with certain geometrical structures, one may try to replace $T[1]X$ with a more general object.  The important observation, is that various geometrical structures can be understood as gauge theories, usually off-shell ones. For instance, at the infinitesimal level, Riemannian geometry can be seen as a gauge theory where the gauge field is the metric and the gauge transformations are infinitesimal diffeomorphisms. But a gauge theory describing background geometry can be again reformulated as a gauge PDE. This leads to the concept of gauge PDE over background, which we introduce and study in this section.

Let $\pi_B:(E,Q) \rightarrow (B,\gamma)$ be a $Q$-bundle where the base $B$ is itself a nontrivial gauge PDE $(B,\gamma,T[1]X)$ over $T[1]X$. A simple observation is that any solution $S$ of $(B,\gamma,T[1]X)$ immediately gives a new gPDE $(E|_{S},Q,T[1]X)$ over $T[1]X$ which is a pullback of $E$ to $S$ seen as a submanifold of $B$. It is easy to check that this is a gauge PDE. Indeed, $Q$ is tangent to $(\pi_B)^{-1}S\subset E$ (see Remark~\ref{sol-subman} ) and $(S,\gamma|_S)$ is isomorphic to $(T[1]X,\dx)$, with the isomorphism map being the projection $\pi_{T[1]X}:B \rightarrow T[1]X$ restricted to $S$. We call such objects \textit{gauge PDEs over background}. 

Note that a gauge PDE over background can be also seen as a usual gPDE over $T[1]X$ with additional structure.  Namely, its total space $E$ is itself a bundle over $T[1]X$ and the projection $E\to B$
is compatible with the bundle structure. More precisely, if $\pi:E\to T[1]X$ and
$\pi_{T[1]X}:B\to T[1]X$ are the corresponding projections then $\pi = \pi_{T[1]X}\circ \pi_B$. In other words, $\pi_B$ is a morphism of bundles over $T[1]X$ which induces the identity map of the base.

The structure of gPDE over background naturally defines a certain subalgebra of symmetries of $(E|_{S},Q,T[1]X)$ for any background solution $S$. Namely, let $Y$, $\gh{Y}=-1$ be a $\pi$ projectable vector field on $E$. It follows, gauge symmetry $\commut{Q}{Y}$ is projectable as well and hence defines the gauge transformation of sections of $(B,\gamma,T[1]X)$. Those gauge transformations whose projections  preserve a fixed background solution $S:T[1]X\to B$ define global symmetries of $E|_{S}$.  This is of course the gPDE reformulation of the standard fact that global symmetries arise as those gauge symmetries of the system coupled to the background fields that preserve a given background solution.

The other way around, let $(E_0,Q_0,T[1]X)$ be a gPDE equipped with an action of a Lie algebra $\algg$, i.e. we are given with the ghost degree $0$ vertical vector fields $V_\alpha$ defined on $E$  and such that $\commut{Q}{V_\alpha}=0$ and $V_\alpha$ are fundamental vector fields associated to the basis elements $e_\alpha$ of $\algg$. This immediately gives rise to a gPDE  over background $(E,Q) \rightarrow (B,\gamma)$, where $E = E_0\times \algg[1]$ and the total $Q$ structure is $Q=Q_0+\dg$, where $\dg$ is a CE differential~\eqref{CEdiff} of $\algg$ with coeficients in $\cC^\infty(E_0)$. It is easy to see that $E$ can be seen as a bundle over $B=\algg[1]\times T[1]X$ and moreover $Q$ is projectable so that we are indeed dealing with a gPDE over background. The initial gPDE can be reconstructed by restricting  $E$ to the solution $c^\alpha=0$, where $c^\alpha$ are coordinates on $\algg[1]$. Of course, the gauge symmetries of the form $\commut{Q}{Y}$ and preserving $c^\alpha=0$ contain the initial symmetries $V_\alpha$. The constructed above gPDE over background can be regarded as the result of gauging the subalgebra $\algg$ of the algebra of symmetries of the initial gPDE $(E_0,Q_0,T[1]X)$.

\subsubsection{Example: gauging $u(1)$ symmetry}\label{sec:dirac}
The first example is rather trivial and is well known in one or another version. We start with a gPDE description of the complex spin $1/2$ field, which is  achieved by taking $E$ to be $T[1]X \times F$ where $F$ is the fibre of the equation manifold determined by $\gamma^a \d_a\psi(x)=0$, where $\gamma^a$ are gamma matrices satisfying $\gamma^a\gamma^b+\gamma^b\gamma^a=2\eta^{ab}$. Coordinates on $F$ can be taken to be totally-symmetric spin-tensors $\psi, \psi_a,\psi_{ab},\ldots$ satisfying $\gamma^a\psi_{a\ldots}=0$ and $\eta^{ab}\psi_{ab\ldots}=0$, where $\eta^{ab}$ is the Minkowski metric. The $Q$ structure is given by:
\begin{equation}
Qx^a=\theta^a, \qquad Q\psi=\theta^a\psi_a\,, \quad Q\psi_a = \theta^b\psi_{ab}%
\qquad \ldots \,,
\end{equation}
where $\ldots$ denote the action of $Q$ on higher jets of $\psi$ whose explicit form is not relevant in this context. The familiar $u(1)$-symmetry is determined by the following vector field:
\begin{equation}
V\psi= -i\psi\,, \qquad \commut{Q}{V}=0\,.
\end{equation}
Note that the action of $V$ on $\psi_a,\psi_{ab},\ldots $ is determined by $\commut{Q}{V}=0$. 

Gauging the $u(1)$-symmetry using the procedure described in the previous Section, results in the following gPDE over background $(E^\prime,Q^\prime) \to (B,\gamma)$:
\begin{equation}
E^\prime=F\times u(1)[1] \times T[1]X\qquad Q^\prime = Q-iC(\psi\dl{\psi}+\psi_a\dl{\psi_a}+\ldots)\,,
\end{equation}
where $C$, $\gh{C}=1$ is the new ghost variable introduced to gauge the $u(1)$ symmetry.
At the same time, $(B,\gamma)$ is given by
\begin{equation}
B= u(1)[1] \times T[1]X\,, \qquad \gamma =\dx \,.
\end{equation}
Solutions to $B$ are flat $u(1)$ connections $A=A_b(x) \theta^b$. Having fixed the background solution $A(x,\theta)=A_b(x)\theta^b$ the equations for
$\psi$ take the following form:
\begin{equation}
\dx \psi=\theta^a(\psi_a - iA_a\psi)\,, \qquad \dx \psi_a =\theta^b(\psi_{ab} - iA_b\psi_a)\,, \qquad \ldots    
\end{equation}
Taking a $\gamma$-trace of the first equation and using $\gamma^a\psi_a=0$ implies:
\begin{equation}
\gamma^a(\d_a+iA_a)\psi=0\,,
\end{equation}
i.e. a massless Dirac equation on the electro-magnetic background described by $A_b(x)$. Strictly speaking the above construction only describes pure gauge fields. However, it is clear that the consistency stays even if $A$ is not pure gauge. We discuss a systematic way to include nontrivial configurations for $A$ in Subsection~\bref{sec:weak-dirac}. 

\subsubsection{Parametrised systems}

Our next example is the so-called parametrised system. It is well known that a mechanical system can be made time-reparametrisation invariant by treating time as a phase-space variable and introducing a new evolution parameter, together with an additional gauge invariance, in such a way that reparametrisations are among the gauge transformations; see e.g.~\cite{HT-book}. This extends to field theory in a straightforward way, at least at the level of equations of motion. The parametrisation can be formulated in the full generality using a version of the gPDE formalism~\cite{Barnich:2010sw,Grigoriev:2010ic}. Here we present a more invariant and geometrical exposition of the procedure and stress that despite being a genuine gauge PDE,  the resulting parametrised system is naturally interpreted as the gPDE over background.

Let $(E,Q,T[1]Y)$ be a gPDE. Consider $(B,\gamma,T[1]X)$ given by
\begin{equation}
\pi_{T[1]X}:(T[1]X,\dx)\times (T[1]Y,\mathrm{d_{Y}}) \to (T[1]X,\dx)\,, \qquad \gamma=\dx+\dy \,,
\end{equation}
where $X$ is another copy of $Y$.\footnote{An interesting possibility is to take $X$ to be a different manifold, even of different dimension. This possibility was discussed in~\cite{Bekaert:2012vt} in the context of so-called ambient space formulations.} It is clear that $(B,\gamma,T[1]X)$ is a gPDE. Gauge PDE $(E^{\mathrm{par}},Q,T[1]X)$ describing the parametrised system is given by the pullback of $E \to T[1]Y$ by the canonical projection $\pi_{T[1]Y}:B\to T[1]Y$ to the second factor. Although $(E^{\mathrm{par}},Q,T[1]X)$ is a genuine gPDE, it can be equally well considered as a gPDE over background $(B,\gamma,T[1]X)$. Moreover, the diagonal in $T[1]X \times T[1]Y$ is obviously a solution to $B$ and the restriction of $(E^{\mathrm{par}},Q,T[1]X)$ to the diagonal coincides with the initial gPDE. At the same time, seen as a gPDE over $(T[1]X,\dx)$, gPDE $(E^{\mathrm{par}},Q,T[1]X)$ is globally trivial as a $Q$-bundle and can be also regarded~\cite{Barnich:2010sw} as a non-Lagrangian AKSZ sigma model. Moreover, if $(E,Q,T[1]X)$ is diffeomorphism invariant from the outset, this procedure gives an equivalent gPDE formulation of the initial system, at least locally. In this case one can check that $y^a,\xi^a$ seen as fibre coordinates of $E^{\mathrm{par}}$ are contractible pairs and can be eliminated, see~\cite{Grigoriev:2019ojp,Barnich:2010sw} for further details.

Let us discuss $(B,\gamma,T[1]X)$ in some more details. First of all, it is easy to check that there is one to one correspondence between solutions to $B$ and diffeomorphsims $X\to Y$ (recall that $Y$ is just another copy of $X$). Indeed, being degree preserving, any solution $\sigma_B$ defines a map $X\to Y$. At the same time, the action of $\sigma_B^*$ on the fibre coordinates on $T[1]Y$ is determined by $\sigma_B^*(\dy f)= \dx \sigma_B^* f$. To give the explicit coordinate description let $x^\mu$ and $y^a$ be the local coordinates on $X$ and $Y$, respectively. The induced coordinate system on $T[1]X$ is then $x^\mu,\theta^\mu\equiv \dx x^\mu$ while $y^a,\xi^a\equiv \dy y^a$ are coordinates on $T[1]Y$. The product $Q$ structure on $B=T[1]X\times T[1]Y$ reads as:
\begin{equation}
\gamma=\theta^\mu \dl{x^\mu}+\xi^a \dl{y^a}\,.
\end{equation}
 A solution $\sigma_B$ to $B$ is determined by functions:  $\bar y^a(x)=\sigma_B^*(y^a)$ and as we saw above $\sigma_B^*(\xi^a)=\sigma_B^*(\dy y^a)=\ddl{\bar y^a}{x^\mu}\theta^\mu$. In other words, $\sigma_B$ is entirely determined by the map $X\to Y$ and $\sigma^*_B$ is just the usual pullback map $\bigwedge^\bullet(Y) \to \bigwedge^\bullet(X)$ provided we resort to the language of the differential forms.  
 
 Let us also look at gauge transformations for $B$. A vertical gauge parameter $Z$ has the form $Z=\epsilon^a(x,y)\dl{\xi^a}$. It is enough to take $\epsilon^a$ to be $y$-independent. The corresponding gauge transformations of ``fields'' $\sigma^*_B(y^a),\sigma^*_B(\xi^a)$ are given by:
\begin{equation}
\delta_Z \sigma^*_B y^a\equiv \sigma^*_B\commut{Q}{Z}y^a=\epsilon^a(x)\,,
\quad 
\delta_Z \sigma^*_B \xi^a\equiv \sigma^*_B\commut{Q}{Z}\xi^a=\theta^\mu \ddl{\epsilon^a(x)}{x^\mu}
\end{equation}
and have a clear geometrical meaning.

It is instructive to give an explicit coordinate description of $E^{\mathrm{par}}$. Using coordinates $x^\mu,\theta^\mu,y^a,\xi^a,\psi^B$ introduced above, the explicit expression for the $Q$-structure of $E^{\mathrm{par}}$ reads as:
\begin{equation}
Q^{\mathrm{par}}=\theta^\mu\dl{x^\mu}+\xi^a\dl{y^a}+Q^B(\psi,y,\xi)\dl{\psi^B}\,.
\end{equation}
Given a background solution $\sigma_B^*(y^a)=\bar y^a(x)$, $\sigma_B^*(\xi^a)=\ddl{\bar y^a}{x^\mu}\theta^\mu$ and using $x^\mu,\theta^\mu,\psi^A$ as the coordinates on $E^{\mathrm{par}}$ pulled back to $\sigma_B$, the resulting $Q$-structure (i.e. the restriction of $Q^{\mathrm{par}}$ to $E^{\mathrm{par}}$ restricted to $\sigma_B$) reads as
\begin{equation}
Q^\prime=\theta^\mu\dl{x^\mu}+Q^B(\psi,\bar y^a(x),\ddl{\bar y^a}{x^\mu}\theta^\mu)\dl{\psi^B}\,.
\end{equation}
Of course, $Q^\prime$ is just an initial $Q$ written in a different coordinate system. This confirms that we are indeed dealing with parametrised systems. Moreover, the infinitesimal reparametrisations are now among the gauge symmetries of $E^{\mathrm{par}}$. Indeed gauge parameter $Z=\epsilon^a(x)\dl{\xi^a}$ on $B$ can be lifted to $E^{\mathrm{par}} \to B$ and define the action of infinitesimal reparametrisations on fields $\sigma^*\psi^A$.  Of course, such a lift is generally not unique and is not canonical. The reason is that reparametrisations could generally mix with the intrinsic gauge transformations of $E$.

\subsubsection{Example: Fedosov equations}

Another example of a gPDE over background comes from the Fedosov quantization of symplectic manifolds~\cite{Fedosov:1994}. It turns out that the equations determining Fedosov connection and their gauge symmetries can be naturally interpreted as equations of motion and gauge symmetries of a certain gPDE. To see this let us start with the simplified setup where the tangent bundle over the base manifold $X$ is trivial. As $E \to T[1]X$ we take $(W[1] \times \fR[2]) \times T[1]X$ equipped with the product $Q$-structure $q+\dx$, where $W$ is the Weyl algebra of a symplectic space $\fR^{n}$, with $n=2m=\dim{(X)}$ and the Moyal-Weyl star-product $W\tensor W \to W$ denoted by $*$.
The $Q$ structure of $W[1] \times \fR[2]$ is given by
\begin{equation}
q\Psi=-\frac{1}{2\hbar}\qcommut{\Psi}{\Psi}+\omega\tensor 1\,, \qquad q\omega=0\,,
\end{equation}
where $\omega$ is a coordinate on $\fR[2]$, $\gh{\omega}=2$ and $\qcommut{A}{B}=A*B-(-1)^{\p{A}\p{B}}B*A$. Here, $\Psi$ is a canonical element of $\cC^\infty(W[1])\tensor W$ given by $\psi^A\tensor e_A$, where $e_A$ is a basis  of $W$ and $\psi^A$ coordinates on $W[1]$ corresponding to this basis. Of course, $\Psi$ is just a convenient way to pack the coordinates on $W[1]$ into a single object. We also choose to work over formal series in variable $\hbar$ so that all linear operations are assumed linear over $\fC[[\hbar]]$.

A section $\sigma:T[1]X \to (W[1]\times \fR[2])\times T[1]X$ is parametrized by a 1-form $A=\theta^\mu A^B_\mu(x) e_B=\sigma^*(\Psi)$ with values in $W$ and a 2-form $\bar\omega=\sigma^*(\omega)$. The equation of motion read as:
\begin{equation}
\dx A+\frac{1}{2\hbar}\qcommut{A}{A}=\bar\omega\,, \qquad \dx \bar\omega=0\,.
\end{equation}
If we set $\bar\omega$ to be a fixed symplectic form on $X$ these are precisely the Fedosov equations determining an abelian connection. Strictly speaking, in Fedosov construction we usually assume that $A=A_0+\tilde A$, where $A_0$ is a fixed connection belonging to the quadratic subalgebra of $W$ and containing a nondegenerate soldering form.

It is clear that this system has a natural interpretation as a gPDE over background. More precisely, as a background gPDE $(B,\gamma)$ one takes $(\fR[2]\times T[1]X,\dx)$ so that background solutions are precisely closed 2-forms on $X$. It is clear that $(E,q+\dx)$ is a $Q$-bundle over $(B,\gamma)$. Note that $\omega$
can be understood as a background field associated to the degree $-1$ symmetry $\dl{\psi^0}$, where $\psi^0$
is the coordinate on $W[1]$ associated to the unit element $1\in W$. 

The system admits a number of natural generalisations. For instance, one can incorporate the bare connection $A_0$ as an additional background field. In this case it is natural to take the background gPDE to be weak so that equations of motion only require  $\bar\omega$ to be closed and $A_0$ to be torsion-free and compatible with $\bar\omega$.  In this way the background gPDE describes generic Fedosov geometry, i.e. the symplectic structure and the compatible symmetric connection.

Another natural generalisation is to extend the fibre by the additional factor $W$ so that the fibre is now $W[1]\times \fR[2] \times W$ and take
the following $Q$ structure:
\begin{equation}
q\Psi=-\frac{1}{2\hbar}\qcommut{\Psi}{\Psi}+\omega\tensor 1\,, \qquad q\Phi=-\frac{1}{\hbar}\qcommut{\Psi}{\Phi}\,,
\end{equation}
where $\Phi=\phi^A\tensor e_A$ is a generating function for coordinates on $W$. This extension results in a new 0-form field $\bar\Phi$  whose equations of motion are $d\bar\Phi+\frac{1}{\hbar}\qcommut{A}{\bar \Phi}=0$. This gives the full Fedosov system which also involves the covariant-constancy equation for an observable $\bar \Phi$.

More interesting situation is to tensor $W$ with a differential graded associative algebra. This corresponds to Fedosov quantization of 1st class constrained systems. The respective connection can be then interpreted as a version of Quillen connection. In the case of cotangent bundles this generalisation was studied in~\cite{Grigoriev:2006tt}, see also~\cite{Grigoriev:2023lcc}.
Finally, let us note that if $X$ is not parallelizable one should replace $T[1]X\times W[1]$ with a locally trivial $Q$-bundle over $T[1]X$, which  trivializes to $(T[1]U,\dx) \times (W[1],q)$ over a coordinate patch $U\subset X$.

\subsection{Linearized equations as gPDEs over background} \label{subsec:Linearized_systems}
Suppose we are given a gPDE $(E,Q,T[1]X)$. There is a canonical gPDE over background associated to it, which corresponds to the linearized version of the system. 

Given $E \to T[1]X$ there is a canonical vertical tangent bundle $VE \rightarrow E$ and a canonical lift $\Tilde{Q}$ of $Q$ to $VE$ such that $\tilde Q$ projects to $Q$ by $VE \to E$. Indeed, define $\hat Q$ on $TE$ by taking\footnote{Such a lift is sometimes called \textit{a tangent lift} or \textit{a complete lift.}}
\begin{equation}
\hat Q p^* f=p^* Q f\,,\qquad  \hat Q p^*(\tilde d f)= \tilde d p^*(Q f)\,, \quad f \in \cC^\infty(E)\,,
\end{equation}
where we employed the identification of covectors on $E$ and functions on $TE$ that are linear on fibres and $\tilde d$ denotes the differential acting from functions on $E$ to covector fields on $E$ seen as symmetric covariant tensors rather then differential forms. It is clear that $\hat Q$ projects to $Q$ on $E$ and $\hat Q$ is nilpotent. Moreover, $\hat Q$ is tangent to $VE$ seen as a submanifold in $TE$, resulting in a $Q$-bundle $(VE,\tilde Q)$ over $(E,Q)$, where $\tilde Q$ is the restriction of $\hat Q$ to $VE \subset TE$.

Let us give an coordinate expression for $\tilde Q$. Let  $(x^a, \theta^a, \psi^A)$ be local coordinates on $E$ such that $x^a,\theta^a$ are adapted coordinates on $T[1]X$ pulled back to $E$, then $Q$ has the following form:
\begin{equation}
    Q = \theta^a \ddl{}{x^a} + q^A(\psi,x,\theta)\ddl{}{\psi^A}\,.
\end{equation}
The induced coordinates on $VE$ are $(x^a, \theta^a, \psi^A, \phi^A)$ and $\tilde Q$ takes the form:
\begin{equation}
    \Tilde{Q} = \theta^a \ddl{}{x^a} + q^A(\psi,x,\theta)\ddl{}{\psi^A} + \phi^B \ddl{q^A}{\psi^B}\ddl{}{\phi^A}\,.
\end{equation}
Indeed, $\tilde Q \phi^A=\tilde dq^A=\tilde d\psi^B\ddl{q^A}{\psi^B}=\phi^B\ddl{q^A}{\psi^B}$. 

It is clear that we have arrived at a gPDE over background. The background gPDE is the starting point gPDE and the whole system describes the linearisation of the system around arbitrary background configurations. In the case where $(E,Q,T[1]X)$ is a usual PDE, this construction coincides with the notion of a tangent bundle of a PDE, see e.g.~\cite{Krasil'shchik:2010ij} for further details. 

\subsubsection{Example: zero curvature equation}
    
The gPDE description of the zero curvature equation amounts to taking $E=T[1]X\times \algg[1]$ with $Q=\dx+\dg^0$, where
$\dg^0$ is the CE differential of the Lie algebra $\algg$. The linearization of $(E,Q,T[1]X)$ is given by $VE=T(\algg[1])\times T[1]X$ and $\tilde Q=\dx+\tilde \dg$ with
\begin{equation}
\tilde \dg c=-\half\commut{c}{c}\,, \qquad  \tilde \dg \phi= - \commut{\phi}{c}\,,
\end{equation}
where coordinates on $\algg[1]$ and the fibres of $T\algg[1]$ are encoded in $\algg$-valued $c$
and $\phi$, respectively. Of course, $\tilde\dg$ can be seen as the CE differential of $\algg$ with coefficients in functions on $\algg[1]$ seen as the adjoint representation. Once a solution to the background equation is fixed the linearised system is just a covariant constancy equation for a 1-form in the adjoint representation. 

\subsection{Homogeneous gPDEs over background}\label{sec:homgPDE}
There is a simple, yet rich class of examples of a particular type, where gauge PDEs originate from so-called homogeneous bundles and are invariant under the respective group action.  Let us first recall the notion of homogeneous bundles. Let the base space $X$ be a homogeneous space $G/H$ so that there is a canonical transitive action of $G$ on $X$. Let $E_0$ be a fibre bundle over $X$. Bundle $E_0$ is called homogeneous if $E_0$ is a $G$-space and moreover $G$ acts by bundle morphisms (i.e. preserving fibres) and its induced action on the base is the canonical $G$-action on $X$. It is clear that a typical fibre is an $H$-space. It is a well-known fact that a homogeneous bundle over $G/H$ is entirely determined by its typical fibre $F$. 

More precisely, let $F$ be an $H$-space, i.e. a manifold with the $H$-action $H\times F\to F$. Consider $E_0=G\times_H F$, which is the quotient of $G\times F$ by the following equivalence relation $(g,f)\sim (gh,h^{-1}f)$, $h\in H$. At the same time the left $G$-action on $E_0$ can be defined on representatives via $g^\prime(g,f)=(g^\prime g,f)$ and is well-defined on the equivalence classes. Locally, we can trivialize $E_0$ as $(x,f)$, where $x \in X, f\in F$ by choosing a local section $\sigma:X \to G$ so that $(x,f)$ corresponds to the equivalence class $(\sigma(X),f)$. At the infinitesimal level, we are given with fundamental vector fields $\rho(a), a\in \algg=Lie(G)$ on $E_0$. These vector fields project to the fundamental vector fields on $G/H$ and satisfy $\commut{\rho(a)}{\rho(b)}=\rho(\commut{a}{b})$ (here and below we employ conventions such that $\rho$ is a homomorphisms, i.e. $\rho(a)$ at $([g],f)$  is the vector tangent to $exp(-ta)([g],f)$ at $t=0$). Further details on the geometry of homogeneous bundles can be found in e.g.~\cite{cap2009parabolic}

The typical fibre  of a generic homogeneous bundle is equipped with the $H$-action only. However, if $E \to G/H$ is additionally endowed with a flat $G$-invariant connection the typical fibre naturally inherits the action of a Lie algebra $\algg=Lie(G)$. Indeed, given such a connection, the fundamental vector field $\rho(a)$, $a \in \algg$ splits into its horizontal and vertical parts $\rho(a)=\rho_h(a)+\rho_v(a)$ so that $\rho_h(a)$ belongs to the horizontal distribution determined by the connection. Moreover, because the distribution is involutive and $\rho(\algg)$-invariant, one finds that it is also $\rho_v(\algg)$-invariant. This also implies that $\rho_v$ defines the action of $\algg$ on any fibre and hence a typical fibre is a $\algg$-space. It is important to stress that although
$E_0$ is a $G$-space by assumption, the fibre at say $[e]$ is equipped with the action of the Lie algebra of $G$ only and generally not the $G$-action, even locally. In what follows we denote by $\rho_F(a)$ the action of the fundamental vector field of $a\in \algg$ on the typical fibre $F$ which we identify as the fibre over $[e]$ (i.e. $\rho_F$ is a restriction of $\rho_v$ to $F$). Because we are dealing with a homogeneous bundle all its structures are fully determined by a fibre at $[e]$. In particular, under some technical conditions the horizontal distribution of a $G$-invariant flat connection can be generated by the vector fields of the following form:
\begin{equation}
\label{MC-rep}
\dl{x^\mu}+\omega_\mu^\alpha \rho_F(t_\alpha)
\end{equation}
where we employed a local  trivialisation $([g],f)$ and  $\omega^\alpha\equiv dx^\mu\omega_\mu^\alpha$ are coefficients of the Cartan connection 1-form on $G \to G/H$ with respect to basis $t_\alpha$ of $\algg$ and coordinate system $x^\mu$ on $X$. More precisely, $\omega^\alpha$ can be obtained as the pullback of the canonical left-invariant Maurer-Cartan (MC) form on $G$  by a trivialisation section $\sigma:X\to G$.
Here, we refrain from discussing the exact conditions under which a generic $G$-invariant flat connection on a homogenous bundle can be represented as~\eqref{MC-rep} and simply assume this in what follows. A thorough study of invariant connections on homogeneous bundles can be found in~\cite{cap2009parabolic} where, in particular, the proof of the existence of the representation~\eqref{MC-rep} is given in the case of linear connections.

In applications to gauge systems we need to reformulate a flat $G$-invariant connection in terms of $Q$-structures. To this end it is useful to describe the connection in terms of $E_0$ represented as a quotient of $G\times F$.
Let $l_\alpha\equiv l(t_\alpha)$ denote the basis left-invariant vector field on $G$ and we adopt the convention that $l(\commut{t_\alpha}{t_\beta})=\commut{l(t_\alpha)}{l(t_\beta)}$ and at $e\in G$ one has $r_\alpha=-l_\alpha$. Then the right action of $H$ on $G\times F$ is generated by $l_i+\rho_F(t_i)$, where $t_i$ denote the basis in $\algh\subset \algg$. The horizontal distribution of the canonical left-invariant flat connection on $G\times F$ can be generated by vector fields $l_\alpha+\rho_F(t_\alpha)$. It is clearly involutive and invariant with respect to the left action of $G$ on $G \times F$ (recall that $\bar g(g,f)\equiv (\bar gg,f)$). The homogeneous bundle $E_0$ is obtained by taking a quotient with respect to the right $H$-action or, equivalently,  by the distribution $l_i-\rho_F(t_i)$. It is clear that horizontal distribution $l_\alpha+\rho_F(t_\alpha)$ descends to that on the quotient, giving a canonical flat connection on $E_0$. Given a local trivialisation $\sigma:X\to G$, the corresponding connection 1-form can be obtained as a pullback of the MC form to $X\subset G$ so that in local coordinates the connection indeed takes the form~\eqref{MC-rep}.

In particular, the above discussion applies to so-called homogeneous PDEs which can be defined as homogeneous fibre bundles (typically with infinite-dimensional fibre) equipped with the invariant flat connection whose horizontal distribution is known as Cartan distribution. Note that this distribution is typically not integrable and hence one only gets $\algg\equiv Lie(G)$-action on the fibres. Nevertheless, the Cartan distribution can be still represented in the form~\eqref{MC-rep} under some technical conditions.

It turns out that the concept of homogeneous PDEs extends to the case of gauge PDEs in a straightforward way. In this discussion we limit ourselves to so-called standard gauge PDEs. Recall that a gPDE $(E,Q,T[1]X)$ is called standard if it is bigraded by the base-space degree (i.e. form degree given by homogeneity in $\theta$) and the fibre ghost degree so that $Q$ globally splits as $Q=\nabla +q$ into the form degree $0$ piece $q$ which is vertical and form 
degree $1$ piece $\nabla$.  In this case the transition functions of the underlying bundle $E\to T[1]X$ can not involve $\theta$  and hence 
$E$ is a pullback of some $E_0\to X$ by the canonical projection $T[1]X \to X$. In particular, $\nabla$ defines a connection in $E_0 \to X$. If $x^\mu,\theta^\mu,\psi^A$ are adapted local coordinates on $E \to T[1]X$, such $Q$ can be written as:
\begin{equation}
\label{standard-Q-exp}
Q=\theta^\mu(\dl{x^\mu}+\Gamma_\mu^A(x,\psi)\dl{\psi^A})+q^A(x,\psi)\dl{\psi^A}\,.
\end{equation}

It is clear that standard gPDEs can be defined in terms of bundles over $X$. Namely, let a fibre bundle $E_0\to X$ be a PDE, i.e. a bundle equipped with Cartan distribution (= flat connection), but we still allow fibre to be a graded manifold. Let in addition $E_0$ be equipped with a nilpotent  evolutionary vector field $s$  of ghost degree $1$. Horizontal forms on $E_0$ can be identified with functions on $E_0$ pulled back to $T[1]X$ by the canonical projection $T[1]X\to X$. In so doing $L_s+\dh$ is identified with the $Q$-structure on $E \to T[1]X$. It follows, in a suitable trivialisation the $Q$ structure takes precisely the form~\eqref{standard-Q-exp}, with $q=L_s$. If $E_0 \to X$ is a jet-bundle then $s$ can be understood as the BV-BRST differential, for more details see e.g.~\cite{Barnich:2000zw},  and the construction we just recalled reformulates a given local BV system as a gauge PDE, see~\cite{Barnich:2010sw,Grigoriev:2019ojp, Grigoriev:2022zlq}.  

We call a standard gPDE homogenous if the underlying bundle $E_0 \to X$ is a homogeneous bundle over $X$ and $G$-action on $E_0$ is extended to $E$ in such a way that it projects to a canonical action of $G$ on $T[1](G/H)$ and, moreover, $Q$ is $G$-invariant. It follows, both $\nabla$ and $q$ are $G$-invariant and $q$ originates from the $\algg$-invariant vector field on the fibre over $[e]\in X$.

If we restrict to the case where the $G$-invariant connections on the underlying homogeneous bundle is of the form \eqref{MC-rep}, we have the following useful representation: 
\begin{prop}\label{homgpde}
Let $(E,Q,T[1]X)$ be a homogeneous standard gPDE. Then $(E,Q,T[1]X)$ is the quotient of the following  $Q$-manifold $(T[1]G \times F,Q)$, where $Q$ is given by
\begin{equation}
Q=\nabla+q\,, \qquad \nabla=\mathrm{d}_G+\omega^\alpha \rho_F(t_\alpha),
\end{equation}
where $\omega^\alpha$ are components of the canonical  MC form on $G$ and $q$ is a $Q$-structure on $F$.  Functions on the quotient are those annihilated by 
\begin{equation}
\label{subal}
 \hat l(t_i)+\rho_F(t_i)\,,\qquad  I_{l(t_i)}\equiv l^M_i\dl{\theta^M}\,,
\end{equation}
where $l(t_i)$ is the left-invariant vector field associated to $t_i\in \algh$ and $\hat l(t_i)$ is its natural lift to $T[1]G$.  Note that $I_{l(t_i)}$ is just a contraction with $l(t_i)$ if one identifies forms on $G$ with functions on $T[1]G$.
\end{prop}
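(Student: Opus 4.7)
The plan proceeds in three steps: (i) verify that $(T[1]G\times F,Q)$ with $Q=\mathrm{d}_G+\omega^\alpha\rho_F(t_\alpha)+q$ is indeed a $Q$-manifold; (ii) identify the quotient by the distribution~\eqref{subal} with the underlying graded bundle $E=T[1]X\times_X E_0$ of the homogeneous standard gPDE, where $E_0=G\times_H F$; (iii) check that $Q$ descends to this quotient and matches the intrinsic $Q$-structure~\eqref{standard-Q-exp}.

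For (i), a direct computation using the Maurer--Cartan equation $\mathrm{d}_G\omega^\alpha=-\tfrac{1}{2}C^\alpha_{\beta\gamma}\omega^\beta\omega^\gamma$, the $\algg$-invariance $\commut{q}{\rho_F(t_\alpha)}=0$ built into the homogeneous structure, and $q^2=0$ gives $\nabla^2=0$, $\commut{\nabla}{q}=0$, and hence $Q^2=0$. For (ii), use $\cC^\infty(T[1]G)\simeq\bigwedge^\bullet(G)$, under which $\hat l(t_i)$ acts as $L_{l(t_i)}$ and $I_{l(t_i)}$ as $i_{l(t_i)}$. The generators $I_{l(t_i)}$ then select the subalgebra of forms on $G$ horizontal for $G\to G/H$, while $\hat l(t_i)+\rho_F(t_i)$ selects functions invariant under the diagonal right $H$-action on $G\times F$. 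Their intersection is precisely the algebra of $E_0$-valued differential forms on $G/H$, which is canonically $\cC^\infty(E)$. Involutivity of~\eqref{subal} follows from $\commut{t_i}{t_j}\in\algh$ together with the standard identities $\commut{L_V}{i_W}=i_{\commut{V}{W}}$ and $\{i_V,i_W\}=0$.

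For (iii), the key computations are $\commut{Q}{I_{l(t_i)}}=\hat l(t_i)+\rho_F(t_i)$ and $\commut{Q}{\hat l(t_i)+\rho_F(t_i)}=0$, both of which lie in the distribution, so $Q$ descends. The first follows from $\commut{\mathrm{d}_G}{I_{l(t_i)}}=L_{l(t_i)}=\hat l(t_i)$ and $\commut{\omega^\alpha\rho_F(t_\alpha)}{I_{l(t_i)}}=\rho_F(t_i)$ (using $I_{l(t_i)}\omega^\alpha=\delta^\alpha_i$). The second uses $\algh$-invariance of $q$ together with the cancellation $(L_{l(t_i)}\omega^\alpha)\rho_F(t_\alpha)+\omega^\alpha\rho_F(\commut{t_i}{t_\alpha})=0$, which is again controlled by the Maurer--Cartan equation in the form $L_{l(t_i)}\omega^\alpha=-C^\alpha_{i\beta}\omega^\beta$. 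Having projected, I would choose a local section $\sigma\colon U\to G$ of $G\to G/H$; then $\sigma^*\omega^\alpha=dx^\mu\omega^\alpha_\mu(x)$ is the local connection 1-form, and the descended $Q$ takes the form $\theta^\mu(\partial_\mu+\omega^\alpha_\mu\rho_F(t_\alpha))+q$, which is exactly~\eqref{standard-Q-exp} with the invariant connection~\eqref{MC-rep} and fibre $Q$-structure $q$. The main nontrivial inputs are the two Maurer--Cartan-driven cancellations appearing in steps (i) and (iii); everything else reduces to standard graded-calculus identities plus the hypothesis that the invariant connection on $E_0$ admits the representation~\eqref{MC-rep}.
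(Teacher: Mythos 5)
Your proposal is correct and follows essentially the same route as the paper: the paper's entire proof rests on the single identity $\nabla I_{l(t_i)}+I_{l(t_i)}\nabla=\hat l(t_i)+\rho_F(t_i)$ (a consequence of $\omega^\alpha(l_\beta)=\delta^\alpha_\beta$), from which the descent of $\nabla$ follows, with $q$ descending by $\algg$-invariance — exactly your step (iii). Your steps (i), (ii) and the final local-coordinate check merely spell out what the paper leaves implicit, so there is nothing to correct.
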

\begin{proof}
Notice the following relation:
\begin{equation}
\nabla I_{l(t_i)}+I_{l(t_i)} \nabla  =\hat l(t_i)+\rho_F(t_i)
\end{equation}
which holds because $\omega^{\alpha}(l_\beta)=\delta^\alpha_\beta$ by definition of the left-invariant MC form on $G$. It follows, $\nabla$ is well defined on the subalgebra \eqref{subal} and hence defines the homological vector field on $E$, which encodes a $G$-invariant connection on $E_0$. Furthermore, $q$ is $\algg$-invariant by assumption and hence descends to the quotient as well, giving a gPDE
structure on $E$. It is clear that any $G$-invariant $q$ arises this way.
\end{proof}

It is natural to regard gPDEs of the form described in the above Proposition as generalizations of the usual homogeneous PDEs. Indeed, if $q=0$ and $F$ is a real linear manifold then Proposition \bref{homgpde}
amount to the standard description of the associated flat connection in the homogeneous bundle. In this case the graded geometry language is of course superfluous.

The homogeneous gPDEs of the form described in the above Proposition arise as restrictions of specific gPDEs over background to suitable background solutions. To see this let us first present a local construction by assuming the underlying bundle trivial. Consider $\Tilde{E} = F \times \mathfrak{g}[1] \times T[1]X $ seen as a bundle over $B =  \mathfrak{g}[1] \times T[1]X \rightarrow T[1]X$. $\Tilde{E}$ is a gPDE over background with $\tilde Q = \dx + \dg + q$, where 
$\dg$ is a CE differential of $\algg$ with coefficients in $\cC^\infty(F)$ and seen as a vector filed on $\algg[1]\times F$. If $c^\alpha$ are linear coordinates on $\algg[1]$ associated to basis $t_\alpha$ and $\rho_F(a)$  is a fundamental vector field of $a\in \algg$ defined on $F$, the differential reads explicitly as:
\begin{equation}
\label{Qhom}
    \tilde Q=\dx-\half c^\alpha c^\beta U_{\alpha\beta}^\gamma \dl{c^\gamma}+c^\alpha\rho_F({t_\alpha})+q \,.
\end{equation}
It clearly projects to $B$ and solutions of $B$ are flat $\algg$-connections. Moreover, the restriction of $(\tilde E,\tilde Q)$ to a solution of $B$ is the corresponding homogeneous gPDE provided the flat $\algg$-connection is a Cartan one, i.e. satisfy the nondegeneracy codition.

Let us finally present a global version of the construction. Take $T[1]G\times \algg[1]\times F$ with the total $Q$-structure being 
\begin{equation}
\tilde Q=\mathrm{d_G}+\dg+q\,.
\end{equation}
Just like in the Proposition~\bref{homgpde}, the quotient of $T[1]G\times \algg[1]\times F$ can be performed by considering functions annihilated by
\begin{equation}
\hat l_i+\commut{\dl{c^i}}{\dg}\,, \qquad I_{l_j}+\dl{c^j}\,,
\end{equation}
where $l_i\equiv l(t_i)$. It is easy to see that $\tilde Q=\mathrm{d_G}+\dg+q$ is well defined on the subspace of such functions, defining a new $Q$-bundle over $T[1]X$. Indeed, locally one can identify the subspace with functions on $T[1]X\times \algg[1]
\times F$.

In the field theory context, homogeneous gPDEs of the form described by Proposition~\bref{homgpde}  emerge, for instance,  in the gauge PDE description~\cite{Barnich:2004cr,Barnich:2006pc,Alkalaev:2008gi,Bekaert:2009fg,Alkalaev:2009vm,Bekaert:2012vt} of gauge fields on constant curvature spaces and conformal fields. In this case $\cF$ is typically the CE complex of a parabolic subalgebra $\algp\in\tilde\algg$ with coefficients in the $\algg\oplus\tilde\algg$-bimodule. More specifically, $\algg$ and $\tilde\algg$ form a version of Howe dual pair. In the case of $q=0$ this example was already known in the context of the unfolded description of free fields on coset spaces, see~\cite{Lopatin:1987hz,Shaynkman:2004vu}. Somewhat similar structures are also known in the context of equivariant BV formulation~\cite{Bonechi:2022aji,Cattaneo:2024ows}.

A homogeneous gPDE of the form described by Proposition~\bref{homgpde} has $\algg$ as a subalgebra of the algebra of symmetries. Indeed, let $\epsilon^\alpha(x)t_\alpha$ be a covariantly constant section of the bundle $G\times_H \algg$, i.e. $\dx\epsilon+\commut{\omega}{\epsilon}=0$.  It follows, the vector field $V=\epsilon^{\alpha}\rho_F(t_\alpha)$ on $E$ satisfies $\commut{Q}{V}=0$ and hence represents a symmetry of $(E,Q,T[1]X)$. It is a nontrivial symmetry unless $\rho(t_\alpha)$ is $q$-exact.

It is easy to understand the origin of this symmetry in terms of the gPDE over background constructed above. Indeed, consider a vector field $W$ defined on $\tilde E$ and given locally by
$W=\commut{\tilde Q}{\epsilon^\alpha\dl{c^\alpha}}$, i.e. $W$ is a gauge transformation. The condition that $W$ leaves the solution $S$ intact implies that $\epsilon$ is covariantly constant. The restrictions of $W$ to $\tilde E|_S$ coincides with $V$. 

To summarize, given a homogeneous space $G/H$ and a $Q$-manifold $(F,q)$ equipped with the action of $\algg$ preserving $q$ one naturally associates to this data a homogeneous gPDE using the construction of Proposition~\bref{homgpde}. More generally, given a principle $H$-bundle over a general manifold of dimension $\dim(G/H)$
one can construct a gPDE over background, whose total space is the associated bundle with the fibre $(F,q)\times\algg[1]$ (in so doing both $F$ and $\algg[1]$ are seen as $H$-spaces). However, it is not guaranteed that the background gPDE admits a flat Cartan connection as a solution unless the  starting point principle bundle admits it.

\subsubsection{Example: Fronsdal fields}
As an example let us construct a homogeneous gPDE over background for Fronsdal fileds~\cite{Fronsdal:1978rb} in Minkowski space.

Theory is linear and so is the fiber manifold $F$. It is convenient to encode the coordinates of the fiber $F$ in terms of the following generating function of auxiliary commuting variables $y^a, p^a$ of vanishing ghost degree and an anticommuting variable $c$, $\gh{c}=1$:
\begin{equation}
    \begin{gathered}
       \Psi(y,p,c)= c\Phi(y^a,p^a)+ C(y^a, p^a) , \qquad \gh{\Phi} = 0\,, \gh{C} = 1 \,.
    \end{gathered}
\end{equation}
obeying 
\begin{equation}
\begin{gathered}
    \ddl{}{y^a}\ddl{}{y_a} \Psi = \ddl{}{y^a}\ddl{}{p_a} \Psi = \ddl{}{p^a}\ddl{}{p_a} \Psi = 0\,, 
\end{gathered}
\end{equation}
These conditions mean that coordinates $\phi^{a_1 a_2...}{ }_{b_1 b_2...}, \xi^{a_1 a_2...}{ }_{b_1 b_2...}$ encoded in $\Phi,C$ are totally traceless tensors. The fibre differential $q$ is defined as 
\begin{equation}
    q\Psi = cp^a\ddl{}{y^a}\Psi\,, 
\end{equation}
or, more explicitly,
\begin{equation}
    \qquad q\Phi = p^a\ddl{}{y^a}C, \qquad qC = 0\,.
\end{equation}
Generating functions are assumed to be polynomials in $p^a$ and formal power series in $y^a$.

Manifold $F$ is equipped with a linear action of Poincar\'e algebra compatible with the fibre differential. The action of translations $T_a$ and Lorentz rotations $M_{ab}$ is determined by: 
\begin{equation}
\begin{aligned}
\rho_v(T_a)\Psi &= \ddl{}{y^a}\Psi, \\
\rho_v(M_{ab})\Psi&=(y_b\ddl{}{y^a}-y_a\ddl{}{y^b} + p_b \ddl{}{p^a}-p_a \ddl{}{p^b})\Psi\,.
\end{aligned}
\end{equation}
This defines CE differential on $\algg[1]\times F$. If $\xi^a$, $\half \rho^{ab}$ are coordinates on $\algg[1]$
corresponding to the basis $T_a,M_{ab}$ in $\algg$ the differential reads as
\begin{equation} 
\begin{gathered}
    \dg\xi^a = -\rho^{a}{ }_c\xi^c \,, \qquad 
    \dg\rho^{a}{ }_b = -\rho^{a}{ }_c \rho^{c}{ }_b\,, \\
    \dg\Psi=\xi^a \ddl{}{y^a}\Psi + \half \rho^{ab}(y_b\ddl{}{y^a}-y_a\ddl{}{y^b} + p_b \ddl{}{p^a}-p_a \ddl{}{p^b})\Psi \,.
\end{gathered}
\end{equation}

The $Q$-structure on $E = F \times \mathfrak{g}[1] \times T[1]X$ is defined as:
\begin{equation}
    \tilde Q = \dx + \dg + q\,.
\end{equation}
It is easy to see that this differential is precisely of the form \eqref{Qhom}. Reducing to a solution of the background boils down to choosing a flat Cartan connection, valued in the Poincare algebra, i.e. choosing the coframe  $e^a_\mu\theta^\mu = \sigma^*\xi^a$ and the Lorentz connection $\omega^{ab}_{\mu}\theta^\mu = \sigma^*\rho^{ab}$. The resulting system describes free higher-spin fields propagating in a flat background. This formulation of Fronsdal theory was originally put forward in~\cite{Barnich:2004cr}, see also~\cite{Alkalaev:2008gi}.

Finally, let us mention that one can make manifest an additional structure here. Namely, $F$ can be represented as $F_0\times \tilde\algg[1]$, where coordinates on $F_0$
are coefficients of polynomials in $y,p$ and $\tilde\algg$  is a 1-dimensional subalgebra of $sp(2)$ algebra.
In the standard basis the $sp(2)$ action on $F_0$ is given by:
\begin{equation}
p^a\dl{y^a}\,, \quad p^a\dl{p^a}-y^a\dl{y^{a}}\,, \quad y^a\dl{p^{a}}\,,
\end{equation}
and $\tilde\algg$ is generated by $p^a\dl{y^a}$. If we identify the linear coordinate on $\tilde\algg[1]$ with $c$, vector field $q$ can now be seen as a CE differential of $\tilde\algg$ with coefficients in functions on $F_0$ so that
$d_\algg+q$  is a CE differential of $\algg\oplus \tilde\algg$ with the same coefficients.  Note that together with Lorentz subalgebra of $\algg$ this $sp(2)$ makes $F_0$ into a bimodule where these two algebras commute. This structure is the example of so-called the reductive dual pair correspondence~\cite{Howe}. Further examples of linear homogeneous gPDEs arising in this way can be found in~\cite{Barnich:2006pc,Alkalaev:2008gi,Bekaert:2009fg,Alkalaev:2009vm}.

\subsection{Generalisation: weak gPDEs over background}

The notion of gPDE over background has a natural generalization where both gPDEs are allowed to be weak.

We first need to recall the notion of a projectable distribution. Given a bundle $E \xrightarrow[]{{\pi_B}} B$, a distribution $\cK$ on $E$ is called projectable if it is generated by projectable vector fields. It follows that a projectable distribution determines the distribution on $B$, denoted by $\cK_B={\pi_B}_*\cK$. Note that any vector field $v$ from ${\pi_B}_*\cK$ can be lifted to a vector field $V \in \cK$ such that ${\pi_B}_*V=v$. Indeed, this can be done by representing $v$ as a linear combination of vector fields of the form ${\pi_B}_* W$, $W\in\cK$.
    

Now we define:
\begin{definition}
A weak gPDE over background $(E,Q,\cK)\rightarrow (B,\gamma, T[1]X)$ is a pair of almost-$Q$ bundles  ${\pi_B}:(E,Q)\to (B,\gamma)$ and $\pi_{T[1]X}:(B,\gamma)\to (T[1]X,\dx)$ such that $E$ is equipped with an involutive and ${\pi_B}$-projectable distribution $\cK$ satisfying $L_Q\cK \subset \cK$, $Q^2\in\cK$ and $(\pi_{T[1]X}{} \circ {\pi_B})_*  \cK = 0$.
\end{definition}
Note that $L_Q\cK \subset \cK$, $Q^2\in\cK$ implies $L_\gamma \cK_B\subset \cK_B$, $\gamma^2 \in \cK_B$, where $\cK_B={\pi_B}_*\cK$ because ${\pi_B}_*Q=\gamma$ which ensures that $(B,\gamma, \cK_B,T[1]X)$ is itself a weak gPDE. Note also that $(E,Q,\cK,T[1]X)$ seen as a bundle over $T[1]X$ with the projection being $\pi_{T[1]X}\circ {\pi_B}$ is a weak gPDE and
that every solution of $(E,Q,\cK,T[1]X)$ projects to a solution of $(B,\gamma,\cK_B,T[1]X)$
Indeed, $\cK$ is $\pi_{T[1]X}\circ {\pi_B}$-vertical and hence projects to a trivial distribution on $T[1]X$. This gives an alternative definition of weak gPDE over background as a bundle ${\pi_B}:E\to B$ in the category of bundles over $T[1]X$ such that $E$ is a weak gPDE over $T[1]X$ and both $Q$ and $\cK$ are ${\pi_B}$-projectable.

Let us remark that the linearization procedure introduced in Section~\bref{subsec:Linearized_systems} can be easily generalized to the case of weak gPDEs. Namely, given a weak gPDE $(E,Q,\cK, T[1]X)$ one can construct $(VE,\Tilde{Q}, \Tilde{\cK}) \rightarrow (E,Q,T[1]X)$ where $\Tilde{\cK}$ is the complete lift of $\cK$.

\subsubsection{Example: Dirac equation in EM background} \label{sec:weak-dirac}

The generalization of the example presented in Subsection~\bref{sec:dirac} to not necessarily flat $u(1)$ connection is immediate if we consider $(B,\gamma, T[1]X)$ as a weak gPDE, with the $\gamma$-invariant distribution $\cK_B$ generated by vector fields $\theta^a\theta^b\dl{C}$. 

More precisely, let us consider a weak gPDE over background $(E,Q,\cK)\rightarrow (B,\gamma, T[1]X)$. The total space is taken to be a linear graded manifold with coordinates: 
\begin{equation}
\begin{gathered}
    \psi \quad \gh{\psi} = 0\,, \qquad 
    C, \quad \gh{C} = 1 \,,\qquad  x^a, \theta^a
\end{gathered}
\end{equation}
and the projection $\pi_B:E \to B$ is given by $\pi_B^* (C)=C$.  The $Q$ structure is given by 
\begin{equation}
    Q\psi = -iC\psi - i\frac{1}{4} m\theta^a \gamma_a \psi, \qquad QC = \gamma C = 0
\end{equation}
and $\cK$, seen as a module module over $\cC^\infty(E)$, is generated by:
\begin{equation}
\begin{gathered}
    (\gamma_{(a}\theta_{b)} - \frac{1}{4} \eta_{ab}\gamma^c \theta_c)\ddl{}{\psi}, \qquad  \theta^a \theta^b \ddl{}{\psi}, \qquad   
    \theta^a \theta^b \ddl{}{C} \,.
\end{gathered}
\end{equation}
It is easy to see that $[Q,\cK] \in \cK$ and $\cK$ is $\pi_B$-projectable with $\cK_B=(\pi_B)_*\cK$ generated by $\theta^a \theta^b \ddl{}{C}$.

It is easy to see that a solution $S$ to the background gPDE $(B,\gamma)$ describes a general $u(1)$-connection $S^*(C)A=A_a(x)\theta^a$ subject to the usual gauge transformations. At the same time, the pullback of $E$ to $S$ describes the following equations of motion for $\psi$: 
\begin{equation}
    \theta^a\partial_a\psi + i \theta^a A_a \psi + \frac{1}{4}im\theta^a \gamma_a \psi+ \alpha_{ab}(x)\theta^a\gamma^b = 0\,,
\end{equation}
where $\alpha_{ab}(x)$ are components of a symmetric traceless tensor field on $X$ , which we use to parameterize an arbitrary vector from $\cK$ in "$\psi$ direction", $k_\psi = \alpha^{ab}(\gamma_{(a}\theta_{b)} - \frac{1}{4} \eta_{ab}\gamma^c \theta_c)\ddl{}{\psi}$. Contracting the equation with $i\gamma^a$ one gets
\begin{equation}
   ( i\gamma^a \partial_a - \gamma^a A_a - m)\psi = 0\,,
\end{equation}
which is the Dirac equation in generic EM background. Of course, gauge symmetries preserving a fixed background solution is the starting point $u(1)$ symmetry.

\section{Presymplectic gPDEs over backgrounds} \label{sec:presymp-bgpde}

We are mostly interested in Lagrangian theories and hence need a notion of gPDE over background such that setting background to a particular solution to $(B,\gamma,\cK_B,T[1]X)$ one obtains not only equations of motion but also the respective Lagrangian.

\begin{definition} \label{def:B-presymp}
A presymplectic gPDE over background is an almost-$Q$ bundle  $(E,Q) \xrightarrow{{\pi_B}} (B, \gamma)$ over a 
weak gPDE $(B,\gamma, \cK_B,T[1]X)$, equipped with a presymplectic structure $\omega \in \bigwedge^2(E)$, $deg(\omega) = n-1$ and $\cL \in \cC^{\infty}(E)$, $\gh{\cL}=n$ such that
    \begin{equation}\label{compat2_theory}
        d\omega = 0, \qquad i_Q\omega + d\cL \in {\cI_B}, \qquad \frac{1}{2}i_Q i_Q \omega + Q\cL = 0\,,
    \end{equation}
    where $\cI_B$ is an ideal generated by forms ${\pi_B}^*\alpha$, $ \alpha \in \bigwedge^{1}(B)$, $i_{\cK_B} \alpha = 0$. Moreover, we require that the distribution $\cK$ generated by all projectable $k$ obeying $i_k \omega \in \cI_B\,, \pi_B{}_* k \in \cK_B $ is such that $\pi_B{}_* \cK = \cK_B$.
\end{definition}
Note that the last condition implies that any $k \in \cK_B$ can be lifted to a projectable vector field satisfying $i_k \omega \in \cI_B$.
In what follows we use the notation $(E,Q,\omega)\to (B,\gamma, \cK_B,T[1]X)$ or simply $(E,Q,\omega)\to (B,\gamma, \cK_B)$ if the choice of spacetime is clear from the context. Speaking informally, the bundle $(E,Q) \xrightarrow{{\pi_B}} (B,\gamma)$ is the bundle of ``fields over background geometry'', with the background geometry reformulated as a (weak) gPDE. If we take a trivial background gPDE $(B,\gamma,\cK_B,T[1]X)=(T[1]X,\dx,0,T[1]X)$ the above definition reduces to that of a presymplectic gPDE \bref{Def:w_presymp_gPDE}. 
\begin{rem}\label{rem:alt-def-chi}
    Just like in the case of presymplectic gPDEs, see~\cite{Dneprov:2024cvt}, an alternative definition of the presymplectic gPDE over background is to require $E$ to be equipped with $Q,\chi,\cL$ such that \eqref{compat2_theory} holds for $\omega=d\chi$.
Recall that for $\gh{\omega}\neq 0$ a potential exists globally. In this version the ambiguity in the choice 
of $\chi$ is manifest. Indeed,  changing $\chi$ to $\chi+\alpha$, $\alpha \in \cI_{B}$ one finds that \eqref{compat2_theory} is again satisfied for $\cL-i_Q\alpha$. For instance, 
\begin{equation}
\iota_Q \iota_Q d\alpha=\iota_Q L_Q \alpha+\iota_Q d \iota_Q \alpha=
 L_Q \iota_Q\alpha+\iota_Q d \iota_Q \alpha=
2Q(i_Q\alpha)\,,
\end{equation}
where in the second equality we used  $i_{Q^2}\alpha=0$ which holds because $Q^2$ projects to $\gamma^2 \in \cK_B$ and $\alpha \in \cI_B$. $i_{\gamma^2}\beta=0$ for any $\beta$ such that $i_{\cK_B}\beta=0$. Moreover, the integrand of the AKSZ-like action \eqref{BV-AKSZ} determined by $\chi$ is unchanged under $\chi\to  \chi+\alpha$, $\cL\to \cL-i_Q\alpha$ provided $\pi_B\circ \sigma$ solves the background gPDE. This is immediately clear for $\alpha$ of the form $f \, \pi_B^*\beta$, where $\beta$ is the 1-form on $B$ and $f\in \cC^\infty (E)$.
\end{rem}

Let us stress that at this stage we do not assume any regularity conditions on $\omega$ but this has to be done at some later stage. The condition of the existence of ${K}$ ensures that, at least locally, a vector field from $\cK_B$ can be lifted to a vector field on $E$ that belongs to the kernel of $\omega$ modulo $\cI_B$.

The following further remarks are in order:
\begin{rem} For a background gPDE as defined in~\bref{def:B-presymp} one has $i_{Q^2}\omega \in \cI_B$. To prove this observe that:
\begin{multline}
    i_{Q^2}\omega = L_Qi_Q\omega - i_QL_Q\omega = L_Qi_Q\omega + i_Q d i_Q \omega = \\ 2L_Q i_Q\omega + d(i_Q i_Q \omega) = 2L_Q(-d\cL + \cI_B) +  d(i_Q i_Q \omega) =\\ d(i_Q i_Q \omega + 2Q\cL) + L_Q \cI_B\,.
\end{multline}
Here, the $d$-exact term vanishes thanks to the presymplectic master-equation while the last term lies in $\cI_B$ because $L_Q$ preserves $\cI_B$. Indeed, $L_Q \pi^* \alpha = \pi^*L_\gamma \alpha$ and since $[\gamma,\cK_B] \in \cK_B$ it follows that $i_{\cK_B}L_\gamma\alpha = 0$.
\end{rem}

\begin{rem} \label{rem:forget}
    \textbf{Forgetting the symplectic structure.} Given a presymplectic gPDE over background one can forget the presymplectic structure and consider this system as a weak gPDE over background $(E,Q,\cK) \to (B,\gamma,\cK_B,T[1]X)$, where $\cK$ is the distribution generated by all projectable vector fields $Y$ such that $i_Y\omega \in \cI_B$, ${\pi_B}_*Y \in \cK_B$. Note that since every vertical vector field is projectable, $\cK$ contains the whole of the vertical kernel of $\omega$. 
\end{rem}

\begin{rem}
    \textbf{When a weak gPDE over background is presymplectic?}  Given a weak gPDE over background $(E, Q, \cK) \to  (B, \gamma, \cK_B, T[1]X)$ it is natural to ask if there exists a presymplectic structure $\omega$ such that $\omega$ obeys \eqref{compat2_theory} and the distribution $\cK^\prime$ generated by vector fields $ \pi_B{}_* k \in \cK_B\,, i_k \omega \in \cI_B$ coincides with $\cK$. These condition ensure that  $(E, Q, \omega) \to (B, \gamma, \cK_B)$ is a presymplectic gPDE over background. Moreover, applying the ``forgetful functor'' described in Remark~\bref{rem:forget} one gets back the initial weak gPDE over background.
\end{rem}


\subsection{Presymplectic gPDE on background solution}
Given a particular solution, i.e. "a fixed background", of the background gPDE $B$, the pullback of $E$ to the solution merely defines a weak presymplectic gPDE over $T[1]X$. The latter is interpreted as a gauge system in the fixed background. 

\begin{theorem}
Let  $(E,Q,\omega, \cL) \to (B,\gamma, \cK_B,T[1]X)$ be a  presymplectic gPDE over background $B$ and
$S: T[1]X \rightarrow B$ be a solution to the background weak gPDE. This naturally defines a presymplectic gPDE over $T[1]X$ and hence a local BV system, provided $\omega$ is quasi-regular
on $E$ pulled back to $S$.
\end{theorem}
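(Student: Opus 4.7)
The plan is to pull back all structures along the solution $S$ and then verify the axioms of Definition~\bref{Def:w_presymp_gPDE} for presymplectic gPDEs. Concretely, I take $E|_S := S^*E$, realised as the submanifold $\iota: \pi_B^{-1}(S(T[1]X)) \hookrightarrow E$, with projection $\pi := \pi_{T[1]X}\circ\pi_B\circ\iota$ to $T[1]X$, and set $\omega' := \iota^*\omega$ and $\cL' := \iota^*\cL$. The subtle piece is the $Q$-structure: because $S$ solves only a \emph{weak} background gPDE, $Q$ need not be tangent to $E|_S$, its projection $\gamma$ failing to be tangent to $S(T[1]X)$ by a piece of $S^*\cK_B$. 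The hypothesis $\pi_B{}_*\cK = \cK_B$ lets me lift this defect to a degree-1 vector field $k\in \cK$, at least locally near $E|_S$, so that $Q-k$ is tangent to $E|_S$, and I put $Q' := (Q-k)|_{E|_S}$.

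Next I would verify the axioms. The projection identity $Q'\circ\pi^* = \pi^*\circ\dx$ follows because $Q$ descends to $\dx$ under $E\to T[1]X$, while $k$ kills functions pulled back from $T[1]X$ by virtue of $(\pi_{T[1]X}\circ\pi_B)_*\cK = 0$. Closedness $d\omega' = 0$ is automatic. For the Hamiltonian compatibility modulo $\cI$, I write
\begin{equation*}
i_{Q'}\omega' + d\cL' = \iota^*(i_Q\omega + d\cL) - \iota^*(i_k\omega),
\end{equation*}
where both summands lie in $\iota^*\cI_B$: the first by the defining axiom on $E$, the second by the characterising property $i_k\omega\in\cI_B$ of $k\in\cK$. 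Since generators $\pi_B^*\beta$ of $\cI_B$ pull back along $\iota$ to $\pi^*(S^*\beta)\in\cI$, the right-hand side sits in the ideal $\cI$ on $E|_S$, as required.

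The master equation $\tfrac12 i_{Q'}i_{Q'}\omega' + Q'\cL' = 0$ is the main technical obstacle. Two observations drive the calculation: (a) for any $\alpha = f\,\pi_B^*\beta$ with $i_{\cK_B}\beta = 0$, one has $i_k\alpha = f\,\beta(\pi_B{}_*k) = 0$ because $\pi_B{}_*k\in\cK_B$, so $i_k$ annihilates $\cI_B$; (b) for two degree-one vector fields the operators $i_Q, i_k$ graded-commute on forms, yielding $i_{Q-k}i_{Q-k}\omega = i_Q i_Q\omega - 2\, i_k i_Q\omega + i_k i_k\omega$. Writing $d\cL = -i_Q\omega + \alpha$ with $\alpha\in\cI_B$ and invoking (a) gives $k\cL = i_k d\cL = -i_k i_Q\omega$. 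Assembling these with the bulk master equation $\tfrac12 i_Q i_Q\omega + Q\cL = 0$, everything cancels except $\tfrac12\iota^*(i_k i_k\omega)$, which again vanishes by (a) applied to the decomposition $i_k\omega = \sum_j g_j\,\pi_B^*\beta_j$.

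Finally, the passage from the presymplectic gPDE $(E|_S, Q', \omega', \cL')$ to a local BV system is supplied by the machinery of Section~\bref{sec:presymplectic} once we use the stated quasi-regularity of $\omega$ along $E|_S$: the prolongation of $Q'$ to $J^\infty(E|_S)$ descends through the symplectic quotient and the AKSZ-like action \eqref{BV-AKSZ} furnishes the BV master action. I would also remark that two admissible lifts $k, k'$ yield $Q'$'s that differ by a vertical vector field lying in the kernel of $\omega'$ modulo $\cI$, so the induced local BV system is canonical up to standard equivalence.
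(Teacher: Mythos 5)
Your proposal is correct and follows essentially the same route as the paper: lift the defect $k_S\in\cK_B$ of the weak background solution to a projectable $\tilde k_S\in\cK$ using the axiom $\pi_B{}_*\cK=\cK_B$, adjust $Q$ by it so that it becomes tangent to $E|_S$, check that the three defining relations survive because $i_{\tilde k_S}$ annihilates $\cI_B$ and $i_{\tilde k_S}\omega\in\cI_B$, and then invoke quasi-regularity to produce the local BV system. The only differences are cosmetic (a sign convention in the definition of the defect and your extra remark on independence of the choice of lift, which the paper does not spell out).
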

\begin{proof}
    Let $S$ be a fixed solution to the background, i.e.
    there exists $k_S \in \cK_B$ such that  
    \begin{equation}
    \label{solution-B}
        \dx \circ S^* -S^*\circ \gamma = S^*\circ k_S\,.
    \end{equation}
    Note that $S$ can be seen as a submanifold of $B$, and the vector field $\gamma + k_S$ is tangent to $S$. Indeed, if $b^i$ are fiber coordinates, the submanifold is the zero locus of $\pi_{T[1]X}^*(S^*(b^i)) - b^i$ and the ideal generated by these functions is preserved by $\gamma + k_S$.
    Moreover, $S$ defines an isomorphism $T[1]X$ to $S$, while
    equation~\eqref{solution-B} says that $\gamma + k_S$ restricted to $S$ coincides with $\dx$ pushed forward to $S$
    by $S$ seen as a map $T[1]X\to B$. Below we denote by $E|_S$  the pullback of $E$  to $S$.

    Although $E|_S=(\pi_B)^{-1}S$ is naturally a submanifold in $E$, vector field $Q$ is not tangent to $E|_S$ in general. However, $Q$ can be adjusted by a vector field from $\cK$ to a vector field tangent to $E|_{S}$. Indeed, according to the Definition~\bref{def:B-presymp} any vector field from $\cK_B$ can be uplifted to a projectable vector field from $\cK$. Recall, that $\forall k \in \cK$ it follows $i_k\omega\in \cI_B$. If 
$\tilde k_S \in \cK$ is a lift of $k_S\in \cK_B$, vector field $Q_S=Q+\tilde{k}_S$ on $E$ projects to $\gamma+k_S$ on $B$ and hence $Q_S$ is tangent to $E|_{S}\subset E$. It follows, $(E_S,Q_S)$ is naturally an almost-$Q$ bundle over $(T[1]X,\dx)$.

Furthermore, conditions \eqref{compat2_theory} remain intact if $Q$ is replaced by $Q_S$. Indeed, the second condition holds because $i_{{\tilde{k}_S}} \omega \in \cI_B$ while the third one gives:
\begin{multline}
        \frac{1}{2}i_{Q+{\tilde{k}_S}} i_{Q+{\tilde{k}_S}} \omega + (Q+{\tilde{k}_S})\cL = \\
        \frac{1}{2}i_Q i_Q \omega + Q \cL + i_{\tilde{k}_S}(i_Q\omega +d\cL)+ \frac{1}{2}i_{\tilde{k}_S}i_{\tilde{k}_S}\omega  
        = 0 + i_{\tilde{k}_S}\cI_B = 0\,,
\end{multline}
where the last equality holds because $i_{\tilde{k}_S} \alpha=0$ for any 1-form from $\cI_B$ (indeed, $\cI_B$ is generated by $\pi^*\beta$ with $i_{{k}_S}\beta =0$).    

Bundle $E|_{S}$ can be seen as a bundle over $T[1]X$ by taking as a projection the composition $\pi_S$ of ${\pi_B}$ restricted to $E|_{S}$ and the projection $\pi_{T[1]X}:B\to T[1]X$ restricted to $S\subset B$. Moreover, $Q_S\circ \pi_S^* = \pi_S^* \circ \dx $. Under the pullback by the inclusion map $i: E|_S \rightarrow E$ the ideal $\cI_B$ goes to the ideal $i^*\cI_B$ which is generated locally by the forms $\pi^*(dx^a), \pi^*(d\theta^a)$. Then pulling back the defining equations of \bref{def:B-presymp} one finds that $(E|_S, Q_S, i^*\omega, T[1]X)$ satisfy all the axioms of a weak presymplectic gauge PDE \bref{Def:w_presymp_gPDE} if $i^* \omega$ is quasi-regular, in which case $(E|_S, Q_S, i^*\omega, T[1]X)$ defines a local BV system \cite{Grigoriev:2022zlq, Dneprov:2024cvt}. 
\end{proof}

A simple informal way to understand the construction is to see $E$ as a bundle over $T[1]X$ (by composition of projections) and to consider the subspace of sections for which the diagram 
\begin{tikzcd}
E \arrow[dr,"\pi_B"'] & \\
& B  \\
T[1]X \arrow[uu,"\sigma"'] \arrow[ur,"S"]
\end{tikzcd}
commutes for a given $S$. Practically, this means that we set the background fields equal to some fixed functions. On this subspace of sections the differential $\gamma + k_s$ ``becomes'' $\dx$ and the axioms of \bref{def:B-presymp} imply the axioms of \bref{Def:w_presymp_gPDE}. 

The BV action of the theory is given by the BV-AKSZ-like action:
\begin{equation} \label{act_back_restr}
    S_{BV} [\hat{\sigma}] = \int_{T[1]X} \Hat{\sigma}^*(\chi)(\dx) + \hat\sigma^*(\cL)\,,
\end{equation}
where $\hat{\sigma}$ denotes a supesection of $E|_S$. Strictly speaking the usual BV formulation is obtained by taking a quotient of the space of supersections by the kernel distribution of the presymplectic structure induced by $\omega$ on the space of supersections, see Section~\bref{sec:presymplectic}.

\subsection{Capturing background symmetries}\label{sec:capturing}

The information about gauge symmetries associated with background gauge transformations are generally not captured by the BV action \eqref{act_back_restr}. Nevertheless, it turns out that these gauge transformation are encoded in the almost-$Q$ structure on $E$.  To see this let us  consider a background presymplectic gauge PDE, as defined by \bref{def:B-presymp}. By composition of projections $E$ is naturally a bundle over $T[1]X$. Note that one could try to consider $(E,Q,\omega,T[1]X)$ as a presymplectic gPDE but $\omega$ is generally $Q$-invariant modulo $\cI_B$ only so that the conditions of Definition~\bref{Def:w_presymp_gPDE} are not guaranteed.

Consider the AKSZ-like action on the space of sections $\sigma:T[1]X\to E$:
\begin{equation} \label{act_3}
    S[\sigma]= \int_{T[1]X} (\sigma^*)(\chi)(\dx) + (\sigma^*)(\cL)
\end{equation}
and restrict to sections that solve the background gPDE, namely,
\begin{equation}
    (\sigma^* \circ \pi_B^*) \circ \gamma + (\sigma^* \circ \pi_B^*) \circ k  = \dx \circ (\sigma^* \circ \pi_B^*)
\end{equation}
for some $k\in K_B$

Now we attempt the gauge tranformation~\eqref{gaugetransf2} in this more general situation:
$\delta_Y\sigma^{*}={\sigma}^{*}\circ \commut{Q}{Y}-\commut{\dx}{y}\circ \sigma^*$. As in the case of weak gPDE over background we also require $Y$ to be projectable and to preserve $\cK$. In particular, $Y$ preserves $\cI_B$. Applying this transformation to the action we get:
\begin{equation}\label{act_gauge_var}
    \delta S = \int \sigma^*(L_Y(\frac{1}{2}i_Qi_Q\omega + Q\cL)) + i_{E_\sigma}L_Y(i_Q\omega + d\cL) + \frac{1}{4}i^2_{E_\sigma} L_Y\omega\,,
\end{equation}
where $E_{\sigma} = \dx\circ\sigma^* - \sigma^* \circ Q$ is a vector field along the map $\sigma$ and $i_{E_\sigma}\alpha = \sigma^*(\alpha)(\dx) - \sigma^*(i_Q\alpha)$ for $\alpha \in \bigwedge^1(E)$. The explicit expression for $i^2_{E_\sigma}L_Y\omega$ is 
\begin{equation*}
    \half i^2_{E_\sigma}L_Y\omega = \sigma^*(L_Y\omega)(\dx, \dx) - 2\sigma^*(i_QL_Y\omega)(\dx) + \sigma^*(i_Q i_Q L_Y \omega)\,.
\end{equation*}
Details of the derivation of~\eqref{act_gauge_var} are given in the Appendix~\bref{AppA}.

It is easy to see the first term in \eqref{act_gauge_var} is zero due to conditions \eqref{compat2_theory}. Because 
$Y$ preserves $\cK$ one has $L_Y \cI_B \in \cI_B$ and  and hence $i_{E_{\sigma}}L_Y(i_Q\omega + d\cL) = i_{E_{\sigma}}\cI_B$. Furthermore, as we restricted ourselves to sections which project to solutions of the background system, we have $i_{E_{\sigma}} \cI_B = 0$. Finally, vanishing of the last term in \eqref{act_gauge_var} is an extra condition which is satisfied provided 
\begin{equation} \label{background-symm-extracond}
    L_Y \omega \in \cI_B\,.
\end{equation}

As we are going to see later, the above gauge transformations involve the gauge symmetries of $(B,\gamma,\cK_B)$, which are generally not captured by the BV formulation determined by $E$ pulled back to a fixed solution $S\subset B$.

\subsubsection{Example: scalar over  Riemmanian geometry}

We start by presenting the background weak gPDE $(B,\gamma, \cK_B)$  describing the Riemmanian geometry. For simplicity we restrict ourselves to the situation where the tangent bundle is trivial. 

The total space is $B = \mathfrak{g}[1] \times T[1]X$, where $\mathfrak{g}[1]$ is the shifted Poincare algebra parameterized by coordinates $\xi^a, \rho^{ab}$ and $X$ is the spacetime which we take four-dimensional for simplicity. The action of $\gamma$ is:
\begin{equation}
    \begin{gathered}
        \gamma \xi^a = -\rho^{a}{ }_k \xi^k\,, \\
        \gamma \rho^{ab} = -\rho^{a}{ }_k \rho^{kb}
    \end{gathered}
\end{equation}
and $\cK_B$ generated by vector field
\begin{equation} \label{K-Riemann}
\begin{gathered}
    Y_{ab}{ }^{cd} = \xi^a \xi^b \ddl{}{\rho^{cd}} \,,\\
    X^{ab}{ }_{cd} = \xi^a \xi^b \xi_c \ddl{}{\xi^d}\,.
\end{gathered}
\end{equation}
It is easy to see that it is involutive and $[Q,\cK_B] \in \cK_B$. Moreover, it is straightforward to check that the equations of motion do not restrict the curvature in the Lorentz sector $d\lorcon^{ab}+\lorcon^{a}{}_c\lorcon^{cb}$ and impose the zero torsion condition $de^a+\lorcon^a{}_{b} e^b=0$. Here, $\lorcon^{ab}=\sigma^*(\rho^{ab})$ and $e^a=\sigma^*(\xi^{a})$. More precisely, it is easy to check that $\cK_B$ does not affect the zero torsion equation arising in the $\xi$-sector, the equations in the $\rho$ sector has the form:
\begin{equation}
\dx\sigma^*(\rho^{ab}) + \sigma^*(\rho^a{}_k \rho^{kb})=\sigma^*{}R^{ef}_{cd}Y^{cd}_{ef}\rho^{ab}=
\sigma^*{}R^{ab}_{cd}\xi^c\xi^d\,,
\end{equation}
where $R^{ab}_{cd}$ denote arbitrary coefficients. Of course this equation just sets $\sigma^* R$ equal to Lorentz curvature and hence do not impose any conditions on $\lorcon^{ab}$. So we are indeed dealing with the weak gPDE description of the Riemannian geometry.

Now let us consider the following presymplectic gPDE over background $(E,Q, \omega) \rightarrow (B,\gamma)$ where $E = F \times B$ and $F$ is parameterized by coordinates $\phi, \phi_a$ of ghost degree 0. Define $Q$ on the fibers as
\begin{equation}
\begin{gathered}
    Q \phi = \xi^a \phi_a\,, \\ 
    Q \phi_a = - \rho_a{}^b\phi_b
\end{gathered}
\end{equation}
and 
\begin{equation}
    \omega = d({\xi}^{{(3)}}_a \phi^a)d(\phi)\,.
\end{equation}
Since $E$ is trivial as a bundle, there is a lift $\Tilde{\cK}$ of $\cK_B$ determined by the trivialisation and it is easy to see that $i_{\Tilde{\cK}}\omega = 0 $. We also have
\begin{equation}
    i_Q\omega =  \half d({\xi}^{(4)}\phi^k\phi_k) + \alpha\,,\qquad \alpha \in \cI_B\,,
\end{equation}
so that all condition for gPDE over background are satisfied. Introducing coordinates on the space of sections by $\sigma^*\phi = \phi(x)\,,\sigma^*\phi^a = \phi^a(x)$ and for the background gravity fields as $\lorcon^{ab}=\sigma^*(\rho^{ab})$, $e^a=\sigma^*(\xi^{a})$, the AKSZ-like action takes the form:
\begin{equation}
S[e,\lorcon,\phi,\phi_a]=\int_{T[1]X}(e^{(3)}_a \phi^a\dx \phi - \half e^{(4)} \phi^a \phi_a))\,,
\end{equation}
so it is indeed the action of the scalar field coupled to the generic Riemannian background. 

The standard gauge symmetries associated to the background is the local Lorentz symmetry and the diffeomorphisms. In the setup of Section \bref{sec:capturing} the former is generated by  $\commut{Q}{Y}$ with $Y_L=\epsilon^{ab}(x)\dl{\rho^{ab}}$ while the later by $Y=-{\bar\epsilon}^\mu(x)\dl{\theta^\mu}$ and both are compatible with the presymplectic structure. Note that in the later case the parameter is not vertical and hence the second term in the transformation law \eqref{gaugetransf2} gets the addition contribution from $y=\pi_*Y$. For instance, for the gauge transformation of frame field $e^a$ we have:
\begin{multline}
\delta_Y e^a =(\delta_Y\sigma^* )\xi^a=\sigma^*\commut{Q}{Y}\xi^a-\commut{\dx}{y} \sigma^* \xi^a =
\\
\dx(\iota_{\bar\epsilon} e^a) + \iota_{\bar\epsilon} (\dx e^a)=L_{\bar\epsilon} e^a
\,,
\end{multline}
where by some abuse of notations we identified functions on $T[1]X$ with differential forms on $X$ so that
$\iota_{\bar\epsilon} e$ is the contraction of $\bar\epsilon^\mu\dl{x^\mu}$ with $e^a_\mu dx^\mu$.  It is easy to see that the same holds for the remaining fields $\lorcon^{ab},\phi,\phi_a$ so that we get the standard action of the diffeomorphisms on our fields. 

Let us note that it is the standard fact, see e.g.\cite{Didenko:2014dwa}, that the naive gauge transformation of $e^a$, determined in our language by $Y=\epsilon^a(x)\dl{\xi^a}+\epsilon^{ab}(x)\dl{\rho^{ab}}$ still contains diffeomorphisms for $e^a$ if one restricts to configurations with with vanishing torsion, i.e. $\dx e^a+\lorcon^a_b e^b=0$. More precisely, taking $\epsilon^a=\iota_{\bar\epsilon}e^a$ and $\epsilon^{ab}
=\iota_{\bar\epsilon}\lorcon^{ab}$ one finds that $\delta_{\bar\epsilon} e^a=L_{\bar\epsilon}e^a+\iota_{\bar\epsilon} (de^a+\lorcon^a_b e^b)$. Note also that this gauge transformation of $\lorcon^{ab}$ coincides with $L_{\bar\epsilon}\lorcon^{ab}$ only if the Lorentz component of the curvature vanishes.

\subsection{Linearized systems}

Just like in the case of gPDEs discussed in Section~\bref{subsec:Linearized_systems} the linearization of a presymplectic gPDE naturally gives a presymplectic gPDE over background. However, the construction is slightly more involved. 

Consider a presymplectic gPDE $(E,Q,\omega,T[1]X)$.  As a total space of the associated system over background we take $VE \xrightarrow[]{\Pi} E$ - the vertical tangent bundle of $E$. As we discussed in  Section \bref{subsec:Linearized_systems} any projectable vector field on $E$ has a natural lift to $VE$. Applying this to $Q$ gives a vector field $\Tilde{Q}$ on $VE$ such that $\Pi_* \Tilde{Q}=Q$. If $x^b,\theta^a,\psi^C$ are local coordinates on $E$ then 
\begin{equation}
    \Tilde{Q} = Q + q_v\,, \qquad q_v = \phi^A \ddl{Q^B}{\psi^A} \ddl{}{\phi^B}\,.
\end{equation}
where $\phi^A$ are the induced coordinates on the fibres of $VE$.

In addition to the the case of gPDE, now we also need to define a suitable symplectic structure on $VE$, which can be thought of as a linearization of $\omega$. To do this we need a bare affine symmetric connection $\Gamma$ on $E$, which is assumed compatible with the bundle structure, i.e. covariant derivative of a vertical vector field along the vertical direction gives a vertical vector field. In fact the result does not depend on the choice of the connection up to a natural equivalence so that the procedure is natural in this sense. Given such a connection, consider vector field $V = \phi^A \nabla_A$ where $\nabla_A=\dl{\psi^A}-\Gamma_{AB}^C \phi^B\dl{\phi^C}$ 
and $\Gamma_{AB}^C$ are the coefficients of the connection in the vertical sector. It is easy to check that $\nabla_A$ is tangent to $VE \subset TE$ and hence $V$ can be considered a vector field on $VE$.

Let $\chi$ be a presymplectic potential for $\omega$, $\omega=d\chi$. 
We define a lift $\tilde \chi$ of $\chi$
to $VE$ as follows:
\begin{equation}
 \Tilde{\chi} = \half L_V L_V \Pi^* (\chi)\,.
\end{equation}
Let us recall, see Remark~\bref{rem:alt-def-chi}, that if we change $\tilde\chi$ to $\tilde\chi\to \chi+\alpha$, $\alpha \in \cI$ and simultaneously shift $\cL\to \cL-\iota_Q\alpha$ this gives an equivalent formulation of the system. In other words $\chi$ can be considered as an equivalence class modulo addition of $1$-forms from the ideal. We have the following:
\begin{prop}
The equivalence class of $\chi$ does not depend on the choice of the affine connection $\Gamma$.
\end{prop}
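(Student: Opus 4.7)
The plan is to exploit two structural facts: that the difference of two symmetric affine connections is a tensor, and that pullbacks $\Pi^{*}\chi$, $\Pi^{*}\omega$ have no leg in the fibre direction $d\phi$. Reducing everything to a computation with $W = V' - V$ (which turns out to be $\Pi$-vertical), the obstruction to $\Gamma$-independence will collapse to an ideal term plus an exact form, both of which are absorbed in the ambiguity of a presymplectic potential.

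First I would note that two symmetric affine connections on $E$ differ by a tensor field. Since only the vertical sector $\Gamma_{AB}^{C}$ enters the formula for $V$, setting $\Gamma' = \Gamma + T$ with $T_{AB}^{C}$ (graded-)symmetric in $A,B$ yields $V' = V + W$ with
\begin{equation*}
W = -\phi^{A} T_{AB}^{C} \phi^{B} \dl{\phi^{C}}\,,
\end{equation*}
a manifestly $\Pi$-vertical vector field on $VE$. Second, I would establish the key lemma: for any form $\eta$ on $E$ and any $\Pi$-vertical $W$ on $VE$, one has $\iota_{W}\Pi^{*}\eta = 0$ since $\Pi^{*}\eta$ contains no $d\phi$. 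Combined with $d\Pi^{*}\eta = \Pi^{*}d\eta$, the Cartan formula gives $L_{W}\Pi^{*}\eta = 0$. Applied to $\chi$ and $\omega$ this yields $L_{W}\Pi^{*}\chi = 0 = L_{W}\Pi^{*}\omega$.

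Third, I would expand the difference using these two vanishings:
\begin{equation*}
\tilde\chi' - \tilde\chi = \tfrac{1}{2}\bigl(L_{V}L_{W} + L_{W}L_{V} + L_{W}L_{W}\bigr)\Pi^{*}\chi = \tfrac{1}{2} L_{W} L_{V} \Pi^{*}\chi\,,
\end{equation*}
since $L_{V}L_{W}\Pi^{*}\chi = L_{V}(0) = 0$ and $L_{W}L_{W}\Pi^{*}\chi = 0$. Applying Cartan's formula to $L_{V}\Pi^{*}\chi = \iota_{V}\Pi^{*}\omega + d\iota_{V}\Pi^{*}\chi$, then using $[L_{W},\iota_{V}] = \iota_{[W,V]}$ together with $L_{W}\Pi^{*}\omega = 0$, I obtain the clean rewriting
\begin{equation*}
\tilde\chi' - \tilde\chi = \tfrac{1}{2}\,\iota_{[W,V]}\Pi^{*}\omega + \tfrac{1}{2}\, d\!\left(L_{W}\iota_{V}\Pi^{*}\chi\right).
\end{equation*}

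Finally, the first summand lies in $\cI_{E}$: indeed $\Pi^{*}\omega$ is a pullback, and contracting a pullback from $E$ with any vector field on $VE$ produces a 1-form in the ideal generated by $\Pi^{*}\alpha$. The second summand is the exterior differential of a function on $VE$, hence an exact 1-form; since $\tilde\chi$ is only a presymplectic potential, shifting it by $dg$ leaves $\tilde\omega = d\tilde\chi$ intact and changes the BV-AKSZ integrand only by a total derivative $\dx(\hat\sigma^{*}g)$, that is, by a boundary term in \eqref{act_back_restr}. Both corrections therefore belong to the natural ambiguity of $\tilde\chi$, proving the claim.

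The main obstacle is the treatment of the exact piece $d(L_{W}\iota_{V}\Pi^{*}\chi)$: the function $L_{W}\iota_{V}\Pi^{*}\chi$ depends nontrivially on the fibre coordinates $\phi^{A}$, so its differential acquires a genuine $d\phi$-component and is \emph{not} itself in $\cI_{E}$. One must therefore argue, as above, that the relevant equivalence class for the presymplectic potential is modulo $\cI_{E}$ \emph{and} exact 1-forms, justified by the boundary-term argument for the action. A cross-check is that $\tilde\omega' - \tilde\omega = \tfrac{1}{2} d\,\iota_{[W,V]}\Pi^{*}\omega$ lies in $\cI_{E}$ on the nose, since the ideal is closed under $d$, reflecting the fact that the genuine geometric invariant is $\tilde\omega$.
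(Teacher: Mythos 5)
Your proof is correct, and it takes a genuinely different route from the paper's. The paper argues in coordinates: it expands $L_VL_V\Pi^*\chi$ explicitly, integrates by parts (``using the fact that $\chi$ is defined modulo $d$-exact terms''), and observes that the resulting expression is linear in $\Gamma$ with the $\Gamma$-dependent part equal to $-2\phi^N\phi^M\Gamma^A{}_{NM}\,d\psi^B\,\omega_{AB}$, so that the variation under $\Gamma\to\Gamma+R$ is manifestly proportional to $d\psi^B\omega_{AB}=\pm\,i_{\partial_A}\omega$ and hence lies in $\cI_{\cK}$. You instead encode the change of connection in the $\Pi$-vertical vector field $W=V'-V$, prove the invariant lemma $L_W\Pi^*\eta=0$, and reduce the difference to $\tfrac12\iota_{[W,V]}\Pi^*\omega+\tfrac12 d\bigl(L_W\iota_V\Pi^*\chi\bigr)$ by pure Cartan calculus; this is coordinate-free and makes structurally transparent why only the vertical sector of $\Gamma$ can matter. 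Two points deserve sharpening. First, the relevant ideal is not the one generated by \emph{all} pullbacks $\Pi^*\alpha$, but $\cI_{\cK}$, generated by pullbacks of 1-forms annihilated by the vertical kernel $\cK_\omega$; your term $\iota_{[W,V]}\Pi^*\omega$ does land there, because only the $\partial_{\psi^A}$-components of $[W,V]$ contribute and each produces $\Pi^*(\iota_{\partial_A}\omega)$, which is killed by $\cK_\omega$ since $i_k\iota_{\partial_A}\omega=-\iota_{\partial_A}i_k\omega=0$ --- but this check should be stated rather than left implicit. Second, your concern about the exact piece is resolved exactly as you propose: the paper's own proof discards $d$-exact terms at the intermediate step, and such shifts of $\tilde\chi$ leave $\tilde\omega$ unchanged while altering the AKSZ integrand by a total derivative, so the equivalence class is indeed taken modulo $\cI_{\cK}$ \emph{and} exact forms; your explicit boundary-term justification is, if anything, more careful than the paper's. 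The apparent numerical mismatch between your ideal term and the paper's ($\tfrac12$ versus $2$) is harmless: the split into ideal part plus exact part is not unique, and the two answers differ by an exact form together with the overall factor of $\tfrac12$ that the appendix omits.
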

Recall that $\Gamma$ is assumed to be symmetric and compatible with the bundle structure of $E\to T[1]X$.  The proof is relegated to Appendix~\ref{App:Linearization}.
\begin{prop}
Let $\tilde\omega=d\tilde\chi$ and $\cK_\omega$ denotes a disribution on $E$ generated by vertical vector fields $k$ such that $i_k\omega\in \cI$. Then 
    $(VE,\Tilde{Q}, \Tilde{\omega}) \xrightarrow{\Pi} (E,Q, \cK_\omega)$ is a presymplectic gPDE over background.
\end{prop}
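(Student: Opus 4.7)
The plan is to verify each condition of Definition~\bref{def:B-presymp} for the tuple $(VE,\tilde Q, \tilde\omega) \xrightarrow{\Pi} (E,Q,\cK_\omega, T[1]X)$ and to construct along the way a natural Lagrangian $\tilde \cL$. First I would check that the base $(E,Q,\cK_\omega,T[1]X)$ is a weak gPDE: $\cK_\omega$ is $\pi$-vertical by definition; involutivity follows since for $k,k'\in\cK_\omega$ one has $L_k\omega=d\,i_k\omega\in\cI$ (using $d\omega=0$), whence $i_{[k,k']}\omega\in\cI$ by a standard graded Cartan-calculus identity together with the observation that $i_{k}$ and $L_k$ preserve $\cI$ for $\pi$-vertical $k$; the invariance $L_Q\cK_\omega\subseteq\cK_\omega$ is analogous, using $L_Q\omega=d\,i_Q\omega\in\cI$; and $Q^2\in\cK_\omega$ is exactly the content of the Remark following Definition~\bref{def:B-presymp} applied to $\cI_B=\cI$. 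The almost-$Q$ bundle structure $(VE,\tilde Q)\xrightarrow{\Pi}(E,Q)$ is immediate from the tangent lift construction.

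Next, I would define $\tilde\cL := \half L_V L_V\Pi^*\cL$, in parallel with $\tilde\chi=\half L_V L_V\Pi^*\chi$, and derive the relations~\eqref{compat2_theory} for the pair $(\tilde\omega,\tilde\cL)$ from the original ones by applying the operator $\half L_V L_V$ after pullback along $\Pi^*$. Three facts make this work: $[L_V,d]=0$, giving $\tilde\omega=d\tilde\chi=\half L_V L_V\Pi^*\omega$ and $\half L_V L_V\,\Pi^*(d\cL)=d\tilde\cL$; the vector field $V$ is $(\pi\circ\Pi)$-vertical, so that $L_V$ preserves the ideal $\Pi^*\cI$ and hence maps it into $\cI_E$ (the ideal on $VE$ generated by $\Pi^*\alpha$ for $\alpha\in\bigwedge^1(E)$ annihilating $\cK_\omega$); and crucially $[\tilde Q, V]=0$, which permits commuting $L_V$ past $i_{\tilde Q}$. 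Granting the last identity, applying $\half L_V L_V$ to the pullback of $i_Q\omega+d\cL\in\cI$ and of $\half i_Q i_Q\omega+Q\cL=0$ yields $i_{\tilde Q}\tilde\omega+d\tilde\cL\in\cI_E$ and $\half i_{\tilde Q}i_{\tilde Q}\tilde\omega+\tilde Q\tilde\cL=0$; closedness of $\tilde\omega$ and the degree count are immediate. For the final distribution condition, given $k_0\in\cK_\omega$ its canonical tangent lift $\tilde k_0$ to $VE$ is projectable with $\Pi_*\tilde k_0=k_0$, and the same commutation argument $[V,\tilde k_0]=0$ gives $i_{\tilde k_0}\tilde\omega=\half L_V L_V\Pi^*(i_{k_0}\omega)\in\cI_E$.

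The main obstacle is the identity $[\tilde Q, V]=0$, together with its analogue $[V,\tilde k_0]=0$ for the complete lift of an arbitrary projectable vector field. This is a local but delicate coordinate computation in which the symmetry of the bare connection $\Gamma$ and the naturality of the tangent-lift construction enter essentially. Once this is established, the remainder of the verification reduces to a bookkeeping exercise of applying $\half L_V L_V$ to the defining relations of the original presymplectic gPDE. The independence of the equivalence class of $\tilde\chi$ from the choice of $\Gamma$, which is the content of the preceding proposition, then guarantees that the resulting presymplectic gPDE over background is geometrically well-defined.
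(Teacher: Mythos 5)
Your overall strategy coincides with the paper's: take $\tilde\cL=\half L_VL_V\Pi^*\cL$ and push the operator $\half L_VL_V\Pi^*$ through the defining relations of $(E,Q,\omega)$. However, the identity $[\tilde Q,V]=0$ on which you hang the whole argument is false in general. Already for the flat connection $\Gamma=0$, so that $V=\phi^A\dl{\psi^A}$, a direct computation gives $\commut{\tilde Q}{V}=-\phi^B\phi^C\,\ddl{q^A}{\psi^C\,\partial\psi^B}\,\dl{\phi^A}$ (schematically, the second $\psi$-derivative of $Q$ contracted twice with $\phi$), which vanishes only when $Q$ is at most linear in the fibre coordinates. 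The same applies to your claim $\commut{V}{\tilde k_0}=0$. What is true, and what the paper actually uses, is only that $\commut{\tilde Q}{V}$ is $\Pi$-vertical, so that $i_{\commut{\tilde Q}{V}}\Pi^*\omega=0$ because $\Pi^*\omega$ is pulled back from $E$.

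This weaker statement is not enough to simply commute $L_V$ past $i_{\tilde Q}$ twice: the double commutations leave residual terms $i_{\commut{\commut{\tilde Q}{V}}{V}}\Pi^*\omega$ and $L_{\commut{\commut{\tilde Q}{V}}{V}}\Pi^*\cL$, since $\commut{\commut{\tilde Q}{V}}{V}$ is no longer $\Pi$-vertical. Handling these is the real content of the proof. For the Hamiltonian condition the paper shows $i_{\commut{\commut{\tilde Q}{V}}{V}}\Pi^*\omega\in\cI_\cK$ (its horizontal part is $\pi$-vertical on $E$, and $i_k i_{\partial_{\psi^A}}\omega=\pm i_{\partial_{\psi^A}}i_k\omega=0$ for $k\in\cK_\omega$). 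For the master equation the two residual terms, one coming from $\half i_{\tilde Q}i_{\tilde Q}\tilde\omega$ and one from $\tilde Q\tilde\cL$, are both equal to $\pm\half L_{\commut{\commut{\tilde Q}{V}}{V}}\Pi^*\cL$ and cancel against each other. Your proposal, by assuming these commutators vanish, skips precisely the cancellation that makes the statement true; if you attempted the ``delicate coordinate computation'' you flag as the main obstacle, you would find the identity fails rather than being merely hard to verify.
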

\begin{proof} 
As explained before, since $\cK_{\omega}$ is vertical there is a natural lift $\Tilde{\cK}$ of $\cK_\omega$. Namely, $\Tilde{\cK}$ is generated by the natural lifts $\tilde k$ of  vector fields $k \in \cK_\omega$. Note that $\commut{V}{\tilde k}$ is vertical on $\tilde E$.  The statement ammounts to checking that the defining relations are satisfied, i.e. 
\begin{equation}
    \begin{gathered}
        i_{\Tilde{Q}}\Tilde{\omega} + d\Tilde{\cL} \in \cI_{\cK}\,, \\
        \half i_{\Tilde{Q}}i_{\Tilde{Q}}\Tilde{\omega} + \Tilde{Q}\Tilde{\cL} = 0 \,,
        \\
        i_{\Tilde{k}}\Tilde{\omega} \in \cI_{\cK}
\end{gathered}
\end{equation}
where $\cI_{\cK} = <\Pi^*\alpha>, \quad \alpha\in \bigwedge^1E: i_k\alpha = 0, \quad \forall k\in \cK_\omega $.

In order to check the first condition we calculate:
\begin{multline}
   \quad i_{\Tilde{Q}}\Tilde{\omega} = 
    \half i_{\Tilde{Q}}L_VL_V \Pi^* \omega = \\ \half(L_V L_V i_{\Tilde{Q}} \Pi^*\omega +(i_{[\Tilde{Q},V]})L_V \Pi^*\omega +  L_V i_{[\Tilde{Q},V]}\Pi^*\omega) \,.\quad 
\end{multline}
The first term equals $d(L_VL_V\Pi^*(H)) + \cI_{\cK}$ because $i_Q \omega = dH + \alpha, \alpha \in \cI$ and $\alpha$ is trivially annihilated by $\cK_\omega$ since $\cK_\omega$ is vertical.  
The last term is zero because $[\Tilde{Q},V]$ is $\Pi$-vertical and the second equals to $i_{[[\Tilde{Q},V],V]}\Pi^*\omega$ and hence belongs  $\cI_K$. 

That $\half i_{\Tilde{Q}}i_{\Tilde{Q}}\Tilde{\omega} + \Tilde{Q}\Tilde{\cL} = 0$ is a matter of direct check. Namely:
\begin{multline}
    i_{\Tilde{Q}}i_{\Tilde{Q}}\Tilde{\omega}  = \half i_{\Tilde{Q}}i_{\Tilde{Q}}L_V L_V \Pi^*\omega = \\= \half i_{\Tilde{Q}}( L_V L_V i_{\Tilde{Q}} \Pi^*\omega + i_{[[\Tilde{Q},V],V]}\Pi^*\omega) = \\ =  \half (L_V i_{\Tilde{Q}} L_V i_{\Tilde{Q}} \Pi^*\omega -i_{[[\Tilde{Q},V]} L_V i_{\Tilde{Q}} \Pi^*\omega + i_{[[\Tilde{Q},V],V]} i_{\Tilde{Q}}\Pi^*\omega ) = \\ = \half(L_V L_V i_{\Tilde{Q}}i_{\Tilde{Q}} \Pi^*\omega  + 2i_{[[\Tilde{Q},V],V]} i_{\Tilde{Q}}\Pi^*\omega )  = \\ = \half (L_V L_V i_{\Tilde{Q}}i_{\Tilde{Q}} \Pi^*\omega  - 2L_{[[\Tilde{Q},V],V]}\Pi^*\cL) \,,
\end{multline}
where we again extensively used the fact that  $i_{[\Tilde{Q},V]}\Pi^*\omega$ is zero because $[\Tilde{Q},V]$ is $\Pi$ vertical. We also have 
\begin{equation}
    \tilde{Q}\tilde{\cL} = \half L_{\Tilde{Q}}L_V L_V \Pi^*\cL = \half (L_V L_V L_{\Tilde{Q}}\Pi^*\cL + L_{[[\Tilde{Q},V],V]}\Pi^*\cL) \,.
\end{equation}

Finally, the last condition  $i_{\Tilde{k}}\Tilde{\omega} \in \cI_{\cK}$ holds because
\begin{equation}
    i_{\Tilde{k}}\Tilde{\omega} = i_{[[\Tilde{k},V],V]}\Pi^*\omega \in \cI_{\cK}\,.
\end{equation}
\end{proof}

\subsubsection{Example: linearized gravity } \label{Sec:lin_grav_presymp}
As an example we employ the above procedure to derive the usual linearized formulation of Palatini-Cartan-Weyl formulation of Einstein gravity. We start with a presymplectic BV-AKSZ formulation of gravity recalled in Example~\bref{example:presymp-grav} and use the notations introduced there. It is useful to identify $VE$  as  $\algg[1]\times \algg[1] \times T[1]X$ so that the fiber of $VE$ is another copy of $\algg[1]$ whose coordinates associated to the same basis are denoted as
\begin{equation*}
    h^a, f^{ab}\,.
\end{equation*}
The vector field $V$ takes the form $V = h^a\ddl{}{\xi^a}+ f^{ab}\ddl{}{\rho^{ab}}$ and the lifted differential $\tilde Q$ acts as:
\begin{equation}
    \tilde Q h^a = -\rho^{ak} h_k - f^{ab}\xi_b \,, \qquad 
    \tilde Q f^{ab} = -f^{ak}\rho_k{ }^b - \rho^{ak}f_k{ }^{b}\,,
\end{equation}
while the linear presymplectic structure on $VE$ is given by:
\begin{equation}
    \tilde\omega = \half L_VL_V\Pi^*\omega = \epsilon_{abcd}(d(h^{a} \xi^b)d f^{cd} + \half d(h^a h^b) d\rho^{cd})\,.
\end{equation}
Although $L_{\tilde Q} \tilde \omega \in \tilde \cI$ holds by construction it can be easily checked directly.
The same applies to the premaster equation $\frac{1}{2}i_Q i_Q \omega + Q\cL = 0$ as well. The AKSZ-like action for the resulting system over background takes the form: 
\begin{equation}
\begin{gathered}
    S[\sigma] = \int \epsilon_{abcd}(\bar{h}^{a} e^b (\dx \bar{f}^{cd} + 2\lorcon^c{ }_k \bar{f}^{kd})  + \half e^a e^b \bar{f}^{c}{ }_k \bar{f}^{kd} + \\ + \half \bar{h}^a \bar{h}^b (d\lorcon^{cd} + \lorcon^c{ }_k \lorcon^{kd}))\,.
\end{gathered}
\end{equation}
where $\lorcon_{ab} = \sigma^*\rho_{ab}\,, e^a = \sigma^* \xi^a$ and $\bar{f}^{ab} = \sigma^* f^{ab}\,, \bar{h}^a = \sigma^* h^a$.
This is easily seen to coincide with the linearization of the Palatini-Cartan-Weyl action, see e.g. \cite{Vasiliev:1980as, Zinoviev:2003ix}, and it is consistent on any Einstein background $e^a, \lorcon^{ab}$.

\section{Gauging global symmetries}

From the perspective of gPDEs over background, the global symmetries associated to the background arise as gauge symmetries preserving a given background solution. In their turn, such gauge symmetries can be understood as a result of gauging of the respective global symmetries of a system in a fixed background. As we saw in Section~\bref{subsec-bgrd-gPDE}, in the non-Lagrangian case any subalgebra of global symmetries can be gauged in a straightforward way, resulting in the associated gPDE over background. In so doing, each symmetry gives rise to the associated background 1-form field. In this Section we study gauging in a more restrictive setup of Lagrangian system by employing presymplectic gPDE approach.

\subsection{Global symmetries and Noether theorem}
Let us first see how global symmetries can be described in the presymplectic gPDE approach. As far as generic symmetries are concerned we limit ourselves to gauge PDEs equipped with the compatible presymplectic structure.

\begin{definition} \label{def:comp-sym}\cite{Grigoriev:2023kkk}
    Given a gPDE $(E,Q,T[1]X)$ equipped with a compatible presymplectic structure $\omega$, vertical $V$ is called a compatible symmetry if

    i) $[Q,V] =0$, 
    
    ii) $\cL_{V}\omega + L_Q d \alpha  \in \cI$
    for some 1-form $\alpha$ on $E$.

\end{definition}
In other words we talk about symmetries of the underlying gPDE that are also compatible with the presymplectic structure. Let us stress that such $V$ does not necessary define a symmetry of the corresponding  AKSZ-like action. At this stage we do not require $\gh{V}=0$ not to exclude the lower-degree symmetries.

Compatible symmetries defined by \bref{def:comp-sym} produce currents which are conserved on the solutions of the underlying gPDE. Indeed, it follows from equation ii) that 
\begin{equation} \label{current-defining}
    i_V\omega + dJ_V - L_Q \alpha + \cI = 0
\end{equation}
for some $J_V\in \cC^{\infty}(E), \quad \gh{J_V} = \gh{V}+n-1$.  $J_V$ is the current associated to the compatible symmetry $V$. Applying $L_Q$ to both sides of~\eqref{current-defining} one finds $d(QJ_V) \in \cI$ and hence $QJ_V=\pi^*(h)$ for some $h \in \cC^\infty(T[1]X)$ satisfying $\dx h=0$.  Given a compatible symmetry $V$, the associated current is far from being unique. Indeed, it is easy to check that equation \eqref{current-defining} enjoys the following symmetry 
\begin{equation}
    J_V\to  J_V + Qg + \pi^*(f)\,, \qquad \alpha \to \alpha - dg
\end{equation}
where $g \in \cC^\infty(E)$ and $f\in  \cC^\infty(T[1]X)$ and hence $J_V$ is defined modulo such an ambiguity. In particular, if $h$ in $QJ_V=\pi^*(h)$ is $\dx$-exact, the ambiguity can be  used to set $QJ_V=0$. This equation implies conservation on-shell. Indeed, if $\sigma_s$ is a solution of $(E,Q)$ then $d_X\sigma_s^*J_V = \sigma_s^*Q J_V = 0$. The additional ambiguity $J_V\to J_V+u$ is described by solutions to $du+L_Q\beta=0$.

Two symmetries related by $V \to V+\commut{Q}{Y}$ with vertical $Y$ are to be considered equivalent. Indeed, the transformed $V$ is still a symmetry and, moreover, its associated current lies in the same equivalence class because
\begin{equation}
\begin{gathered}
    i_{[Q,Y]}\omega = L_Q i_Y \omega - i_YL_Q\omega = L_Q i_Y \omega + \cI\,,
\end{gathered}
\end{equation}
The term $L_Qi_Y\omega$ can be absorbed by $\alpha \to \alpha + i_Y\omega$. 
It follows, \eqref{current-defining} defines a map from equivalence classes of symmetries to equivalence classes of conserved currents. This gives a version of a Noether theorem for gPDEs equipped with compatible presymplectic structures. Note that symmetries and conservation laws are naturally modules over the closed forms on $X$. For instance if $V$ is a symmetry and $\alpha\in \cC^{\infty}(T[1]X)$, $\dx\alpha=0$ then $\pi^*(\alpha)V$ is also a symmetry. Passing to the equivalence classes makes them modules over the de Rham cohomology of $X$.

\subsection{Gauging internal symmetries}\label{sec:internal-symmetries}
Gauging the symmetries originating from space-time transformations generally affects the presymplectic structure and will be discussed separately. Now we concentrate on symmetries that leave the presymplectic structure intact and define them for generic presymplectic gPDEs:
\begin{definition}\label{def:internal-sym}
    Let $(E,Q,\omega,T[1]X)$ be a presymplectic gPDE and $\cK$ be the vertical kernel of $\omega$.  An internal symmetry is a vertical vector field $V$ on $E$ such that
    \begin{equation} \label{comp_symmetry}
        L_V\omega \in  \cI \,, \qquad 
            [V,Q] \in \cK\,.
    \end{equation}
\end{definition}
The infinitesimal symmetry transformation of a section $\sigma$, determined by $V$ is defined as $\delta \sigma^* = \sigma^* \circ L_{V}$. Recall that we do not require $V$ to have vanishing ghost number. However, $V$ of a nonvanishing ghost degree does not determine a nontrivial transformation of sections.  

In the setup of presymplectic gPDEs one can also introduce a concept of internal conserved currents. More precisely, function $J$ on $E$ is called an internal conserved current if $QJ=0$ and $\cK J=0$. Note that we do not restrict the ghost degree to be $n-1$ and hence allow for currents of generic degree. Unless it leads to confusions in what follows we often omit the adjective ``internal''. If $\sigma$ is a solution to the Euler-Lagrange equations of the AKSZ-like action for $(E,Q,\omega,T[1]X)$ then $\dx\sigma^*(J)=0$, i.e. $J$ defines a conserved current of the underlying Lagrangian system. Indeed, EL equations can be written as $d_X\sigma^* = \sigma^*(Q+k)$ for some $k\in \cK$ and hence $d_X\sigma_s^*J = 0$ thanks to $\cK J=0$. Moreover, if $J=QN$ and $\cK N=0$ then its associated charge is trivial in the sense that $\sigma^*(J)=\dx(\sigma^*N)$ and hence $Q$-exact currents are to be considered trivial.

In the limited setup of presymplectic gPDEs and their internal symmetries and currents, one can define a version of the Noether map. To simplify the discussion let us restrict ourselves to the case where the de Rham cohomology of $X$ and the 1st cohomology of $E$ are trivial. These conditions can be always meet by resorting to a local analysis. If $V$ is an internal symmetry then equation~\eqref{comp_symmetry} implies 
\begin{equation} \label{conseq-cur}
    i_V\omega + dJ_V + \cI = 0\,,
\end{equation}
for some $J_V \in \cC^\infty(E)$. It turns out that $J_V$ can be chosen in such a way that it is conserved on solutions of the Lagrangian system determined by $(E,Q,\omega,T[1]X)$. Indeed, applying $L_Q$ to both sides of~\eqref{conseq-cur} one finds $d(L_QJ_V)\in\cI$ so that by adding a term of the form $\pi^*f$ one can achieve $QJ_V=0$. Moreover, applying $k\in\cK$ to both sides of~\eqref{conseq-cur} and using $\cK\pi^*f=0$ one finds that
$\cK J_V=0$.

It is natural to consider two internal symmetries equivalent, if their difference is of the form $[Q,Y]+Z$ with $L_Y\omega \in \cI$ and $Z\in \cK$. While $Z$ clearly does not contribute to the equation \eqref{conseq-cur} the first term does. However, adding this term can be compensated by  $J \xrightarrow{} J + QN_Y$. Indeed, $L_Y\omega \in \cI$ implies that $i_Y\omega + dN_Y \in \cI$ which in turn defines $N_Y$ up to addition of functions of the form $\pi^*f$, resulting in $i_{[Q,Y]}\omega +dQN_Y\in \cI$. 

As we discussed above, there is also a natural equivalence relation on conserved currents such that two currents are considered equivalent if they differ by a $Q$-exact term. The above discussion implies that the map from internal symmetries to conserved currents, determined by \eqref{conseq-cur} is well-defined on equivalence classes. This can be regarded as a somewhat restricted version of the usual Noether map. Indeed, we have seen that internal conserved currents are conserved on solution to the underlying Lagrangian system. Likewise, the internal symmetries define symmetries of the corresponding action:
\begin{prop}
    An internal symmetry defines a global symmetry of the presymplectic action. 
\end{prop}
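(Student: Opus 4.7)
The plan is to compute the variation of $S[\sigma]=\int_{T[1]X}[\sigma^{*}(\chi)(\dx)+\sigma^{*}(\cL)]$ under the internal-symmetry transformation $\delta\sigma^{*}=\sigma^{*}\circ L_{V}$ (with $V$ vertical, so $v=\pi_{*}V=0$) and reduce the integrand to a $\dx$-exact function on $T[1]X$ which integrates to zero by Stokes. The starting point is
\begin{equation*}
\delta S \;=\; \int_{T[1]X}\bigl[\sigma^{*}(L_{V}\chi)(\dx)+\sigma^{*}(V\cL)\bigr].
\end{equation*}
Cartan's identity $L_{V}\chi=i_{V}\omega+d(i_{V}\chi)$ together with $\int \dx[\sigma^{*}(i_{V}\chi)]=0$ eliminates the $d(i_{V}\chi)$ piece, leaving $\delta S=\int[\sigma^{*}(i_{V}\omega)(\dx)+\sigma^{*}(V\cL)]$.

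The second step uses the defining property $i_{V}\omega+dJ_{V}\in\cI$ (available because $L_{V}\omega\in\cI$) to write $i_{V}\omega=-dJ_{V}+\alpha$ with $\alpha$ a $1$-form in $\cI$. A small but crucial observation is the coordinate identity $\sigma^{*}\alpha(\dx)=\sigma^{*}(\iota_{Q}\alpha)$ for any such $\alpha$: writing $\alpha=g_{\mu}\pi^{*}dx^{\mu}+h_{\mu}\pi^{*}d\theta^{\mu}$, both sides equal $\sigma^{*}(g_{\mu})\theta^{\mu}$. After dropping one more Stokes-exact term $\dx[\sigma^{*}J_{V}]$ this reduces $\delta S$ to $\int\sigma^{*}(\iota_{Q}\alpha+V\cL)$.

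The third step uses the two compatibility conditions. Since $V$ is vertical, $\iota_{V}$ annihilates any $1$-form in $\cI$ (as $\iota_{V}\pi^{*}\beta=\pi^{*}(\iota_{\pi_{*}V}\beta)=0$); combined with $\iota_{Q}\omega+d\cL\in\cI$ this gives $\iota_{V}\iota_{Q}\omega=-V\cL$, and the Koszul antisymmetry $\iota_{Q}\iota_{V}\omega=-\iota_{V}\iota_{Q}\omega$ (valid for $|V|=0$, $|Q|=1$) yields $\iota_{Q}\iota_{V}\omega=V\cL$. Substituting $i_{V}\omega=-dJ_{V}+\alpha$ on the left hand side gives the identity $V\cL+QJ_{V}=\iota_{Q}\alpha$, so the integrand of $\delta S$ becomes $2V\cL+QJ_{V}$ (equivalently $2\iota_{Q}\alpha-QJ_{V}$). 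Finally, applying $L_{Q}$ to $i_{V}\omega+dJ_{V}=\alpha$ and using $[V,Q]\in\cK$, $L_{Q}\cI\subset\cI$ and $\iota_{V}\cI\subset\cI$ one deduces $d(QJ_{V})\in\cI$, hence $QJ_{V}=\pi^{*}f$ for some $f\in\cC^{\infty}(T[1]X)$. As already noted in the paragraph preceding the proposition, the $J_{V}\mapsto J_{V}+\pi^{*}g$ ambiguity allows one to shift $f$ by $\dx g$ and choose $J_{V}$ so that $QJ_{V}=0$; with this representative the integrand of $\delta S$ becomes $\dx$-exact on $T[1]X$, and the variation vanishes up to boundary terms, which is the standard sense of a global symmetry of the action.

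The main technical obstacle is the careful bookkeeping of Koszul signs, in particular the identity $\iota_{Q}\iota_{V}\omega=-\iota_{V}\iota_{Q}\omega$ and the Lie-bracket gymnastics needed to derive $d(QJ_{V})\in\cI$; once these are in place, everything else is essentially algebra. An alternative, more invariant route -- which I would mention but not pursue in detail -- is to observe that $(0,J_{V})$ is a projectable Hamiltonian pair in the sense of Definition~\ref{hamiltonian-def} and that the calculation above is nothing but the statement that $\{(0,J_{V}),(\dx,\cL)\}=(0,QJ_{V})$ vanishes as a pair once $QJ_{V}$ is set to zero, which is the Hamiltonian-pair version of the Noether identity.
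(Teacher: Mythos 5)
Your overall strategy is the same as the paper's (decompose $L_V\chi$ via Cartan's formula, use $i_V\omega+dJ_V\in\cI$, note that $\sigma^*(\alpha)(\dx)=\sigma^*(i_Q\alpha)$ for $\alpha\in\cI$, and normalise $J_V$ so that $QJ_V=0$), but there is a genuine error in the sign of the graded commutation of contractions, and it breaks your final step. For $\gh{V}=0$ and $\gh{Q}=1$ the operators $i_V$ and $i_Q$ have total degrees $-1$ and $0$ respectively, so they \emph{commute}: $i_Qi_V\omega=+\,i_Vi_Q\omega$, not $-\,i_Vi_Q\omega$ as you assert. (Check: $[i_X,i_Y]=i_Xi_Y-(-1)^{(|X|+1)(|Y|+1)}i_Yi_X=0$, and $(|V|+1)(|Q|+1)=2$.) With the correct sign one gets $i_Qi_V\omega=i_Vi_Q\omega=-V\cL$, hence $i_Q\alpha=QJ_V-V\cL$, and your reduced integrand $i_Q\alpha+V\cL$ equals $QJ_V$, which vanishes for the right representative of $J_V$ — this is exactly the paper's computation.

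As written, however, your identity $i_Q\alpha=V\cL+QJ_V$ leads to the integrand $2V\cL+QJ_V$, and your concluding sentence — that after setting $QJ_V=0$ ``the integrand becomes $\dx$-exact'' — is simply asserted: $2V\cL=-2\,i_Vi_Q\omega$ is a generic degree-$n$ function on $E$ and is neither $\dx$-exact nor in $\cI$ in general, so the variation would not vanish. The error is purely a Koszul-sign slip and is fixable, but the proof does not close in its current form. (Your closing remark about the Hamiltonian-pair bracket is, ironically, consistent with the \emph{correct} sign: Proposition~\ref{bracket-for-pairs-def} gives $\{(0,J_V),(\dx,\cL)\}=(0,\,i_Vi_Q\omega+V\cL+QJ_V)=(0,QJ_V)$, which is the statement you actually need.)
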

\begin{proof}
    The statement is nontrivial only in the case $\gh{V}=0$.   We have
\begin{equation}
    L_{V} \chi = i_{V} d \chi + d(i_{V}\chi) = d(i_{V}\chi - J_V) - \beta 
\end{equation}
for some $J_V\in \cC^\infty(E)$ and $\beta \in \cI$. Recall that under our assumptions we can assume $QJ_V=0$. Furthermore, 
\begin{equation}
\begin{gathered}
    L_{V} \cL = i_{V}d\cL = -i_{V} i_Q \omega - i_{V}(\cI) = i_Q(dJ_V + \beta ) = QJ + i_Q\beta\,.
\end{gathered}
\end{equation}
The variation of the integrand $(\sigma^*(\chi))(\dx)+\sigma^*(\cL)$ takes the form
\begin{equation} 
\begin{gathered}
    \sigma^*(d(i_{V}\chi - J_V))(\dx)  - \sigma^*(\beta)(\dx) + \sigma^*(i_Q\beta) \,.
\end{gathered}
\end{equation} 
The first term is a total derivative while the other two terms cancel each other. 
\end{proof}
It is important to stress that internal conserved currents do not generally exhaust all the conserved currents of the underlying Lagrangian system. The same also applies to internal symmetries. Nevertheless, as we have just seen they are related by the Noether map. Let us also note that the above discussion can be generalised to the case where de Rham cohomology of $X$ is not empty. In this case, however, an internal symmetry generally defines a quantity which is conserved only locally on $X$.

We now turn to the gauging procedure:
\begin{theorem} \label{thm:gauging}
    Let $(E,Q,\omega,T[1]X)$ be a presymplectic gPDE  and $V_\alpha$ be its internal symmetries. Let in addition  
    $V_\alpha$ form a representation of a (graded) Lie algebra $\mathfrak{g}$. Then  $\Tilde{E} = E \times \mathfrak{g}[1]$ is naturally a background presymplectic gPDE (possibly anomalous) which we call the gauging of the global symmetries $V_\alpha$.
\end{theorem}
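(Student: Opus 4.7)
The plan is to equip $\tilde E = E \times \mathfrak{g}[1]$ with the data $(\tilde Q,\tilde\omega,\tilde\cL)$ of Definition~\bref{def:B-presymp}, over the background
\begin{equation*}
B = \mathfrak{g}[1]\times T[1]X, \qquad \gamma = \dx + \dg^0, \qquad \cK_B = 0,
\end{equation*}
where $\dg^0 = -\tfrac{1}{2}c^\alpha c^\beta U^\gamma_{\alpha\beta}\dl{c^\gamma}$ is the Chevalley-Eilenberg differential on $\mathfrak{g}[1]$. The projection $\pi_B:\tilde E\to B$ forgets the $E$-factor; the ideal $\cI_B$ is generated by $dx^a,d\theta^a,dc^\alpha$ and contains $\cI$. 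On $\tilde E$ I set
\begin{equation*}
\tilde Q = Q + \dg, \qquad \dg = c^\alpha V_\alpha - \tfrac{1}{2}c^\alpha c^\beta U^\gamma_{\alpha\beta}\dl{c^\gamma},
\end{equation*}
so that $\tilde Q$ projects to $\gamma$ (since $V_\alpha$ is vertical on $E$), and take $\tilde\omega = \omega$ pulled back to $\tilde E$. The distribution $\cK$ of Definition~\bref{def:B-presymp} can be taken as the $\pi_B$-vertical kernel of $\tilde\omega$; the requirement $\pi_B{}_*\cK = \cK_B$ is automatic since both sides vanish.

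Next, I build $\tilde\cL$ from Noether currents. By the discussion following Definition~\bref{def:internal-sym}, each $V_\alpha$ admits (at least locally) a current $J_\alpha\in C^\infty(E)$ with $i_{V_\alpha}\omega + dJ_\alpha \in \cI$, and the current ambiguities can be used to achieve $QJ_\alpha=0$. I then propose the ansatz
\begin{equation*}
\tilde\cL = \cL + c^\alpha J_\alpha + \tfrac{1}{2}c^\alpha c^\beta H_{\alpha\beta} + \cdots,
\end{equation*}
with coefficients $H_{\alpha\beta},\ldots\in C^\infty(E)$ to be fixed by the master equation. The Hamiltonian-like condition $i_{\tilde Q}\tilde\omega + d\tilde\cL \in \cI_B$ is immediate order by order in $c$: the $c^0$-piece reduces to $i_Q\omega+d\cL\in\cI$, while the $c^1$-piece combines $c^\alpha(i_{V_\alpha}\omega + dJ_\alpha)\in\cI$ with a term proportional to $dc^\alpha J_\alpha \in \cI_B$, and higher orders close analogously.

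The heart of the argument is the master equation $\tfrac{1}{2}i_{\tilde Q}i_{\tilde Q}\tilde\omega + \tilde Q\tilde\cL = 0$, organized most transparently via the Hamiltonian-pair bracket of Proposition~\bref{bracket-for-pairs-def}. Order by order in $c$: the $c^0$-part is the original master equation on $E$; the $c^1$-part vanishes using $[Q,V_\alpha]\in\cK$ together with $QJ_\alpha = 0$; the $c^2$-part reads schematically
\begin{equation*}
\tfrac{1}{2}c^\alpha c^\beta\bigl(\{J_\alpha, J_\beta\}_\omega - U^\gamma_{\alpha\beta} J_\gamma + Q H_{\alpha\beta}\bigr)\in\cI_B,
\end{equation*}
where $\{\cdot,\cdot\}_\omega$ denotes the bracket induced by $\omega$ on currents. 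Choosing $H_{\alpha\beta}$ absorbs the $Q$-exact part, leaving a residual $Q$-cohomology class as the classical anomaly. The main obstacle is to show inductively that the ambiguities $J_\alpha\to J_\alpha + Q g_\alpha + \pi^* f_\alpha$ together with the freedom in the higher counterterms are enough to absorb the $Q$-exact piece of the obstruction at every order in $c$, so that the surviving remainder is concentrated in a single well-defined anomaly class. This is a BRST-style deformation argument requiring careful tracking of the ideals $\cI$ and $\cI_B$ and of the ambiguities at each stage; the unabsorbable obstruction is precisely the ``possibly anomalous'' qualification in the statement.
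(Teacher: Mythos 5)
Your setup (the background $B=\algg[1]\times T[1]X$ with $\gamma=\dx+\dg^0$, the differential $\tilde Q=Q+\dg$, the pulled-back $\omega$, and $\tilde\cL=\cL\pm c^\alpha J_\alpha$ to leading orders) matches the paper's. But the step you defer as ``the main obstacle'' is exactly where the paper's proof does its real work, and your proposed way around it would not go through. The paper does not run an open-ended BRST deformation in powers of $c$: it keeps $\tilde\cL$ linear in $c$, computes $\half i_{\tilde Q}i_{\tilde Q}\omega+\tilde Q\tilde\cL$ exactly, and reduces everything to the single identity
\begin{equation*}
V_\alpha J_\beta - f^\gamma{}_{\alpha\beta}J_\gamma = \pi^*\big(f_{\alpha\beta}(x,\theta)\big)\,,
\end{equation*}
proved by applying $L_{V_\beta}$ to the defining relation $i_{V_\alpha}\omega+dJ_\alpha\in\cI$ and using $L_{V_\beta}\omega\in d\cI$, verticality of the $V$'s, and $[V_\alpha,V_\beta]=f^\gamma{}_{\alpha\beta}V_\gamma$. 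This shows the total defect is $\half c^\alpha c^\beta\pi^*(f_{\alpha\beta})$ --- a function \emph{pulled back from $B$}, not merely a $Q$-closed function on $E$ --- and the series terminates at order $c^2$ with no higher counterterms needed. You have not established this localisation, and without it you cannot conclude that the obstruction is ``concentrated in a single well-defined anomaly class.''

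Your proposed absorption mechanism is also inconsistent with the first structure equation. A counterterm $\half c^\alpha c^\beta H_{\alpha\beta}$ with generic $H_{\alpha\beta}\in\cC^\infty(E)$ contributes $c^\alpha c^\beta\, dH_{\alpha\beta}$ to $d\tilde\cL$, which lies in $\cI_B$ only if $dH_{\alpha\beta}\in\cI_B$, i.e.\ only if $H_{\alpha\beta}$ is (locally) a pullback from $B$. So the only admissible freedom at order $c^2$ is adding functions on $B$, which is precisely why the obstruction must be --- and, by the identity above, is --- an element of $\cC^\infty(B)$. Consequently the anomaly is not a $Q$-cohomology class on $E$ but a $\gamma$-cocycle on $B$ (essentially a Chevalley--Eilenberg $2$-cocycle of $\algg$ with coefficients in $\dx$-closed functions on $T[1]X$); it is removable iff it is $\gamma$-exact, and otherwise one passes to a central extension of $\algg$. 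The paper also notes that, being of degree $n+1$ on $B$, this term vanishes when pulled back to any background solution, which is what guarantees the restricted system is still a genuine presymplectic gPDE --- a point your argument would also need and does not supply.
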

As before, we assume that the de Rham cohomology of $X$ vanishes.
\begin{proof}
Let $e_\alpha$ denote a basis in $\algg$ such that $V_\alpha$ represents $e_\alpha$ and $c^\alpha$, $gh(c^\alpha) = 1-\gh{V_\alpha}$ be the associated coordinates on $\mathfrak{g}[1]$. As a $Q$-structure on $\Tilde{E}$
we take:
\begin{equation}
    \Tilde{Q} = Q + \dg\,.
\end{equation}
Of course $\Tilde{E}$ is a bundle over $B = T[1]X \times \mathfrak{g}[1]$ and it is easy to see that  $\Tilde{Q}$ projects to $\gamma = \dx + \dg^0$ and $\gamma^2 = 0$ due to the Jacobi identity. The symplectic structure on $\Tilde{E}$ is taken to be the pullback of $\omega$ from $E$ and we keep denoting it by $\omega$.

Now we need to check that there exists $\tilde\cL$ such that $\tilde Q,\omega,\tilde \cL$ satisfy the defining relations. Restricting for simplicity to the case $\gh{V_\alpha}=0$ and taking $\tilde\cL =\cL-c^\alpha J_\alpha$, where $J_\alpha$ are conserved currents associated to $V_\alpha$ one finds:
\begin{equation} \label{gauging_hamilton_cond}
    i_{\Tilde{Q}}\omega + d(\cL) - d(c^\alpha J_\alpha) + d(c^\alpha)J_\alpha + \cI[dx,d\theta] = 0
\end{equation}
Because $d(c^\alpha)$ lies in $\cI_B$ we conclude that $i_{\Tilde{Q}}\omega + d\cL\in \cI_B$.  Of course, equation \eqref{gauging_hamilton_cond} defines $\Tilde{\cL}$ only up to  functions which are pulled back from $B$.

The remaining condition is the presymplectic master equation. We have
\begin{equation}
    \frac{1}{2}i_{\Tilde{Q}} i_{\Tilde{Q}} \omega = \frac{1}{2} i_{Q + cV} i_{Q+cV} \omega = \frac{1}{2}i_Q i_Q \omega + i_Q i_{cV}\omega + \frac{1}{2}i_{cV}i_{cV} \omega  \,.
\end{equation}
Furthermore, $i_{cV}i_Q\omega = -c^\alpha V_\alpha \cL$ and $\frac{1}{2}i_{cV}i_{cV}\omega = \frac{1}{2}c^\beta c^\alpha V_{\beta}(J_\alpha)$. 
Next we calculate:
\begin{equation}
    \Tilde{Q}\Tilde{\cL} = (Q + q)(\cL - c^\alpha J_\alpha) = Q\cL + c^\alpha V_\alpha \cL + c^\alpha Q J_\alpha + \frac{1}{2}f^\alpha{ }_{\beta\gamma}c^\beta c^\gamma J_\alpha - c^\beta c^\alpha V_\beta J_\alpha 
\end{equation}
It can be shown that (see Appendix~\ref{App:closure}) 
\begin{equation}
    V_\alpha J_\beta = f^\gamma{ }_{\alpha\beta}J_\gamma + \pi^*(f_{\alpha\beta}(x,\theta))
\end{equation}
for some $f_{\alpha\beta} \in C^\infty(T[1]X)$
So all in all:
\begin{equation}\label{backg-premapster-gen}
    \frac{1}{2}i_{\Tilde{Q}} i_{\Tilde{Q}} \omega + \Tilde{Q}\Tilde{\cL} + \frac{1}{2}c^\alpha c^\beta \pi^*(f_{\alpha\beta}) = 0\,.
\end{equation}
In other words, the presymplectic master equation is satisfied up to an extra term which is a function pulled-back from $B$. In  Appendix~\bref{App:closure} we show that it is $\gamma$-closed. If the extra term is $\gamma$-exact, then it can be eliminated by adding the respective term to $\Tilde{\cL}$. Otherwise it is a cohomology, related to the classical anomaly. In any case this term is equal to zero on the space of background solutions due to degree reasoning, so the system always defines a presymplectic gPDE  when pulled-back to a given background solution. What anomaly means is that some of the  background symmetries become anomalous after gauging. The anomaly can be eliminated by replacing $\algg$ with its suitable central extension.
\end{proof}

\subsubsection{Example: gauging scalar multiplet}\label{sec:scalar-mult}
Consider a multiplet of free scalar fields. In the formalism of presymplectic gPDEs, it can be described by $E\rightarrow T[1]X$ where $E$ is a trivial fibre bundle with fibre coordinates $\phi, \phi_a$ taking values in a particular module $W$ of a real Lie algebra $\mathfrak{g}$, equipped with an invariant positive-definite inner product. We work in the orthonormal basis $e_I$, $\inner{e_I}{e_J}=\delta_{IJ}$ and denote the respective component fields by $\phi^I,\phi^I_a$. The $Q$-structure is given by  
\begin{equation}
    Q \phi^I = \theta^a \phi^I_a\, \qquad Q\phi^I_a = 0
\end{equation}
and the presymplectic structure is 
\begin{equation}
    \omega = d\chi\,, \qquad \chi= \theta^{(3)}_a \inner{\phi^{a}}{d\phi}\,,
\end{equation}
The associated action is just the sum of $n$ copies of the scalar field action.

Let $\rho(t_\alpha)$ be a linear operator on $W$ representing the basis element $t_\alpha\in\algg$. The associated internal symmetry is given by the following vertical vector field
\begin{equation}
V_\alpha \phi=\rho(t_\alpha) \phi\,, \qquad V_\alpha \phi^a=\rho(t_\alpha) \phi^a\,,
\end{equation}
or in terms of components: $V_\alpha =  t_{\alpha, IJ}(\phi^I\ddl{}{\phi^J} + \phi^I_a\ddl{}{\phi^J_a})$. One has:
\begin{equation}
    i_{V_\alpha} \omega = - d\left (\theta^{(3)}_a \inner{\rho(t_\alpha)\phi^a}{\phi}\right) + \cI = -dJ_\alpha + \cI\,.
\end{equation}

Following the general prescription given in Theorem~\bref{thm:gauging} we take  $\Tilde{E} = E \times \algg[1]$ and 
\begin{equation}
    \Tilde{Q} = Q + c^\alpha\rho(t_\alpha) - \frac{1}{2}f^\alpha_{\beta \gamma}c^\beta c^\gamma \ddl{}{c^{\alpha}}\,, \qquad \tilde \cL=\cL - c^\alpha J_\alpha\,,
\end{equation}
where $c^\alpha$ are coordinates on $\algg[1]$ corresponding to basis $t_\alpha$. It is easy to check that 
the defining conditions are satisfied: $i_{\Tilde{Q}}\omega + d\tilde\cL\in \cI_B$ and $\frac{1}{2}i_{\Tilde{Q}}i_{\Tilde{Q}}\omega + \Tilde{Q}(H - c^\alpha J_\alpha) = 0$ so that we have a well-defined background system. Note that in this simple example the anomaly is not present.

Strictly speaking solutions of the background system are flat $\algg$-connections. However, this can be easily generalized by picking a suitable distribution $\cK_B$ on $T[1]X\times \algg[1]$. For instance to conider generic connections one simply takes $\cK_B$ generated by $\theta^a\theta^b\dl{c^\alpha}$. It is easy to check that it precisely removes the zero-curvature equation without affecting gauge tranformations. Moreover, this distribution doesn't affect $\phi,\phi^a$ sector and hence the result is again a consistent presymplectic gPDE over background. Restricting to a background solutions, i.e. a generic $\mathfrak{g}$ connections parameterized by $S^*(c^\alpha) = A^\alpha_\mu \theta^\mu$, the action takes the following form:
\begin{equation}\label{gauged-scalar}
    S = \int d^4x \,\,\inner{\phi^a}{\partial_a\phi -\half \phi_{a} - A^{\alpha}\rho(t_{\alpha})\phi_a}\,.
\end{equation}
This is of course just a first order form of the usual action of the scalar multiplet in the background of a Yang-Mills field.


\subsection{Homogeneous presymplectic gPDEs over background}

Although we refrain from discussing general space-time symmetries of presymplectic gPDEs and their gauging, there is a very special but still quite rich class of presymplectic gPDEs over background that can be seen as a result of gauging space-time symmetries. This is the Lagrangian counterpart of homogenous gauge PDEs over background described in Section~\bref{sec:homgPDE}.

Let the spacetime manifold be a homogeneous space $X = G/H$ and $\algg$ and $\algh$ denote the respective Lie algebras. Let  $(F,q)$ be an almost $Q$-manifold equipped with the linear map $\rho:\algg\to \mathrm{Vect}(F)$. If $t_\alpha$ is a basis in $\algg$ we define "fundamental" vector fields $V_\alpha$ on $F$ by $V_\alpha=\rho(t_\alpha)$. Let us stress that for the moment we do not require $\rho$ to be a homomorphism. In fact it will be required to satisfy the homomorphism condition up to a distribution.

Now consider the following almost-$Q$ bundle $\tilde F=F\times \algg[1] \to \algg[1]$ where $(\algg[1],\dg)$ is a $Q$-manifold associated to $\algg$ and the total almost $Q$-structure given by:
\begin{equation}
 \Tilde{q} = q + \dg+c^\alpha V_\alpha \,, \qquad \dg=-\frac{1}{2}f^\alpha_{\beta \gamma}c^\beta c^\gamma \ddl{}{c^\alpha}\,,
\end{equation}
where we introduced linear coordinates $c^\alpha$ on $\algg[1]$. In addition, suppose that $\tilde F$ is equipped with the presymplectic structure $\omega$, $\gh{\omega} = \dim X - 1$ such that 
\begin{equation}
\label{pgpde}
        d\omega = 0, \qquad i_Q\omega + d\cL \in \cI_B, \qquad \frac{1}{2}i_Q i_Q \omega + Q\cL = 0\,,
\end{equation}
where $\cI_B$ is an ideal generated by forms $\pi^*\alpha$. In other words $(\tilde F,\tilde q, \omega)$ satisfies all the axioms of a presymplectic gPDE, but with $(T[1]X,d_X)$ replaced with $(\algg[1], \dg)$. As in the case of gPDEs over $T[1]X$  it is easy to check that $Q^2 \in \cK_{\omega}$.  Note that conditions \eqref{pgpde} do not generally imply that $V_i$ form a representation of $\mathfrak{g}$ but only a weaker property $c^\alpha c^\beta ([V_\alpha, V_\beta] - f^\gamma{ }_{\alpha \beta} V_\gamma) \in \cK_\omega$.

Finally, $(E,Q) = (\tilde F,\tilde q) \times (T[1]X,\dx)$ is a presymplectic gPDE over background if one takes  $Q = \dx + \Tilde{q}$ and a presymplectic structure being the pullback of $\omega$ to the total space. It is natural to call such systems a homogeneous presymplectic gPDE over background because it can be thought of as a presymplectic version of homogeneous gPDEs considered in Section~\bref{sec:homgPDE}. 

A canonical flat Cartan connection on $G/H$ gives a solution to the background gPDE and the pullback of $E$ to this solution defines a presymplectic gPDE over $T[1]X$. It follows, the gauge transformations of the form $W(\epsilon)=\commut{Q}{\epsilon^\alpha(x)\dl{c^\alpha}}$, where $\epsilon=\epsilon^\alpha e_\alpha$
is covariantly constant $\algg$-valued function, define the global $\algg$-symmetries of the AKSZ-like action of the theory, see Section~\bref{sec:capturing}. Note that although the initial map $\rho:\algg \to \mathrm{Vect}(F)$
was not necessarily a representation, the resulting global symmetries $W(\alpha)$ necessarily form $\algg$:
\begin{equation}
\commut{W(\alpha)}{W(\beta)}=\commut{Q}{\gamma^\mu\dl{c^\mu}}\,, \qquad 
\gamma=\commut{\alpha^\mu\dl{c^\mu}}{\commut{Q}{\beta^\nu\dl{c^\nu}}}=\commut{\alpha}{\beta}_\algg^\mu\dl{c^\mu}\,.
\end{equation}
where $\commut{\alpha}{\beta}_\algg$ is the Lie bracket in $\algg$. The last equality is easily checked by observing that the only term in $Q$ quadratic in $c$ is $\dg$. Notice that the second equality defines  a bracket on the space of parameters and this bracket is a particular example of so-called derived brackets. Related structures were discussed in~\cite{Grigoriev:1998gn,Grigoriev:2000zg,Barnich:2001jy}.

\subsection{Example: conformally-coupled scalar} \label{sec:conf-scal}

To give an example of a homogeneous presymplectic gPDE let us start with a  gPDE description of a conformal scalar
field on Minkowski space. The respective gPDE is $E=F\times T[1]X$, where $F$ is a the linear space of solutions to 
\begin{equation}
    \ddl{}{y^a}\ddl{}{y_a}\phi = 0
\end{equation}
in the space of formal power series in commuting variables $y^a$. The differential is given by the usual total derivative whose action on the fibre coordinates encoded in $\phi(y)$ can be written as $D_a \phi(y)=\dl{y^a}\phi(y)$ so that:
\begin{equation}
\label{cscalar}
Q=\theta^a D_a=\dx+ \theta^a(\dl{y^a}\phi(y))\dl{\phi(y)}\,.
\end{equation}
This gives a minimal gPDE decription of the conformal scalar. Because there is no gauge invariance in this case it coincides with the description of this PDE as a bundle equipped with the Cartan distribution, see e.g.~\cite{vinogradov:2001,Krasil'shchik:2010ij}, and the unfolded description~\cite{Vasiliev:1980as,Lopatin:1987hz} of this system.

To obtain a Lagrangian description of the system one equips $E$ with the compatible presymplectic structure $\omega=d(\theta^{(3)}_a \phi^a d\phi)$, where $\phi,\phi_a$ are introduced by $\phi(y)=\phi+y^a\phi_a+\ldots$. It is easy to see that it makes $E$ into a presymplectic gPDE and the respective action is just \eqref{gauged-scalar} for just one scalar and $A=0$. Of course, only $\phi$ and $\phi_a$ coordinates enter the action and the presymplectic structure so that one can disregard all the higher coordinates on $F$ resulting in the minimal presymplectic gPDE formulation of the scalar discussed in Section \bref{sec:scalar-mult}. However, for the moment we keep all the coordinates in order to maintain the conformal symmetry in a manifest way.

Because we are dealing with a conformal-invariant system the fiber $F$ is a module over the conformal algebra $\algg$. More precisely the 
action of the generators $t_\alpha$ of the conformal algebra on $F$ can be defined using differential operators in $y$-space as follows: 
\begin{equation}
    \begin{gathered}
        P_a \phi(y) = \ddl{}{y^a}\phi(y)\, \qquad K_a \phi(y) = (2y_a(y^c\ddl{}{y^c} + 1) - y^c y_c \ddl{}{y^a})\phi(y) \,\\
         \\ J_{ab}\phi(y) = y_{[b}\ddl{}{y^{a]}}\phi(y) \,, \qquad D\phi(y) = -(y^c\ddl{}{y^c} + 1)\phi(y)\,.
    \end{gathered}
\end{equation}

Because $P_a$ coincides with total derivatives on $F$ it is easy to rewrite the system as a homogeneous gPDE over background. Namely, taking $E=F\times \algg[1]\times T[1]X$ and $Q=\dx+\dg$, where $\dg =\dg^0+\xi^aP_a + \half \rho^{ab} J_{ab}+ \half \kappa^a K_a + \lambda D$ is the CE differential with coefficients in functions on $F$.  Taking a flat Cartan connection as a solution to $(B,\gamma)=(\algg[1]\times T[1]X,\dx+\dg^0)$ and pulling back $E$ to this solution one recovers the initial gPDE. More precisely, taking the vacuum connection to be $\theta^aP_a$ one precisely gets \eqref{cscalar}.

A simple observation is that $E$ is equipped with a symplectic structure which reduces to the above $\omega$ when $E$ is pulled back to the background solution $\theta^aP_a$. Indeed, $\tilde \omega = d(\xi^{(3)}_a \phi^a d\phi)$ obviously does the job. Moreover, $\tilde\omega$ happen to be $Q=\dx+\dg$-invariant modulo the ideal generated by $d\xi,d\rho,d\kappa,d\lambda$ and one finds:
\begin{equation}
    i_Q\tilde\omega +d\tilde\cL + \cI_B = 0\,, \qquad \tilde\cL= (\xi^{(3)}_a \phi^a \lambda \phi +\frac{1}{2}\xi^{(3)}_a \kappa^a \phi^2 - \frac{1}{2}\xi^{(4)}\phi^k\phi_k ) \,.
\end{equation}
Straightforward but tedious computations show that $\half i_Qi_Q\tilde \omega+Q\tilde \cL=0$ and hence we are dealing with the homogeneous presymplectic gPDE over background. 

By inspecting $\tilde\cL,\tilde\omega$ one observes that among all $\phi_{\ldots}$ only coordinates $\phi,\phi_a$ are involved there. Moreover, if one sets to zero $Q\phi_{ab\ldots}$ for all the remaining coordinates on $F$, the axioms of the presymplectic gPDE over background remain intact. Of course this truncation can't affect the resulting action. More explicitly, the truncation of $Q$ acts as:
\begin{equation}
    \begin{gathered}
        Q\phi = \xi^a\phi_a - \lambda \phi\,, \\
        Q\phi_a = \phi_b \rho^b{ }_a - 2\lambda \phi_a + \kappa_a \phi\,.
    \end{gathered}
\end{equation}

What is less trivial, is that one can safely replace the background gPDE $(T[1]X\times \algg[1],\dx+\dg^0)$ with its weak version obtained by introducing the nontrivial distribution $\cK_B$ generated by:
\begin{equation}
W_{cd}^{ab}(x)\xi^c\xi^d\dl{\rho^{ab}}\,, \qquad C^a_{bc}(x)\xi^b\xi^c\dl{\kappa^{a}}
\end{equation}
where $W_{ab}^{cd}(x)$ and $C^a_{bc}(x)$ are arbitrary functions having the tensor structure of the Weyl and the Cotton tensor respectively. Let us parametrize the space of background sections by $S: T[1]X \rightarrow B$ by 
\begin{equation}
    \begin{gathered}
        S^*(\xi) = e^a{ }_{\mu}(x)\theta^\mu\,, \qquad S^*(\rho^a{ }_b) = \lorcon^a{ }_{b,\mu}(x)\theta^\mu\,, \\
        S^*(\kappa^a) = f^a{ }_{\mu}(x)\theta^\mu \,, \qquad
        S^*(\lambda) = \lambda_{\mu}(x)\theta^\mu\,, \\
    \end{gathered}
\end{equation}
and assume $e^a{ }_{\mu}$ to be invertible. Taking into account the distribution $\cK_B$, the condition that $S$ is a solution implies that all the components of the curvature of $S$ vanishes save for the components associated to Lorentz subalgebra and special conformal transformations which are allowed to be generic Weyl and Cotton tensors respectively. This means that generic solution is a normal Cartan connection describing the conformal geometry. Because the lift of $\cK_B$ belongs to the kernel of the symplectic structure the modified system is still a presymplectic gPDE over background.

Let us write the AKSZ like action of our system:
\begin{multline} \label{conf_scal_aksz_act}
    S = \int d^4x d^4\theta (\sigma^*(\chi)(\dx) + \sigma^*(\cL)) = 
    \\
    = \int_{T[1]X}  ( e^{(3)}_a\phi^a \dx\phi - e^{(4)} \half \phi^k\phi_k + 
     e^{3}_a \lambda^{a} \phi + \frac{1}{2} e^{(3)}_a f^a \phi^2\big)\,.
\end{multline}
For this action to be invariant under background gauge transformations we have to further restrict to only those sections which solve the background EOMs. 

Consider the equations in the sector of Lorentz connection, $\rho$:
\begin{equation} \label{rhoEq}
    \partial_{[\mu}\lorcon^a{ }_{b,\nu]} + \lorcon^a{ }_{k,[\mu}\lorcon^k{ }_{b,\nu]} - e^a{ }_{[\mu}f_{b,\nu]} + e_{b,[\mu}f^a{ }_{\nu]} + \frac{1}{2}W^a{ }_{b\mu\nu} = 0\,.
\end{equation}
Let us denote by $R^a{ }_{b\mu\nu}=\partial_{[\mu]}\lorcon^a{ }_{b,\nu]} - \lorcon^a{ }_{k,[\mu}\rho^k{ }_{b,\nu]}$  the Riemannian curvature expressed in terms of the Lorentz connection. Taking the trace of \eqref{rhoEq} with $e_a{ }^{\mu}$ one finds 
\begin{equation}
    R_{b\nu} = 2f_{b,\nu} + e_{b,\nu}f\,,
\end{equation}
so that $f = \frac{1}{6}R$. For $\lambda$ we have
\begin{equation}
    \partial_{[\mu}\lambda_{\nu]} = e^a{ }_{[\mu}f_{b,\nu]} = 0\,,
\end{equation}
so every $\lambda_\nu = \partial_\nu \alpha$ is a solution. However, as we will see shortly $\lambda$ is a pure gauge field.

Let us examine the background symmetries. They are defined by degree $-1$ vector fields 
\begin{equation}
\begin{gathered}
    Y_{W} = w(x)\frac{\partial}{\partial \lambda} \,, \qquad Y_{k} = k^a(x)\frac{\partial}{\partial \kappa^a} \,, \\ Y_{L} = \epsilon^{ab}(x)\frac{\partial}{\partial \rho^{ab}}\, \qquad Y_\epsilon = \epsilon^\mu(x) \frac{\partial}{\partial \theta^\mu}\,.
\end{gathered}
\end{equation}
They all respect the presymplectic structure and therefore they define some background gauge symmetries. Note that using the $Y_k$ symmetry one can always set to zero the component $\lambda_\mu (x)\equiv \sigma^*(\lambda)$ because we can shift it by $\delta \lambda_{\mu} = k^a(x) e_{a\mu}$. The remaining background symmetries are the local Lorentz rotations, the Weyl rescalings and diffeomorphisms, as it should be. Substituting $f, \lambda$ back into the action \eqref{conf_scal_aksz_act} we get the usual action of the conformally-coupled scalar.

\subsection{Higher-form symmetries in Maxwell theory}
It turns out that our formalism is also suitable for gauging higher-form symmetries. Leaving the development of the general formalism for a future work we now concentrate on the example of 1-form symmetry of Maxwell theory and its gauging. First, we briefly recall how this symmetry emerges. More detailed exposition can be found in~\cite{Hull:2024uwz}, see also~\cite{Gaiotto:2014kfa}.

Let us consider Maxwell theory, dynamical field is the $u(1)$-connection $A$ and its curvature is $F = dA$. The action functional is 
\begin{equation}
    S[A] = \int_X F \wedge * F\,.
\end{equation}
This action possesses a global symmetry $\delta A = \lambda$ where $\lambda$, $d\lambda = 0$ is a closed 1-form defined on the spacetime . If the de Rham cohomology of the spacetime is trivial, then $\lambda=d\nu$ and hence this symmetry is trivial because it coincides with  a gauge transformation. If there is a non-trivial cohomology in degree $1$, then picking $\lambda$ to be a representative of a nontrivial cohomology class produces a genuine global symmetry. 

Gauging this symmetry amounts to promoting the ``gauge parameter'' $\lambda$ to a general 1-form $\beta$ and introducing a 2-form background field $B$. The gauge transformations read:
\begin{equation}
\label{AB-gauge}
    \delta_\beta A = \beta\,, \qquad \delta_\beta B = d\beta\,.
\end{equation}
It is now easy to see that global symmetry with closed $\lambda$ arises as a gauge transformation which preserves a given $B$. 

To introduce the coupling of $A$ to $B$ at Lagrangian level one notes that $dA - B$ is a gauge invariant combination so that the gauge-invariant action can be taken as: 
\begin{equation} \label{gauged_elec}
    S[A,B] = \int_X (dA-B)\wedge * (dA-B)\,.
\end{equation}

Now, let us demonstrate how \eqref{gauged_elec} can be systematically obtained  in the presymplectic gPDE framework by gauging the degree $-1$ symmetry of the Maxwell theory. The presymplectic gPDE formulation of the system is given by $(E,Q,\omega,T[1]X)$ where $X$ is the spacetime with coordinates $x^a$ and the fibre coordinates are:
\begin{equation}
\begin{gathered}
    C,\quad \gh{C} = 1, \qquad F_{ab}, \quad \gh{F_{ab}} = 0\,.
    \end{gathered}
\end{equation}
The $Q$-structure is given by
\begin{equation}
    QC = \half F_{ab}\theta^a \theta^b\,, \qquad QF_{ab} = 0\,, \qquad Qx^a=\theta^a\,.
\end{equation}
The presymplectic structure reads as
\begin{equation}
    \omega = \half dC d(F_{ab}\theta_c \theta_d) \epsilon^{abcd}=d(\theta^{(2)}_{ab} F^{ab})dC\,.
\end{equation}
and the AKSZ-like action is: 
\begin{equation}
    S[A,F] = \int_X F^{ab}(\partial_a A_b - \partial_b A_a) - \half F^{ab}F_{ab}\,.
\end{equation}
The generalization to Yang-Mills case and the explicit form of the resulting BV formulation can be found in~\cite{Grigoriev:2022zlq}.

It is well known that Maxwell theory possesses a global reducibility identity whose associated surface charge is the familiar electric charge, see e.g.~\cite{Barnich:2001jy}. In our approach this is captured by a degree $-1$ symmetry represented by a vector filed $V = \ddl{}{C}$. It is easy to check that
\begin{equation}
    [V,Q] = 0\,, \qquad 
    i_V \omega = \half d(F_{ab}\theta_c \theta_d)  \epsilon^{abcd} = d J_e\,.
\end{equation}
so that $V$ is indeed an internal symmetry and its associated current $J_e$ has degree $2$. The respective conserved charge is given by
\begin{equation}
    Q_e = \int_{T[1]\Sigma_2} \sigma^*(J_e) 
\end{equation}
and is of course proportional to the usual electric charge. Given a de Rham cohomology class $\alpha \in \cC^\infty(T[1]X)$ in degree $1$, one can generate a usual (degree $0$) symmetry $\alpha V$, $\gh{\alpha V} = 0$ and $[\alpha V, Q] = 0$. It follows
\begin{equation}
    i_{\alpha V} \omega = \half d(\alpha F_{ab}\theta_c \theta_d)\epsilon^{abcd} + \cI
\end{equation}
which reproduces the Noether charge of the corresponding global symmetry. 

Let us now apply the gauging procedure explained in Section~\bref{sec:internal-symmetries} to the degree $-1$ symmetry $V$. This is done by extending $T[1]X$ to the background gPDE $(B,\gamma,T[1]X)$ whose fibre coordinate is $b, \gh{b} = 2$, $\gamma b = 0$. To make $B$ dynamical we also equip $B$ with the distribution $\cK_B$ generated by ${\theta^{(3)}_a}\ddl{}{b}$. The total $Q$ structure on $E$ is then given by $\Tilde{Q} = Q + bV$. The presymplectic structure $\omega$ remains invariant, giving 
\begin{equation}
    2i_{\Tilde{Q}}\omega = d( \theta^4 F^{ab}F_{ab}) +  d( bF_{ab}\theta_c\theta_d\epsilon^{abcd}) + \cI\,,
\end{equation}
and the equation of motion of the background system do not constrain $\sigma^*(b)$  thanks to $\cK_B$. We introduce the following notation for the coordinates on the space of sections by $\sigma^*C = A_a(x)\theta^a$, $\sigma^*(b) = B_{ab}(x)\theta^a \theta^b $. The action of the background gauge transformations are generated by $Y = \alpha\ddl{}{b}$ where $\alpha$ is an arbitrary one form. Their action on fields reads
\begin{equation}
\begin{gathered}
    \delta_Y A_a \theta^a = \delta_Y\sigma^*(C) = \sigma^*[Q,Y] C = \sigma^*\alpha \,,\\
    \delta_Y B_{ab}\theta^a \theta^b = \delta_Y \sigma^*b = \sigma^*[Q,Y] b = \sigma^* \dx\alpha\,,
\end{gathered}
\end{equation}
reproducing~\eqref{AB-gauge}. The AKSZ like action of the system takes the form:
\begin{equation}
    S = \int_X F^{ab}(\partial_a A_b - \partial_b A_a    -  B_{ab})  - \half F^{ab}F_{ab}\,,
\end{equation}
which is just the first order formulation of \eqref{gauged_elec}.

\section*{Acknowledments}
We are grateful to T. Basil, N. Boulanger, A. Cattaneo, B. Kruglikov, A. Kotov, A. Mamekin, M. Markov, P. Bielavsky, R. Szabo, A. Verbovetsky, and M.~Zabzine for useful exchanges. We also wish to thanks V. Gritzaenko for the collaboration at the early stage of this project.
The authors wish to thank the hospitality of the Mittag-Lefler Institute during the workshop "Cohomological Aspects of Quantum Field Theory".

\appendix

\section{Transformation of the action} \label{AppA}

First, let us define vector fields and inner products along maps (sections).

\begin{definition}
    Let $\phi: M_1 \rightarrow M_2$ be a smooth map. A vector field $V_\phi$ along $\phi$ is a derivation of functions on $M_2$ with values in $M_1$, s.t.

    i) $V_\phi (fg) =V_\phi (f)\phi^*(g) + (-1)^{|V||f|}\phi^*(f) V_\phi(g)$
\end{definition}

\begin{definition}
    Given $V_\phi$ along $\phi$ we define $i_{V_\phi}$ along $\phi$ to be a map 
    \begin{equation*}
        i_{V_\phi}: \bigwedge^{k}(M_2) \rightarrow \bigwedge^{k-1}(M_1) 
    \end{equation*}
    obeying

    i) $i_{V_\phi}f = 0, \quad \forall f \in C^\infty(M_2)$ 
    
    ii) $i_{V_\phi}d(f) = V_{\phi}f, \quad \forall f \in C^\infty(M_2)$ 

    iii) $i_{V_\phi} \alpha \wedge \beta = (i_{V_\phi} \alpha) \wedge \phi^*\beta + (-1)^{|V+1||\alpha|}\phi^*\alpha \wedge i_{V_\phi} \beta $
\end{definition}
We also define 
\begin{definition}
    Given $V_\phi$ along $\phi$ with $\gh{V_\phi} = 1 \quad \text{mod}(2)$ an operation $i^2_{V_\phi}$ 
    \begin{equation*}
        i^2_{V_\phi}: \bigwedge^{k}(M_2) \rightarrow \bigwedge^{k-2}(M_1) 
    \end{equation*}
is obeying: $\forall f \in C^\infty(M_2), \quad \forall \alpha, \beta \in \bigwedge (M_2)$ 

i) $i^2_{V_\phi} f = 0$

ii) $i^2_{V_\phi} df = 0$

iii)\begin{equation*}
    \begin{gathered}
        i^2_{V_\phi} (\alpha \wedge \beta) = i^2_{V_\phi} (\alpha) \wedge \phi^*(\beta)  + 2 (i_{V_\phi}) (\alpha) \wedge i_{V_\phi}(\beta)  +  \phi^*(\alpha) \wedge i^2_{V_\phi}(\beta)
    \end{gathered}
    \end{equation*} 
\end{definition}

Let us calculate the transformation of the action under the background gauge transformation with $Y$  $\pi_{T[1]X}$-vertical. 
\begin{equation}
    \delta S = \int \sigma^* L_{[Q,Y]}\chi (\dx) + \sigma^*L_{[Q,Y]}\cL
\end{equation}
Let us define the following vector field along $\sigma$
\begin{equation}
    E_\sigma = \dx\sigma^* - \sigma^*Q
\end{equation}
To simplify the formulas we will omit $\sigma^*$ in what follows. We have 
\begin{equation}
\begin{gathered}
    L_{[Q,Y]}\cL = L_Y Q\cL + L_QL_Y\cL = L_Y Q\cL + i_Q d i_Y d\cL = L_Y Q\cL - i_Q L_Y  d\cL = \\ = L_Y Q\cL + i_{E_\sigma} L_Y  d\cL - L_Y  d\cL(\dx)
\end{gathered}
\end{equation}
Since $L_Y  d\cL$ is $d$ exact the last term is a full derivative.

\begin{equation}
    \cL_{[Q,Y]} \chi (\dx) = i_{[Q,Y]} \omega (\dx) + di_{[Q,Y]} \chi (\dx)
\end{equation}
The second term is a total derivative. The first term is rewritten as 
\begin{equation}
    i_{[Q,Y]} \omega(\dx) = L_Y i_Q\omega (\dx) - i_Q L_Y \omega (\dx)
\end{equation}
We have 
\begin{equation}
    L_Y i_Q\omega (\dx) = i_{E_\sigma} L_Y i_Q\omega + i_Q L_Y i_Q\omega = i_{E_\sigma} L_Y i_Q\omega - i_{[Q,Y]}i_Q\omega + L_Yi_Q i_Q \omega
\end{equation}
and
\begin{equation}
\begin{gathered}
    - i_Q L_Y \omega (\dx) = \frac{1}{4}i^2_{E_\sigma} L_Y \omega - \frac{1}{2}L_Y\omega (\dx,\dx) - \frac{1}{2}i_Q i_Q L_Y\omega = \\ = \frac{1}{4}i^2_{E_\sigma} L_Y \omega  + \frac{1}{4}i_Q i_{[Q,Y]} \omega - \frac{1}{2}i_Q L_Yi_Q \omega = \\ = \frac{1}{4}i^2_{E_\sigma} L_Y \omega +  i_Q i_{[Q,Y]} \omega - \frac{1}{2}L_Yi_Q i_Q \omega 
\end{gathered}
\end{equation}
where we have omitted $\frac{1}{2}L_Y\omega (\dx,\dx)$ since it's a full derivative. So all in all:
\begin{equation}
    i_{[Q,Y]} \omega (\dx) = i_{E_\sigma} L_Y i_Q\omega + \frac{1}{2}L_Yi_Q i_Q \omega + \frac{1}{4}i^2_{E_\sigma} L_Y \omega
\end{equation}

The overall variation of the action is 
\begin{equation}
    \delta S = \int L_Y(\frac{1}{2}i_Qi_Q\omega + Q\cL) + i_{E_\sigma}L_Y(i_Q\omega + d\cL) + \frac{1}{4}i^2_{E_\sigma}L_Y\omega
\end{equation}

Suppose that $Y$ is projectable, then $\delta \sigma^* = \sigma^*[Q,Y] - [\dx,y]\sigma^*$. The first term gives all the same while the second gives:
\begin{equation} \label{projY}
    \delta S = -\int L_{[\dx,y]}\sigma^*\chi(\dx) + L_{[\dx,y]}\sigma^*(\cL)
\end{equation}
The first term is 
\begin{equation}
\begin{gathered}
    i_{\dx} L_{[\dx,y]}\sigma^*\chi = - i_{[\dx,[\dx,y]]}\sigma^*\chi  - L_{[\dx,y]}i_{\dx}\sigma^*\chi = \\ = - i_{[\dx,[\dx,y]]}\sigma^*\chi - L_{\dx}L_y i_{\dx}\sigma^*\chi -L_y L_{\dx}i_{\dx}\sigma^*\chi
\end{gathered}
\end{equation}
The first term is trivially zero. For the last term we see that since $\chi$ is of ghost degree $n-1$ then $L_{\dx}i_{\dx}\sigma^*\chi$ is of degree $n+1$ and is therefore zero and we conclude that this part is always a total derivative.

For the second term in \eqref{projY} we have 
\begin{equation}
    L_{[\dx,y]}\sigma^*(\cL) = L_{\dx}L_y\sigma^*\cL + L_y L_{\dx}\sigma^*\cL
\end{equation}
The second term is zero by the same degree resonings as we just had. Therefore making $Y$ projectable does not bring any new constraints on $Y$ to generate the symmetry of the action, the only condition we need is $L_Y\omega \in \cI$.


\section{Independence on the choice of connection}\label{App:Linearization}
Let the presymplectic structure of the system be defined by the presymplectic potential $\chi$. In local coordinates one writes 
\begin{equation}
\begin{gathered}
    \chi = d(\psi^A)\chi_A \\
    \omega = d\chi = \half d(\psi^A)d(\psi^B)((-1)^{|A|+1}\partial_B \chi_A + (-1)^{(|B|+1)|A|}\partial_A \chi_B)
\end{gathered}
\end{equation}
Taking any connection $\Gamma$ the presymplectic potential $\chi_{\Gamma}$ constructed using $V_\Gamma = \phi^A \ddl{}{\psi^A} - \phi^N \phi^M \Gamma^A{}_{NM}\ddl{}{\phi^A}$ is 
\begin{equation}
\begin{gathered}
    \chi_\Gamma = L_V L_V \Pi^* \chi = 2 d(\phi^A) \phi^B \partial_B\chi_A  + d(\psi^A) \phi^B \phi^C \partial_C \partial_B \chi_A  - \\ - d(\psi^A) \phi^N \phi^M \Gamma^B{}_{NM} \partial_B \chi_A - d(\phi^N \phi^M \Gamma^A{}_{NM}) \chi_A 
\end{gathered}
\end{equation}
Using the fact that $\chi$ is defined modulo $d$-exact terms we rewrite the last two terms as:
\begin{equation}
    \begin{gathered}
         -(-1)^{(|A|+1)|B|}\phi^N \phi^M \Gamma^B{}_{NM} d(\psi^A) \partial_B \chi_A -  (-1)^{|A+1|}\phi^N \phi^M \Gamma^A{}_{NM} d(\psi^B)\partial_B\chi_A = \\ = -2\phi^N \phi^M \Gamma^A{}_{NM} d(\psi^B) \omega_{AB}
    \end{gathered}
\end{equation}
We can see that $\chi$ is linear on $\Gamma$, moreover choosing another connection $\Tilde{\Gamma}$, $\Tilde{\Gamma}^A{}_{BC} - \Gamma^A{}_{BC} = R^A{ }_{BC}$ we have 
\begin{equation}
    \chi_{\Tilde{\Gamma}} - \chi_{\Gamma} = -2\phi^N \phi^M R^A{}_{NM} d(\psi^B) \omega_{AB} \in \cI_{\cK}
\end{equation}

It is clear that this term is in $\cI_{\cK}$ because it is proportional to $d\psi^A$ and is annihilated by the kernel of the presymplectic structure.

Therefore, we have proven that the equivalence class of $\chi_{\Gamma}$ is independent of the choice of the connection.

\section{Properties of the closure term}\label{App:closure}
Here we give the details of the calculations, which prove the following equalities:
\begin{enumerate}
    \item $V_i J_j - f^k{ }_{ij}J_k = \pi^*(f_{ij}(x,\theta))$
    \item $Q(V_iJ_j - f^k{ }_{ij}J_k) \sim  QJ + [Q,V_i]J \sim QJ + KJ = 0$\\ and therefore $\dx\pi^*(f_{ij}(x,\theta)) = 0$
    \item $\Tilde{Q}c^ic^j(V_iJ_j - f^k{ }_{ij}J_k) = 0$
\end{enumerate}

1. We have
\begin{equation}
    L_{V_j}i_{V_i} \omega  + L_{V_j}dJ_i + L_{V_j}\beta_i = 0\,.
\end{equation}
Let us rewrite is as 
\begin{equation}
\begin{gathered}
    i_{[V_j,V_i]}\omega + i_{V_i} L_{V_j} \omega + d(L_{V_j}J_i) + L_{V_j}\beta_i = 0 =\\ =-f^k{ }_{ji}(d(J_k) + \beta_k) - i_{V_i}d\beta^j + d(L_{V_j}J_i) + L_{V_j}\beta_i\,,
\end{gathered}
\end{equation}
where we have used $L_{V_j}\omega = -d\beta_j$. It follows 
\begin{equation} \label{dVJ-f}
    d(V_j J_i - f^k{ }_{ji}J_k) =L_{V_i}\beta_j - L_{V_j}\beta_i  + f^k{ }_{ji}\beta_k
\end{equation}
so that  $d(f^k{ }_{ij}J_k - V_i J_j) \in \cI[dx, d\theta]$ which in its turn implies the desired relation.

2. We need to show that distribution $K$ annihilates $J$ which is easily done by applying $i_K$ on its defining equation:
\begin{equation}
    i_K i_{V_i}\omega + KJ_i + i_K\cI = 0
\end{equation}
First term is zero because $V_i$ is vertical, last term is zero because $K$ is vertical.

We also have to show that $QJ_i$ vanishes. Although it is not true for just every $J_i$ we can choose the right representative in the equivalence class. First we note that
\begin{equation}
\begin{gathered}
    L_Qi_{V_i}\omega + L_Q d(J_i) + L_Q\beta_i  = 0\,, \\
    i_{[Q,V_i]} \omega - i_{V_i} L_Q \omega - d(QJ_i) + L_Q \beta_i = 0\,.
\end{gathered}
\end{equation}
Since $L_Q$ preserves the ideal $L_Q \beta_i \in \cI$, since $[Q,V_i] \in K$, $i_{[Q,V_i]}\omega \in \cI$, finally $i_{V_i}L_Q\omega \in \cI$ since $L_Q \omega \in \cI$ and $V_i$ vertical. Therefore we get $d(QJ_i) \in \cI$ from which it follows $QJ_i = \pi^*(\alpha)$. $J_i$ is defined by $\iota_{V_i}\omega+d J_i\in\cI$  modulo functions of the form $\pi^*f$, $f\in \cC^\infty(T[1]X)$ and therefore $J_i$ can be adjusted in such a way that $QJ_i = 0$. ($\alpha$ is of degree $n$ therefore it can be represented as a $\dx$ exact function at least locally.)

3. We are left to show that
\begin{equation} \label{dceVJ}
    ( - \frac{1}{2}f^l{ }_{nm} c^n c^m \ddl{}{c^l})c^ic^j(V_iJ_j - f^k{ }_{ij}J_k) = 0\,.
\end{equation}
Let us examine the equation \eqref{dVJ-f} for this. Applying $i_{V_k}$ to it we get (due to $i_V \beta_i= 0$)
\begin{equation}
    V_k V_j J_i - f^n{ }_{ji}V_kJ_n = 0\,.
\end{equation}
Using $V_iJ_j = -V_jJ_i$ which follows from $i_{V_i}i_{V_j}\omega = - i_{V_j}i_{V_i}\omega$ we rewrite it as 
\begin{equation}
\begin{gathered}
    V_k V_j J_i + f^n{ }_{ji}V_n J_k =  V_k V_j J_i + V_j V_i J_k - V_i V_j J_k = \\ = V_k V_j J_i + V_j V_i J_k + V_i V_k J_j = 0\,.
\end{gathered}
\end{equation}
The expression \eqref{dceVJ} is proportional to $c^n c^m c^k V_n V_m J_k $ therefore it is zero.

\setlength{\itemsep}{0em}
\small
\providecommand{\href}[2]{#2}\begingroup\raggedright\endgroup

\end{document}